\DeclareMathOperator*{\argmin}{arg\,min}
\newcommand{\diag}{{\mathrm{diag}}}
\newtheorem{definition}{Definition}
\newtheorem{theorem}{Theorem}
\newtheorem{corollary}[theorem]{Corollary}
\newtheorem{lemma}[theorem]{Lemma}
\newtheorem{proposition}[theorem]{Proposition}
\newtheorem{remark}{Remark}
\newcommand{\supp}{\mathrm{supp}}
\newcommand{\cov}{\mathrm{Cov}}
\newcommand{\Var}{\mathrm{Var}}
\tikzset{
    m/.style = {anchor=base, text width=1.6em, align=center, inner sep=0pt},
    arr/.style = {->,>=Stealth,shorten >=1pt,shorten <=1pt, line width = 0.5pt},
    box/.style = {draw, line width = 0.5pt, minimum width=1.4cm, minimum height=0.8cm, align=center},
    lbox/.style = {draw, line width = 0.5pt, minimum width=1.8cm, minimum height=0.8cm, align=center},
    xs/.style = {circle,fill=black!80,inner sep=2.5pt},
    xn/.style = {diamond, draw = black!80, line width = 1pt, inner sep = 2pt, rounded corners = 1pt},
    legend/.style = {draw, rounded corners=2pt, fill=white, inner sep=3pt, align=left, font=\small, line width=.6pt}
}
\begin{document}

\title{Fundamental Limit of Discrete Distribution Estimation under Utility-Optimized Local Differential Privacy

\author{
    Sun-Moon~Yoon,~\IEEEmembership{Graduate Student Member,~IEEE,}
    Hyun-Young~Park,~\IEEEmembership{Graduate Student Member,~IEEE,}
    Seung-Hyun~Nam,~\IEEEmembership{Member,~IEEE,}
    and Si-Hyeon~Lee,~\IEEEmembership{Senior Member,~IEEE}
}
\thanks{

}
\thanks{S.-M. Yoon, H.-Y. Park, and S.-H. Lee are with the School of Electrical Engineering, Korea Advanced Institute of Science and Technology (KAIST), Daejeon, South Korea (e-mail: ysm4078@kaist.ac.kr, phy811@kaist.ac.kr, \mbox{sihyeon@kaist.ac.kr}). S.-H. Nam is with the Information \& Electronics Research Institute, KAIST, Daejeon, South Korea (e-mail: shnam@kaist.ac.kr).  S.-M. Yoon and H.-Y. Park contributed equally to this work. (Corresponding author: Si-Hyeon Lee)
}
\thanks{The appendix of this paper is provided in the supplementary material.}
}

\maketitle
\begin{abstract}
We study the problem of discrete distribution estimation under utility-optimized local differential privacy (ULDP), which enforces local differential privacy (LDP) on sensitive data while allowing more accurate inference on non-sensitive data. In this setting, we completely characterize the fundamental privacy–utility trade-off. The converse proof builds on several key ideas, including a generalized uniform asymptotic Cramér–Rao lower bound, a reduction showing that it suffices to consider a newly defined class of extremal ULDP mechanisms, and a novel distribution decomposition technique tailored to ULDP constraints. For the achievability, we propose a class of utility-optimized block design (uBD) schemes, obtained as nontrivial modifications of the block design mechanism known to be optimal under standard LDP constraints, while incorporating the distribution decomposition idea used in the converse proof and a score-based linear estimator. These results provide a tight characterization of the estimation accuracy achievable under ULDP and reveal new insights into the structure of optimal mechanisms for privacy-preserving statistical inference.
\end{abstract}

\begin{IEEEkeywords}
Local differential privacy, utility-optimized local differential privacy, distribution estimation, privacy-utility trade-off, block design
\end{IEEEkeywords}

\section{Introduction}
Local differential privacy (LDP) \cite{kasiviswanathan2011can, duchi2013local, dwork2014algorithmic} has already been deployed  at scale in industry \cite{ding2017collecting, apple2017privacy_scale, apple2025aggregate_trends, erlingsson2014rappor, bittau2017prochlo}.
Under LDP, each individual perturbs their data locally before transmission so that the reported value reveals provably little about the true record, thereby removing the need for a trusted curator \cite{duchi2013local, kasiviswanathan2011can}.
Formally, LDP bounds how much the distribution of a user’s reported value can differ as their true data vary, thereby embedding privacy directly at the source. 
This user-level privacy guarantee—quantified through information-theoretic bounds \cite[Thm. 14]{issa20_operational}—has propelled LDP into a wide spectrum of statistical inference and machine-learning tasks involving sensitive data \cite{warner1965randomized, erlingsson2014rappor, ye2018optimal, acharya2019hadamard, feldman2022private, wang2017locally}.

However, LDP offers blanket protection: for any pair of input values, the distribution of reported data must remain nearly indistinguishable \cite{issa20_operational}.
Although such protection is reassuring, it can be inefficient in scenarios where only certain inputs are truly sensitive. 
Consider the survey, ``Have you violated a policy?'', with $X\in \{\text{``Yes''},\text{``No''} \}$ as the truthful answer, adapted from the illustrative example in \cite{murakami2019utility}.
Under Warner’s randomized response (RR) \cite{warner1965randomized} shown in Fig.~\ref{fig:rr} (a), respondents flip a biased coin and potentially report the opposite answer.
Consequently, the reported value $Y$ conveys only limited information about $X$, restricting the analyst’s ability to infer the true outcome.
Notably, however, only the affirmative response (``Yes") is genuinely sensitive, while the negative response (``No") is not.
\begin{figure}[t!]
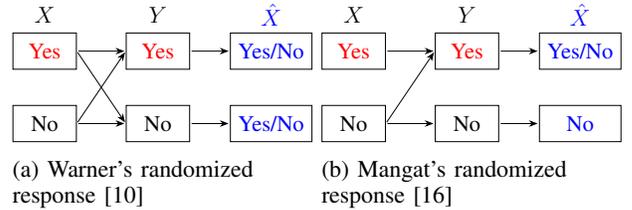

    \centering
    \begin{subfigure}{.45\linewidth}
        \makebox[\linewidth][c]{\includestandalone[width = \linewidth]{figs/WarnerRR}}
        \caption{Warner's randomized\\ response \cite{warner1965randomized}}
    \end{subfigure}
    \begin{subfigure}{.45\linewidth}
        \makebox[\linewidth][c]{\includestandalone[width = \linewidth]{figs/MangatRR}}
        \caption{Mangat's randomized\\ response \cite{mangat1990alternative}}
    \end{subfigure}
    \caption{Comparison between two binary randomized response models, where $\hat{X}$ denotes a guess about $X$ based on $Y$.}
    \label{fig:rr}
\end{figure}

Mangat’s randomized response \cite{mangat1990alternative} in Fig.~\ref{fig:rr} (b), exploits this asymmetry: those who answered ``Yes'' always report truthfully, while those who answered ``No'' randomize their answer.
This ensures privacy when $Y = \text{``Yes''}$, while enabling reliable inference when $Y = \text{``No''}$.
Utility-optimized local differential privacy (ULDP) \cite{murakami2019utility} formalizes this idea.
A ULDP mechanism partitions outputs into protected symbols—subject to the full LDP guarantee—and invertible symbols that faithfully reveal non-sensitive inputs.
Thus, ULDP preserves LDP-level privacy exactly where needed, while permitting enhanced inference elsewhere.
This approach has motivated extensions such as high–low LDP \cite{acharya2020context}, $(\epsilon,\, \delta)$-ULDP \cite{zhang2023frequency,zhang2024mean}, input-discriminative LDP \cite{gu2020providing, murakami2023automatic}, and utility-optimized key-value data LDP \cite{wang2025ukvldp}. 
Furthermore, ULDP has already seen adoption in domains including clinical analytics \cite{bonomi2023enabling}, federated learning \cite{islam2022intelligent}, and natural language processing \cite{yue2021differential}.

For the representative statistical inference task of discrete distribution estimation, which is the focus of this paper, several ULDP schemes have been proposed \cite{murakami2019utility, liu2024multidimensional, acharya2020context, he2025addressing}. 
As will be elaborated in Section~\ref{sec:preliminaries}, prior ULDP schemes have been analyzed only with respect to order-optimality in specific regimes \cite{murakami2019utility, acharya2020context, he2025addressing}, while the exact leading constant that characterizes the fundamental limit of the PUT under ULDP has not been identified. 
This unresolved gap motivates our work, which establishes the exact attainable performance.

\subsection{Our Contribution}
Our main contribution is to establish the exact optimal PUT—including the leading constant—for discrete distribution estimation under ULDP. 
We characterize the optimal PUT in terms of a finite-dimensional concave–convex saddle-point optimization problem that can be solved using convex optimization solvers and give a closed-form solution to this optimization problem for some regime. 

For converse, we develop two general \emph{analytical frameworks} which extend previous works: (i) a \emph{uniform asymptotic Cram\'er-Rao lower bound}, a versatile concept applicable beyond ULDP to general private estimation problems, and (ii) a reduction showing that optimal ULDP mechanisms can be restricted to a newly defined class of \emph{extremal ULDP mechanisms}, analogous to extremal LDP mechanisms \cite{kairouz2016extremal}, thereby confining the minimax analysis to this smaller class.
After that, we complete the converse proof by exploiting a novel way to \emph{decompose the data distribution} into several parts based on the structure of the ULDP constraint.

For achievability, we propose a class of \textbf{utility-optimized block design (uBD) schemes}. The development of uBD schemes builds on several key ideas: (i) nontrivially modifying \emph{block design schemes} \cite{park2023block} for discrete distribution estimation under LDP, (ii) exploiting the idea of \emph{score-based linear estimator} in \cite{namOptimalPrivateDiscrete2024}, and 
(iii) \emph{decomposing the data distribution} as in the converse proof.
Notably, the class of uBD schemes subsumes one of the existing schemes, uRR \cite{murakami2019utility}, and our closed-form characterization implies that uRR is optimal in some regime.
To the best of our knowledge, this is the first work to prove the optimality of uRR in a certain regime. Furthermore, we show that there exists a regime in which previously proposed ULDP schemes cannot achieve the optimal PUT, thereby establishing the strict superiority of our uBD scheme.

\subsection{Paper Outline}
The rest of this paper is organized as follows. 
Section~\ref{sec:prob_form} formalizes the discrete distribution estimation problem under the ULDP constraint.
Section~\ref{sec:preliminaries} provides the necessary background and a review of the related works.
Our main result—the characterization of the optimal PUT—is stated in Section~\ref{sec:main}.
To establish the optimal PUT, Section~\ref{sec:converse} presents the converse proof, and Section~\ref{sec:achievabilityuBD} establishes achievability through the proposed uBD scheme.
Section~\ref{sec:experiment} provides the numerical evaluations of the proposed scheme and baselines on a real-world dataset, demonstrating that our scheme outperforms existing alternatives.
Finally, Section~\ref{sec:concl} concludes the paper with some discussion on future works. 

\subsection{Notations and Terminologies}
For natural numbers $a \leq b$, we denote $[a:b]:=[a,b]\cap \mathbb{N}$ and $[a]:= [1:a]$.
For a finite set $\mathcal{X}$, $\mathcal{P}(\mathcal{X})$ denotes the set of all probability mass functions on $\mathcal{X}$. 
For $x \in \mathcal{X}$ and $P \in \mathcal{P}(\mathcal{X})$, we write either $P(x)$ or $P_x$ for $P(\{x\})$.
When $\mathcal{X}=[w]$ for $w \in \mathbb{N}_{\geq 2}$, we identify $\mathcal{P}(\mathcal{X})$ with the probability simplex,
\begin{equation}
    \!\! \Delta_{w} := \left\{P \in \mathbb{R}^{w} : \sum\limits_{x=1}^w P_x = 1 \text{ and } \forall x \in [w], P_x \ge 0  \right\}.
\end{equation}
The relative interior of $\Delta_{w}$ is
\begin{align}
    \Delta_w^\circ := \{P \in \Delta_w : P_x > 0, \quad \forall x \in [w]\}, \label{eq:relIntDef}
\end{align}
and its direction space is
\begin{align}
    \vec{\Delta}_w := \left\{h \in \mathbb{R}^{w} : \sum\limits_{x=1}^w h_x = 0 \right\}. \label{eq:directSpaceDef}
\end{align}
For $x \in [w]$, we write $\delta^{(x;w)} \in \Delta_w$ for the point mass at $x$, i.e., $\delta^{(x;w)}_{x'} = \mathbbm{1}(x=x')$.

For a conditional distribution $Q:\mathcal{X} \rightarrow \mathcal{P}(\mathcal{Y})$ between finite sets $\mathcal{X},\mathcal{Y}$, we denote $\supp(Q):=\{y \in \mathcal{Y}:Q(y|x)>0 \text{ for some }x \in \mathcal{X}\}$.
For simplicity, we often regard $Q$ as the row stochastic matrix $(Q_{x,y})_{(x,y) \in \mathcal{X}\times \mathcal{Y}}$ with entries $Q_{x,y} = Q(y|x)$.
Given an input distribution, $P\in\mathcal{P}(\mathcal{X})$, we write $Q_P \in \mathcal{P}(\mathcal{Y})$ for the marginal distribution of the output, that is, $Q_P(y) = \sum_{x\in\mathcal{X}}Q(y|x)P_x$.

We write $\mathbf{0}_{m}, \mathbf{1}_{m}\in\mathbb{R}^m$ as the all-zeros and all-ones vectors (and $\mathbf{0}_{m\times n},\mathbf{1}_{m\times n}$ for their matrix analogues), and $\mathbf{I}_m$ as the $m\times m$ identity matrix. 
For $\gamma \in \mathbb{R}^m$, we write $\diag(\gamma)$ as the $m \times m$ diagonal matrix whose $i$-th diagonal element is $\gamma_i$.
For $A\in\mathbb{R}^{n\times m}$, $S_1 \subset [n]$, and $S_2 \subset [m]$, we write $A_{S_1,S_2}$ for the submatrix of $A$ with rows indexed by $S_1$ and columns indexed by $S_2$. 

\newcommand{\trInv}{{\tr_{-1}}}
For a positive definite matrix $A$, we define $\trInv(A):=\tr(A^{-1})$, and for a non-invertible positive semidefinite matrix $A$, we define $\trInv(A):=\infty$.

\section{Problem Formulation}\label{sec:prob_form}
{
\begin{figure}
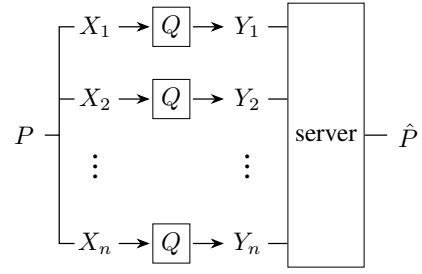
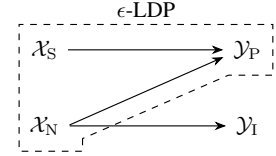

    \centering
    \begin{subfigure}{.75\linewidth}
        \makebox[\linewidth][c]{\includestandalone[width = .85\linewidth]{figs/systemmodel}} \caption{A discrete distribution estimation system under the ULDP constraint.}
    \end{subfigure} 
    \begin{subfigure}{.45\linewidth}
    \makebox[\linewidth][c]{\includestandalone[width = .85\linewidth]{figs/ULDP}}
    \caption{Illustration of a ULDP mechanism.}
    \end{subfigure}

    \caption{Overview of system model.}
    \label{fig:systemmodel}
\end{figure}
}
We consider the private discrete distribution estimation system with $n\in\mathbb{N}$ clients and a single server, as depicted in Fig.~\ref{fig:systemmodel}. 
For each $i\in[n]$, the $i$-th client has its own categorical data $X_i \in \mathcal{X} = [w]$, $w \in \mathbb{N}_{\geq 2}$, generated in an i.i.d. manner from an unknown distribution $P \in \mathcal{P}(\mathcal{X})$. 
To mitigate privacy leakage, the $i$-th client perturbs its data $X_i$ into $Y_i$ through a conditional distribution $Q:\mathcal{X} \rightarrow \mathcal{P}(\mathcal{Y})$ for some finite set $\mathcal{Y}$, called a \textbf{privacy mechanism}, and transmits $Y_i$ to the server.
After collecting $Y^n = (Y_1,\dots,Y_n)$, the server computes the estimate $\hat{P}_n:\mathcal{Y}^n \rightarrow \mathbb{R}^w$ of $P$.
Let $\hat{P}$ denote a sequence of estimators $(\hat{P}_n)_{n=1}^\infty$ and we call the pair $(Q,\hat{P})$ a \textbf{private estimation scheme}.
The \textbf{estimation error} of a private estimation scheme $(Q,\hat{P})$ with $n$ clients is measured by the worst-case mean squared error (MSE),
\begin{equation}
    R_{n}(Q,\hat{P}) := \sup\limits_{P \in \Delta_{w}}R_{n}(Q,\hat{P};P),
\end{equation}
where
\begin{align}
    R_{n}(Q,\hat{P};P) = {\mathbb{E}}_{Y^n \sim Q_P^n}\left[ \| \hat{P}_n(Y^n)-P \|_2^2 \right].
\end{align}

In this paper, we consider the \emph{utility-optimized local differential privacy} (ULDP) constraint assuming a partially sensitive domain: sensitive categories are privatized under an $\epsilon$-LDP whereas non-sensitive categories may be disclosed via invertible outputs to enhance utility. 
Specifically, we suppose that $\mathcal{X}$ is partitioned into two sets: the set of sensitive elements $\mathcal{X}_{\mathrm{S}} \subset \mathcal{X}$ and the set of non-sensitive elements $\mathcal{X}_{\mathrm{N}} = \mathcal{X}\backslash \mathcal{X}_{\mathrm{S}}$.
This partition is common knowledge to all clients and the server.
The ULDP is defined as follows.

\begin{definition}[ULDP \cite{murakami2019utility}]\label{def:ULDP}
For $\mathcal{X}_{\mathrm{S}} \subset \mathcal{X}$ and $\epsilon >0$, a privacy mechanism $Q:\mathcal{X}\rightarrow\mathcal{P}(\mathcal{Y})$ is called an $(\mathcal{X}_{\mathrm{S}},\epsilon)$-\textbf{ULDP mechanism} if there exists a partition $\{\mathcal{Y}_{\mathrm{P}},\mathcal{Y}_{\mathrm{I}}\}$ of $\mathcal{Y}$ (called the sets of \textbf{protected data} and \textbf{invertible data}, respectively), such that
\begin{enumerate}
    \item  For all $x, x' \in \mathcal{X}$ and $y \in \mathcal{Y}_{\mathrm{P}}$,
\begin{equation}\label{eq:ULDP_sensitive}
    Q(y|x) \le e^{\epsilon}Q(y|x').
\end{equation}
    \item For all $y \in \mathcal{Y}_{\mathrm{I}}$, there exists $x \in \mathcal{X}_{\mathrm{N}}$ such that
\begin{equation}\label{eq:ULDP_non_sensitive}
    Q(y|x) > 0,\; Q(y|x') = 0 \text{ for all } x' \in \mathcal{X}, x' \neq x.
\end{equation}
\end{enumerate}
\end{definition}
Without loss of generality, we assume that the set of sensitive elements is given as $\mathcal{X}_{\mathrm{S}} = [v]$, $v \leq w$, and we simply call an $(\mathcal{X}_{\mathrm{S}},\epsilon)$-ULDP mechanism as a $(v,\epsilon)$-ULDP mechanism. 
We denote the class of all $(v,\epsilon)$-ULDP mechanisms by $\mathcal{Q}_{w,v,\epsilon}$. 
Also, a private estimation scheme $(Q,\hat{P})$ which satisfies $Q \in \mathcal{Q}_{w,v,\epsilon}$ is called a $(v,\epsilon)$\textbf{-ULDP scheme}.
Note that $(w,\epsilon)$-ULDP is equivalent to $\epsilon$-LDP.
Also, \eqref{eq:ULDP_non_sensitive} indicates that whenever the event $\{Y\in\mathcal{Y}_{\mathrm{I}}\}$ occurs, the server can perfectly infer the input $x \in \mathcal{X}_{\mathrm{N}}$ of a ULDP mechanism.
Thus, roughly speaking, an $(v,\epsilon)$-ULDP mechanism provides partly more accurate data than an $\epsilon$-LDP mechanism.

The \textbf{optimal privacy-utility tradeoff} (or, in short, \textbf{optimal PUT}) for a discrete distribution estimation system with $n$-clients under $(v,\epsilon)$-ULDP constraint is defined as 
\begin{equation}
    M^*_n(w,v,\epsilon) := \inf\limits_{Q \in \mathcal{Q}_{w,v,\epsilon}}\inf\limits_{\hat{P}}R_{n}(Q,\hat{P}).
\end{equation}
In \cite{murakami2019utility}, it was shown that $M^*_n(w,v,\epsilon) = \Theta(1/n)$.
Accordingly, the \textbf{asymptotically optimal PUT} is defined as 
\begin{equation}
    M^*(w,v,\epsilon):=\liminf\limits_{n\rightarrow \infty} n \cdot M^*_n(w,v,\epsilon),
\end{equation}
and the \textbf{asymptotic error} of a private estimation scheme $(Q,\hat{P})$ is defined as 
\begin{equation}
    R(Q,\hat{P}) := \limsup\limits_{n\rightarrow \infty} n \cdot R_{n}(Q,\hat{P}).
\end{equation}
We say that a $(v,\epsilon)$-ULDP scheme $(Q,\hat{P})$ is \textbf{asymptotically optimal} if
\begin{align}
    R(Q,\hat{P})  = M^*(w,v,\epsilon).
\end{align}
In this paper, we focus on characterizing the asymptotically optimal PUT $M^*(w,v,\epsilon)$ and simply refer to the asymptotically optimal PUT as the optimal PUT from now on.
Note that $M^*(w,w,\epsilon)$ is the optimal PUT under the $\epsilon$-LDP constraint, which has already been established in \cite{ye2018optimal}.
Hence, from now on, we assume $v<w$.
\begin{remark}\label{rmk:ULDPtoLDP}
    The following inequalities follow directly from the relation between LDP and ULDP:
    \begin{align}
        M^*_n(w,v,\epsilon) \geq M^*_n(v,v,\epsilon), M^*(w,v,\epsilon) \geq M^*(v,v,\epsilon). \label{eq:convBoundByLDPSensitive}
    \end{align}
    In Remark~\ref{rmk:ULDPtoLDPTight}, we show that the above bound is, in fact, tight under specific conditions.
\end{remark}
\section{Preliminaries and Related Works}\label{sec:preliminaries}
\subsection{Discrete Distribution Estimation under LDP \& ULDP}\label{subsec:prevSchemes}
Discrete distribution estimation under local differential privacy (LDP) has been extensively studied, and a variety of LDP schemes have been proposed such as randomized response (RR) \cite{warner1965randomized}, randomized aggregatable privacy-preserving ordinal response (RAPPOR) \cite{erlingsson2014rappor}, subset selection (SS) \cite{ye2018optimal}, Hadamard response (HR) \cite{acharya2019hadamard}, projective geometry response (PGR) \cite{feldman2022private}, optimized unary encoding (OUE) protocol \cite{wang2017locally}, and optimized local hasing (OLH) protocol \cite{wang2017locally}.
Among these, the SS is known to attain the optimal PUT $M^*(w,w,\epsilon)$ \cite{ye17_opt_PUT_l2}.
Recently, Park \emph{et al.} \cite{park2023block} introduced a class of block design schemes that subsumes RR, SS, HR, and PGR while also enabling the construction of new mechanisms that achieve optimal PUT with reduced communication costs.

Turning to the utility-optimized LDP (ULDP) setting, proposed ULDP schemes are typically derived by modifying classical LDP schemes to satisfy the ULDP requirements in Definition \ref{def:ULDP}.
Concretely, uRR and uRAP \cite{murakami2019utility} are derived from RR \cite{warner1965randomized} and RAPPOR \cite{erlingsson2014rappor}, respectively; uSS \cite{he2025addressing} is based on SS \cite{ye2018optimal}; uOUE and uOLH \cite{liu2024multidimensional, he2025addressing} are based on OUE and OLH \cite{wang2017locally}; and a variant of the Hadamard response \cite{acharya2020context}—referred to here as uHR—extends HR \cite{acharya2019hadamard} to the ULDP constraint. 
To characterize the optimal PUT under ULDP, Murakami \emph{et al.} \cite{murakami2019utility} utilized Remark~\ref{rmk:ULDPtoLDP} and known results in PUT over LDP in \cite{duchi2013local} to show that for $\epsilon\in[0,1]$, $M^*_n(w,v,\epsilon) = \Omega\left(\tfrac{v}{n\epsilon^2}\right)$.
Then, they stated that uRAP attains an estimation error of $\tfrac{4v}{n\epsilon^2}$ as $\epsilon \to 0$; hence, it is \emph{order-optimal} for $\epsilon\in[0,1]$.
Subsequently, Acharya \emph{et al.} \cite{acharya2019hadamard} refined the optimal PUT as $M^*_n(w,v,\epsilon) = \Theta\left(\tfrac{v^2}{n\epsilon^2}+\tfrac{w}{n\epsilon} \right)$ for the high privacy regime $\epsilon=O(1)$, and introduced the uHR scheme whose estimation error matches this rate up to constant factors.\footnote{They stated the sample complexity under the total variation loss (see \cite[Theorem 3]{acharya2020context}). We adapt the argument to obtain the corresponding MSE.}
Recently, \cite{he2025addressing} showed that $M^*_n(w,v,\epsilon) = \Theta\left(\tfrac{ve^\epsilon}{n(e^\epsilon-1)^2}\right)$ for $\epsilon \leq \log v$, and uSS, uOUE, and uOLH are \emph{order-optimal} in the regime $\epsilon \leq \log v$, which was also proved from Remark~\ref{rmk:ULDPtoLDP} and results in PUT over LDP in \cite{ye2018optimal}.
Furthermore, it was experimentally shown that uSS \cite{he2025addressing} achieves the lowest estimation error among previous schemes.

Despite these advances, the \emph{exact} optimal PUT under ULDP, in particular the determination of the leading constant, had remained an open problem until our work, where we establish a complete characterization.

\subsection{Asymptotic Cram\'er-Rao Lower Bound}\label{subsec:prevAsympCRLB}
A well-known result in asymptotic statistics \cite{polyanskiyInformationTheoryCoding2024, vaartAsymptoticStatistics1998, lecamAsymptoticsStatistics2000} is an asymptotic variant of the Cram\'er-Rao lower bound (CRLB), which we refer to as the \textbf{asymptotic Cram\'er-Rao lower bound} (asymptotic CRLB).
In \cite{ye17_opt_PUT_l2}, a key step in establishing the converse of the optimal PUT under LDP is to derive a certain uniform version of the asymptotic CRLB, suitable for distribution estimation under LDP. 
This subsection revisits the classical asymptotic CRLB and sketches the key ideas behind the uniform refinement of \cite{ye17_opt_PUT_l2}.
In Section \ref{subsec:UnifAsympCRLB}, we extend the uniform version of \cite{ye17_opt_PUT_l2} substantially, laying the groundwork for showing the converse of the optimal PUT under ULDP. 

The standard asymptotic CRLB \cite[Theorem 29.4]{polyanskiyInformationTheoryCoding2024} implies that for a \emph{fixed} conditional distribution $Q$ from $[w]$, we have
    \begin{align}
    \lim_{n \rightarrow \infty }n \cdot \inf_{\hat{P}} R_n(Q,\hat{P}) \geq \sup_{P \in \Delta_w^\circ}\trInv\left(J_{P,Q}\right), \label{eq:standAsympCRB}
    \end{align}
where $J_{P,Q}$ is the Fisher information matrix of $Q$ at $P$ (defined in Definition~\ref{def:FI}).
However, this cannot be directly applied to derive a lower bound on the optimal PUT under LDP, which involves the infimum over the privacy mechanisms $Q$ \emph{inside the limit over $n$}.
To circumvent this challenge, the authors of  \cite{ye17_opt_PUT_l2} first reduced the optimization over \emph{all $\epsilon$-LDP mechanisms} to the optimization over a specific class of $\epsilon$-LDP mechanisms, called the \emph{extremal $\epsilon$-LDP mechanisms} introduced in \cite{kairouz2016extremal}. 
Then, by refining the classical proof of asymptotic CRLB in \cite{vaartAsymptoticStatistics1998, lecamAsymptoticsStatistics2000}, commonly referred to as the \emph{local asymptotic normality (LAN)} argument, they derived a certain universal lower bound on $\inf_{\hat{P}} R_n(Q,\hat{P})$, holding for \emph{all extremal $\epsilon$-LDP mechanisms} $Q$. This gives 
\begin{align}
    M^*(w,w,\epsilon) \geq \sup_{P \in \Delta_w^\circ} \inf_{Q \in \mathcal{Q}_{w,\epsilon}^{\mathrm{E}}} \trInv\left(J_{P,Q}\right),\label{eq:LDPUnifAsympCRLB}
\end{align}
where $\mathcal{Q}_{w,\epsilon}^{\mathrm{E}}$ is the set of all extremal $\epsilon$-LDP mechanisms \cite{kairouz2016extremal}.
We note that the proof exploits the special structures of (extremal) LDP mechanisms, and therefore it is not straightforward to adopt the arguments of \cite{ye17_opt_PUT_l2} in the ULDP setup.

\subsection{Block Design Schemes}
\label{sec:BD}
Our proposed ULDP mechanism builds on the block design (BD)–based LDP mechanism in \cite{park2023block}, referred to as the BD  mechanism. 
This subsection reviews the BD mechanism with emphasis on elements relevant to the development of the proposed ULDP scheme in  Section~\ref{sec:achievabilityuBD}.
\begin{definition}[Block design \cite{ionin2006combinatorics}]\label{def:BD}
Let $v,b,r,k \in \mathbb{N}$ and $\lambda \in \mathbb{Z}_{\geq 0}$. A hypergraph $(\mathcal{V},\mathcal{E})$, that consists of a set of vertices $\mathcal{V}$ and a set of edges (or blocks) $\mathcal{E} \subseteq 2^{\mathcal{V}}$, is called a $(v,b,r,k,\lambda)$-block design if $|\mathcal{V}| = v$, $|\mathcal{E}| = b$, and it satisfies the following symmetries:
\begin{enumerate}
    \item $r$-regular: every vertex in $\mathcal{V}$ is contained in $r$ edges.
    \item $k$-uniform: $\forall e \in \mathcal{E}$, $|e| = k$.
    \item $\lambda$-pairwise balanced: every pair of vertices in $\mathcal{V}$ is contained in $\lambda$ edges.
\end{enumerate}
\end{definition}

Without loss of generality, we assume $\mathcal{V} = [|\mathcal{V}|]$.
The parameters $(v,b,r,k,\lambda)$ of a block design should satisfy the following relations.

\begin{lemma}\cite{ionin2006combinatorics}\label{bdlemma}
For any $(v,b,r,k,\lambda)$-block design, we have
\begin{equation}
    b \geq v \geq k, \quad b \geq r \geq \lambda,
\end{equation}
\begin{equation}
    bk = vr, \quad r(k-1) = \lambda(v-1).
\end{equation}

Also, if $v=1$, then $(b,r,k,\lambda)=(1,1,1,0)$.
\end{lemma}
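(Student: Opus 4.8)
The plan is to derive the two equalities by elementary double counting, read off the inequalities $v \ge k$, $r \ge \lambda$, and $b \ge r$ from the design axioms together with those equalities, obtain the remaining Fisher-type bound $b \ge v$ from a rank argument on the vertex--block incidence matrix, and finally dispatch the case $v = 1$ by a short direct check.

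For the equalities, I would count the set of incident pairs $I := \{(x,e) \in \mathcal{V} \times \mathcal{E} : x \in e\}$ in two ways. Summing over blocks gives $|I| = \sum_{e \in \mathcal{E}} |e| = bk$ by $k$-uniformity, while summing over vertices gives $|I| = \sum_{x \in \mathcal{V}} r = vr$ by $r$-regularity, hence $bk = vr$. For the second identity, fix a vertex $x_0 \in \mathcal{V}$ and count the pairs $(y,e)$ with $y \in \mathcal{V} \setminus \{x_0\}$, $e \in \mathcal{E}$, and $\{x_0, y\} \subseteq e$: grouping by $y$ and invoking $\lambda$-pairwise balance yields $\lambda(v-1)$, whereas grouping by the $r$ blocks through $x_0$ and counting their remaining $k-1$ vertices yields $r(k-1)$, so $r(k-1) = \lambda(v-1)$.

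The inequality $v \ge k$ is immediate since every block is a subset of $\mathcal{V}$, so $k = |e| \le |\mathcal{V}| = v$. For $r \ge \lambda$ (valid once $v \ge 2$), observe that the $\lambda$ blocks containing a fixed pair $\{x,y\}$ are in particular blocks containing $x$, of which there are $r$; alternatively this follows from $r(k-1) = \lambda(v-1)$ together with $1 \le k \le v$. Then $b \ge r$ is purely arithmetic: from $bk = vr$ and $k \le v$ we get $b = vr/k \ge r$. For $v = 1$ we have $k \le v = 1$, so $k = 1$ and every block equals $\{1\}$; as $\mathcal{E}$ is a set of subsets this forces $\mathcal{E} = \{\{1\}\}$, whence $b = r = 1$, and $\lambda = 0$ since no pair constraint is active, consistent with $r(k-1) = \lambda(v-1) = 0$.

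I expect the Fisher inequality $b \ge v$ to be the main obstacle. Here I would form the incidence matrix $N \in \{0,1\}^{v \times b}$ with $N_{x,e} = \mathbbm{1}(x \in e)$ and compute the Gram matrix $NN^\top$, whose $(x,x)$-entry equals $r$ (blocks through a vertex) and whose $(x,y)$-entry for $x \ne y$ equals $\lambda$ (blocks through a pair); thus $NN^\top = (r-\lambda)\mathbf{I}_v + \lambda\, \mathbf{1}_v \mathbf{1}_v^\top$. Its eigenvalues are $r + (v-1)\lambda$ on the span of $\mathbf{1}_v$, which equals $rk > 0$ by the second identity, and $r - \lambda$ with multiplicity $v-1$, which is strictly positive in the relevant regime $k < v$ (using again $r(k-1) = \lambda(v-1)$ and $r \ge 1$). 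Hence $NN^\top$ is positive definite, so $\mathrm{rank}(N) = v$, and since $\mathrm{rank}(N) \le b$ we conclude $b \ge v$; the borderline case $k = v$ forces $\mathcal{E} = \{\mathcal{V}\}$, a trivial design outside the class relevant in the sequel.
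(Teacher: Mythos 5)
The paper does not prove this lemma at all: it is quoted from the cited design-theory reference \cite{ionin2006combinatorics}, so there is no internal proof to compare against. Your argument is the standard textbook proof and is essentially correct: double counting of incidences for $bk=vr$, double counting of flags through a fixed vertex for $r(k-1)=\lambda(v-1)$, the containment observations for $v\geq k$ and $r\geq\lambda$, the arithmetic deduction $b=vr/k\geq r$, and Fisher's inequality $b\geq v$ via positive definiteness of $NN^{\top}=(r-\lambda)\mathbf{I}_v+\lambda\,\mathbf{1}_v\mathbf{1}_v^{\top}$, whose eigenvalues $rk$ and $r-\lambda$ you compute correctly.

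Two caveats, both about degenerate cases rather than your main argument. First, the borderline case $k=v$ is not merely ``outside the relevant class'': under the paper's Definition~\ref{def:BD} (where $\mathcal{E}\subseteq 2^{\mathcal{V}}$ and $k\leq v$ is allowed, cf.\ the complete block design of Remark~\ref{rmk:CBD} with $k=v$, i.e., parameters $(v,1,1,v,1)$), the inequality $b\geq v$ is genuinely false for $k=v$ and $v\geq 2$, since then $\mathcal{E}=\{\mathcal{V}\}$ and $b=1<v$. So the lemma implicitly assumes a nontrivial design ($k<v$, equivalently $r>\lambda$); your rank argument correctly covers exactly that regime, but the $k=v$ case is a restriction on the statement, not something a proof can absorb --- worth stating explicitly, especially since the paper's uBD construction does use $k=v$ blocks. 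Second, for $v=1$ the $\lambda$-pairwise-balance condition is vacuous, so the axioms do not force $\lambda=0$; the value $\lambda=0$ is a convention (consistent with $r(k-1)=\lambda(v-1)=0$), and your phrase ``since no pair constraint is active'' should be read as adopting that convention rather than deriving it. With these two clarifications your proof is complete and matches the classical argument the citation points to.
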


\begin{remark}\label{rmk:CBD}
    For any given $v,k \in \mathbb{N}$ such that $k \leq v$, there exists a $\left(v,\binom{v}{k}, \binom{v-1}{k-1}, k, \binom{v-2}{k-2} \right)$-block design called the $(v,k)$-complete block design \cite{ionin2006combinatorics}
    (here, if $k=1$, we regard $\binom{v-2}{k-2}=0$). This is formed by letting $\mathcal{E}$ be the set of \emph{all} size $k$ subsets of $\mathcal{V}$.
\end{remark}

Given $(v,b,r,k,\lambda)$-block design, a BD mechanism regards input data as a vertex and outputs an edge, enforcing $\epsilon$-LDP by biasing towards edges that contain the input vertex.

\begin{definition}[Block design mechanism \cite{park2023block}] \label{def:BD_mech}
For given $\epsilon>0$ and a $(v,b,r,k,\lambda)$-block design $(\mathcal{X},\mathcal{Y})$, the corresponding  $(v,b,r,k,\lambda,\epsilon)$-block design mechanism  is the conditional distribution $Q: \mathcal{X} \rightarrow \mathcal{P}(\mathcal{Y})$ such that
\begin{equation}
    Q(y|x) = 
    \begin{cases}
        \frac{e^\epsilon}{r e^\epsilon + b - r} & (\text{if } x \in y)\\
        \frac{1}{r e^\epsilon + b - r} & (\text{if } x \notin y)
    \end{cases}
\end{equation}
\end{definition}

Park \emph{et al.} also proposed the corresponding sequence of unbiased estimators $\hat{P}$ for the BD mechanism \cite[Thm. 3]{park2023block}, and called such a pair $(Q,\hat{P})$ a $(v,b,r,k,\lambda,\epsilon)$-BD scheme.
For any $(v,\epsilon)$, it was shown that the estimation error of the BD scheme only depends on the uniformity parameter $k$. In \cite[Prop. 1 and Thm. 5]{park2023block}, an explicit characterization of optimal $k$ that minimizes the estimation error is provided and the BD scheme with optimal $k$ is shown to achieve the optimal PUT.

\begin{theorem}[\!\!{\cite[Prop. 1 and Thm. 5]{park2023block}}]\label{thm:BD_opt}
    Let $v \geq 2$ and let $(Q,\hat{P})$ be a $(v,b,r,k,\lambda,\epsilon)$-block design scheme.
    Its asymptotic error is
    \begin{align}
        R(Q,\hat{P})=R^\mathrm{BD}(v,k,\epsilon) := \frac{(v-1)^2(ke^\epsilon+v-k)^2}{vk(v-k)(e^\epsilon-1)^2}.\label{eq:defRBD}
    \end{align}
    Moreover, we have
    \begin{align}
        M^*(v,v,\epsilon) = \min_{k \in [v-1]}R^\mathrm{BD}(v,k,\epsilon),
    \end{align}
    and
    \begin{align}
        K^{*}(v,\epsilon) &:= \argmin\limits_{k\in[v-1]}R^{\mathrm{BD}}(v,k,\epsilon) \label{eq:defKStar1}
        \\& = \left\{k: E(v,k) \leq \epsilon \leq E(v,k-1) \right\},\label{eq:defKStar2}
    \end{align}
    where
    \begin{equation}\label{eq:E_vk}
        E(v,k)=\ln{\sqrt{\frac{(v-k)(v-k-1)}{k(k+1)}}}, \quad E(v,0):=\infty. 
    \end{equation}
\end{theorem}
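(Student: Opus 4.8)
The theorem bundles three claims that I would establish in turn: (a) the closed form \eqref{eq:defRBD} for the asymptotic error of a single $(v,b,r,k,\lambda,\epsilon)$-block design scheme; (b) the identity $M^*(v,v,\epsilon)=\min_{k\in[v-1]}R^{\mathrm{BD}}(v,k,\epsilon)$; and (c) the characterization \eqref{eq:defKStar2} of the minimizers. Parts (a) and (c) amount to direct computations; the substance—and the step I expect to be the main obstacle—is the converse half of (b), namely the lower bound $M^*(v,v,\epsilon)\ge\min_k R^{\mathrm{BD}}(v,k,\epsilon)$.

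For (a) the plan is to compute the worst-case MSE of the unbiased estimator attached to the BD mechanism \cite[Thm.~3]{park2023block} explicitly. Set $c:=re^\epsilon+b-r$. By Definition~\ref{def:BD_mech} a report $Y$ depends on the input $x$ only through the event $\{x\in Y\}$, whose probability $p_x:=\Pr[x\in Y]$ is an affine function $\alpha+\beta P_x$ of $P_x$ with $\alpha,\beta$ depending only on the design parameters and $\epsilon$ (indeed $\beta=(e^\epsilon-1)(r-\lambda)/c$); inverting this relation coordinatewise against the empirical frequency of $\{x\in Y\}$ is precisely the estimator of \cite{park2023block}, and, being unbiased, its worst-case error is $\sup_{P\in\Delta_v}\tfrac1n\beta^{-2}\sum_x p_x(1-p_x)$. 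A one-line expansion shows $\sum_x p_x(1-p_x)$ equals a $P$-free quantity minus $\beta^2\|P\|_2^2$, so the supremum is attained at the uniform $P$, where $\|P\|_2^2=1/v$ and $p_x\equiv k/v$ (as $Q_P$ is then uniform over the $b$ blocks). Substituting and eliminating $b,r,\lambda$ through Lemma~\ref{bdlemma} ($bk=vr$ and $r(k-1)=\lambda(v-1)$ give $c=r(ke^\epsilon+v-k)/k$ and $r-\lambda=r(v-k)/(v-1)$, whereupon every factor of $r$ cancels) collapses the expression to $R_n(Q,\hat P)=\tfrac1n R^{\mathrm{BD}}(v,k,\epsilon)$ exactly, hence $R(Q,\hat P)=R^{\mathrm{BD}}(v,k,\epsilon)$. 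The same number is also $\trInv(J_{P,Q})$ at the uniform $P$, which one can cross-check from $\sum_y\chi_y\chi_y^{\top}=(r-\lambda)\mathbf I_v+\lambda\mathbf 1_v\mathbf 1_v^{\top}$, $\chi_y$ being the incidence vector of block $y$.

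For (b), the inequality ``$\le$'' is immediate from (a): for each $k\in[v-1]$, Remark~\ref{rmk:CBD} provides the $(v,k)$-complete block design, whose BD scheme is $\epsilon$-LDP—hence $(v,\epsilon)$-ULDP—with asymptotic error $R^{\mathrm{BD}}(v,k,\epsilon)$. For ``$\ge$'' I would invoke the uniform asymptotic CRLB \eqref{eq:LDPUnifAsympCRLB} with $w=v$, take the outer supremum at the uniform $P$, and show $\inf_{Q\in\mathcal Q_{v,\epsilon}^{\mathrm E}}\trInv(J_{P,Q})=\min_{k\in[v-1]}R^{\mathrm{BD}}(v,k,\epsilon)$ there. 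An extremal $\epsilon$-LDP mechanism \cite{kairouz2016extremal} has each output column two-valued in ratio $e^\epsilon$, hence is labelled by a family of subsets of $[v]$; averaging the mechanism over all relabelings of the input alphabet keeps it extremal, fixes the uniform $P$, and—by convexity of $A\mapsto\trInv(A)$—does not increase $\trInv(J_{P,\cdot})$, so one may assume the mechanism is permutation-invariant. Such a mechanism is a mixture over subset sizes $k\in[v-1]$ of complete-block-design mechanisms, so at the uniform $P$ its Fisher information is a nonnegative combination $\bigl(\sum_k\theta_k c_k\bigr)\mathbf I_{v-1}$ subject to one linear normalization constraint on $(\theta_k)$, and minimizing $\trInv=(v-1)\big/\sum_k\theta_k c_k$ is a linear program whose optimum concentrates all weight on a single $k$, returning $\min_k R^{\mathrm{BD}}(v,k,\epsilon)$ by (a). I expect this reduction to permutation-invariant extremal mechanisms, together with the bookkeeping that verifies it is lossless, to be the hardest part.

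For (c), note the prefactor $(v-1)^2/\bigl(v(e^\epsilon-1)^2\bigr)$ in \eqref{eq:defRBD} is independent of $k$, so minimizing $R^{\mathrm{BD}}(v,k,\epsilon)$ over $k\in[v-1]$ is the same as minimizing $g(k):=\bigl((e^\epsilon-1)k+v\bigr)^2\big/\bigl(k(v-k)\bigr)$. Comparing consecutive values: after clearing the positive denominators and expanding, the inequality $g(k)\le g(k+1)$ becomes a quadratic inequality in $e^\epsilon$ in which the coefficient of $e^\epsilon$ vanishes, leaving $e^{2\epsilon}\ge(v-k)(v-k-1)\big/\bigl(k(k+1)\bigr)$, i.e.\ $\epsilon\ge E(v,k)$; re-indexing gives $g(k)\le g(k-1)\iff\epsilon\le E(v,k-1)$. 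Since $E(v,k)$ is strictly decreasing in $k$, the map $k\mapsto g(k)$ is unimodal for every fixed $\epsilon$, so the two one-step conditions $E(v,k)\le\epsilon\le E(v,k-1)$ are jointly necessary and sufficient for $k$ to be a global minimizer, which is exactly \eqref{eq:defKStar2}.
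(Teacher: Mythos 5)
This is a background theorem that the paper imports verbatim from \cite[Prop.~1 and Thm.~5]{park2023block} without giving any proof, so there is no in-paper argument to compare against; what you have written is a self-contained reconstruction. Your parts (a) and (c) check out: with $c=re^\epsilon+b-r$ one indeed gets $\beta=(e^\epsilon-1)(r-\lambda)/c=\tfrac{(e^\epsilon-1)k(v-k)}{(v-1)(ke^\epsilon+v-k)}$ after using $bk=vr$ and $r(k-1)=\lambda(v-1)$, the worst case sits at the uniform distribution where $p_x=k/v$, and the resulting value is exactly $R^{\mathrm{BD}}(v,k,\epsilon)$; likewise the comparison $g(k)\le g(k+1)$ reduces, after cancellation, to $k(k+1)e^{2\epsilon}\ge(v-k)(v-k-1)$, and monotonicity of $E(v,\cdot)$ gives unimodality and hence \eqref{eq:defKStar2}. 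Your converse route—uniform asymptotic CRLB \eqref{eq:LDPUnifAsympCRLB}, evaluation at the uniform $P$, symmetrization to permutation-invariant extremal mechanisms, then a one-dimensional linear program over the mixture weights—is different in flavor from how \cite{park2023block} obtains optimality (they match the already-known exact LDP PUT of Ye--Barg attained by subset selection), but it is essentially the $w=v$ specialization of the machinery this paper develops in Section~\ref{sec:converse}, where the distribution-decomposition step degenerates.

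One step you should tighten: the claim that averaging the stochastic matrix over input relabelings ``does not increase $\trInv(J_{P,\cdot})$ by convexity.'' Convexity of $A\mapsto\trInv(A)$ bounds $\trInv$ of the \emph{average of the Fisher informations}, but the Fisher information of the matrix-averaged mechanism is in general only $\preceq$ that average (unlabeled mixing loses information), which points the inequality the wrong way. The fix is available precisely because the mechanisms are extremal: first form the labeled mixture (output the permutation together with $Y$), whose Fisher information at the uniform $P$ is exactly the average of the conjugated Fisher informations; then observe that all outputs sharing the same staircase pattern $y$ have proportional columns, hence identical score vectors at every $P$, so merging them preserves the Fisher information and yields exactly the permutation-invariant extremal mechanism with $\gamma$ depending only on $|y|$. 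With that remark (and the standard irreducibility argument showing the symmetric mechanism's Fisher information at uniform $P$ is a scalar multiple of $\mathbf{I}_{v-1}$, which you implicitly use), your reduction is lossless and the proof goes through.
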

By Theorem~\ref{thm:BD_opt}, it suffices to find a block design with $k=k^* \in K^{*}(v,\epsilon)$ to construct an optimal $\epsilon$-LDP scheme.
Remark~\ref{rmk:CBD} guarantees the existence of such a $(v,k^*,\epsilon)$-block design yielding a scheme equivalent to the optimal SS \cite{ye2018optimal}.

\section{Main Results}\label{sec:main}
Our main contribution is to \emph{completely} characterize the optimal PUT for the discrete distribution estimation under the ULDP constraint, i.e., the characterization of  $M^*(w,v,\epsilon)$ for all possible $(w,v,\epsilon)$. 
\begin{theorem}\label{thm:ULDP_opt_PUT}
    For $w > v \geq 1$ and $\epsilon > 0$, 
    \begin{align}
        M^*(w,v,\epsilon) = \sup_{\alpha \in [0,1]}\inf_{t \in \Delta_v}{M}(\alpha,t),\label{eq:ULDPOptPUT}
    \end{align}
    where $M(\alpha,t)=\sum_{i=1}^{3}M_i(\alpha,t)$ such that\footnote{
    If $v=1$, we regard $M_1(\alpha,t)=0$. If $v\geq 2$ and $t=\delta^{(v;v)}$, we regard $M_1(\alpha,t)=\infty$.
    }
    \begin{multline}
        M_1(\alpha,t)=\\
        \frac{(v-1)^2}{v(e^\epsilon-1)^2 \sum_{k=1}^{v}t_k \frac{k(v-k)}{(\alpha k(e^\epsilon-1)+v)(ke^\epsilon+v-k)}},
    \end{multline}
    \begin{align}
        M_2(\alpha,t)&=
        \frac{(w-v-1)(1-\alpha)}{(w-v)(e^\epsilon-1)\sum_{k=1}^{v}t_k \frac{k}{ke^\epsilon+v-k}},\\
        M_3(\alpha,t)&=
        \frac{w(1-\alpha)}{v(w-v)(e^\epsilon-1)\sum_{k=1}^{v} t_k \frac{k}{\alpha k(e^\epsilon-1)+v}}.
    \end{align}
    Here, $M(\alpha,t)$ is concave-convex and has a saddle point $(\alpha^*,t^*)$, which gives $\sup_{\alpha}\inf_{t} M(\alpha,t)=M(\alpha^*, t^*)$.\footnote{
    $(\alpha^*,t^*) \in [0,1] \times \Delta_v$ is a saddle point of $M$ if $M(\alpha^*,t^*) = \sup_{\alpha \in [0,1]} M(\alpha,t^*) = \inf_{t \in \Delta_v} M(\alpha^*,t)$  \cite{boyd_convex_2004}.}  
\end{theorem}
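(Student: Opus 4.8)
The plan is to establish Theorem~\ref{thm:ULDP_opt_PUT} in three pieces: a converse bound $M^*(w,v,\epsilon)\ge\sup_{\alpha}\inf_{t}M(\alpha,t)$, a matching achievability bound $M^*(w,v,\epsilon)\le\inf_{t}\sup_{\alpha}M(\alpha,t)$, and the convex-analytic facts that force the two sides to coincide and to be a genuine saddle value. \emph{Converse.} I would follow the three-stage route announced in the introduction. First, prove a \emph{generalized uniform asymptotic Cram\'er--Rao bound} $M^*(w,v,\epsilon)\ge\sup_{P\in\Delta_w^\circ}\inf_{Q}\trInv(J_{P,Q})$, the inner infimum over a suitable subclass of $(v,\epsilon)$-ULDP mechanisms, generalizing \eqref{eq:LDPUnifAsympCRLB}; this is obtained by redoing the local-asymptotic-normality argument of \cite{ye17_opt_PUT_l2,vaartAsymptoticStatistics1998}, but the expansion must now be carried out only on the ``genuinely stochastic'' coordinates $\mathcal{Y}_{\mathrm{P}}$, since on $\mathcal{Y}_{\mathrm{I}}$ the channel is deterministic. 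Second, reduce the infimum to a finite class of \emph{extremal} $(v,\epsilon)$-ULDP mechanisms (protected columns taking two values in ratio $e^\epsilon$, invertible columns point masses) by a Blackwell/garbling argument: an arbitrary ULDP mechanism is a garbling of an extremal one, and garbling cannot decrease the MSE. Third, apply the \emph{distribution decomposition}: split $P$ into its sensitive-block restriction on $[v]$ and its non-sensitive part, parametrize extremal ULDP mechanisms by $t\in\Delta_v$ (the fraction of sensitive inputs placed into blocks of each uniformity $k$) together with a mixing weight for the non-sensitive mass, and evaluate $\trInv(J_{P,Q})$ block by block; after optimizing the mixing weight, the dependence on $P$ collapses to a single scalar $\alpha\in[0,1]$, leaving exactly $\sup_{\alpha}\inf_{t}M(\alpha,t)$, with the weights $k(v-k)/[(\alpha k(e^\epsilon-1)+v)(ke^\epsilon+v-k)]$, etc., appearing as the inverse Fisher-information contributions.

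\emph{Achievability.} I would construct the uBD family: start from the $(v,b,r,k,\lambda,\epsilon)$-block-design mechanism of Definition~\ref{def:BD_mech} on the sensitive symbols, extend it so that each non-sensitive symbol reports through the protected (block-design) channel with a chosen probability and through its own invertible output otherwise—this keeps Definition~\ref{def:ULDP} intact—then allow randomizing the uniformity $k$ according to $t\in\Delta_v$ (e.g.\ mixing complete block designs of Remark~\ref{rmk:CBD}). Equip each such mechanism with a score-based linear estimator in the spirit of \cite{namOptimalPrivateDiscrete2024}: it is unbiased and its asymptotic covariance equals the inverse Fisher information, so the achievable error matches the converse term by term, and the distribution decomposition splits the worst-case MSE into the three additive pieces $M_1,M_2,M_3$, the sensitive-block term reducing—via Theorem~\ref{thm:BD_opt}—to the $R^{\mathrm{BD}}$-type expression inside $M_1$. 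Taking the worst-case $P$ yields asymptotic error $\sup_{\alpha}M(\alpha,t)$ for the $t$-indexed uBD scheme, and minimizing over $t$ gives $\inf_{t}\sup_{\alpha}M(\alpha,t)$; the case $k=1$, i.e.\ $t=\delta^{(1;v)}$, recovers uRR \cite{murakami2019utility}, which is how its optimality in the relevant regime follows.

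\emph{Concave--convexity and the saddle point.} Convexity of $M(\alpha,\cdot)$ on $\Delta_v$ for fixed $\alpha$ is immediate: each $M_i(\alpha,\cdot)$ equals a positive constant (times $1-\alpha\ge0$ for $i=2,3$) divided by a linear form in $t$ that is strictly positive on $\Delta_v$ except, for $M_1$, at $t=\delta^{(v;v)}$ where $M_1=\infty$, and $t\mapsto1/\ell(t)$ is convex whenever $\ell>0$ is linear. Concavity of $M(\cdot,t)$ on $[0,1]$ for fixed $t$ is the one computation required: each $M_i(\cdot,t)$ can be written as $c\cdot(\sum_k u_k/\zeta_k)^{-1}$ with $u_k\ge0$, where $\zeta_k$ is affine in $\alpha$ for $M_1$ ($\zeta_k=\alpha k(e^\epsilon-1)+v$; $M_2$ is itself affine in $\alpha$), while for $M_3$ the substitution $\beta=1-\alpha$ gives $\zeta_k(\beta)=\beta(k(e^\epsilon-1)(1-\beta)+v)$, a downward parabola positive on $(0,1]$, hence concave in $\alpha$; a Cauchy--Schwarz estimate of the Hessian shows $z\mapsto(\sum_k u_k/z_k)^{-1}$ is concave and coordinatewise nondecreasing on the positive orthant, so composing with the affine/concave $\zeta_k$ and summing preserves concavity in $\alpha$. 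Finally $[0,1]$ and $\Delta_v$ are compact convex, $M$ is continuous on $[0,1]\times(\Delta_v\setminus\{\delta^{(v;v)}\})$ and blows up to $+\infty$ at $\delta^{(v;v)}$ when $v\ge2$ (the case $v=1$ is trivial, $\Delta_1$ being a point with $M_1\equiv0$), so $M(\alpha,\cdot)$ is lsc and $M(\cdot,t)$ usc; Sion's minimax theorem then gives $\sup_{\alpha}\inf_{t}M=\inf_{t}\sup_{\alpha}M$, the infimum and supremum are attained by compactness, and an attaining pair $(\alpha^*,t^*)$ is the asserted saddle point, which simultaneously closes the gap between the converse and achievability bounds.

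\emph{Main obstacle.} The crux is the first converse stage, the generalized uniform asymptotic Cram\'er--Rao bound: the ULDP likelihood family is genuinely singular on $\mathcal{Y}_{\mathrm{I}}$, whereas the LDP argument of \cite{ye17_opt_PUT_l2} relies on the uniformly bounded likelihood ratios that ULDP guarantees only on $\mathcal{Y}_{\mathrm{P}}$; isolating the private coordinates, controlling the local-asymptotic-normality remainder uniformly over the reduced mechanism class, and proving the reduction to extremal ULDP mechanisms compatibly with this bound is where the real difficulty lies. The combinatorial distribution-decomposition step—extracting precisely the three-term weighted form—is also delicate but is essentially careful bookkeeping once the extremal structure is in hand.
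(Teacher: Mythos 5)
Your overall architecture matches the paper's: a uniform asymptotic Cram\'er--Rao bound, a degradation-based reduction to extremal ULDP mechanisms, the three-subspace distribution decomposition producing the weights in $M_1,M_2,M_3$, uBD mechanisms with a score-based linear estimator for achievability, and Sion's theorem plus the harmonic-mean/reciprocal-of-linear arguments for the concave--convex structure and the saddle point (this last part is essentially identical to Appendix~\ref{app:objCvxandAttainSaddle}). The genuine gap is in your achievability step. You pass from ``the score-based linear estimator is unbiased and its asymptotic covariance equals the inverse Fisher information'' to ``taking the worst-case $P$ yields asymptotic error $\sup_{\alpha}M(\alpha,t)$ for the $t$-indexed uBD scheme.'' A score-based linear estimator is anchored at a single distribution $P^{(\alpha)}$ and attains the CRLB only there; pointwise CRLB attainment at the anchor says nothing about the estimator's error at other distributions, so the worst case over $P$ is not controlled by your argument (and you never specify which $\alpha$ the score is taken at). This is exactly the nontrivial content of the achievability proof: the paper computes the exact error of the $(\alpha,t)$-anchored estimator at $P^{(\beta)}$, namely the concave quadratic $-\tfrac{w}{v(w-v)}(\beta-\alpha)^2+(\beta-\alpha)F(\alpha,t)+M(\alpha,t)$ with $F=\partial M/\partial\alpha$ (Proposition~\ref{prop:uBDEstError}, \eqref{eq:uBDLoss}), proves via \eqref{eq:uBDworstInPAlpha} that among all $P$ with $P(\mathcal{X}_{\mathrm{S}})=\beta$ the worst case is $P^{(\beta)}$, and then anchors at a saddle point $(\alpha^*,t^*)$ so that the sign of $F(\alpha^*,t^*)$ forces the supremum over $\beta$ to sit at $\beta=\alpha^*$ (Corollary~\ref{cor:uBDOpt}). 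Without this computation (or an equivalent uniform-efficiency argument), the bound $M^*(w,v,\epsilon)\le\inf_{t}\sup_{\alpha}M(\alpha,t)$ that your proof needs is not established. If one anchors at a maximizer of $M(\cdot,t)$ your claimed worst-case value does turn out to be correct, but only as a consequence of the quadratic formula and the first-order condition at that maximizer, not of CRLB attainment per se.

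Two smaller points on the converse. The obstacle you single out---singularity of the likelihood on $\mathcal{Y}_{\mathrm{I}}$ and the need to redo the LAN expansion on the protected coordinates only---is not where the difficulty lies in the paper's route: Theorem~\ref{thm:unifAsympCRLB} is proved for arbitrary channels via the Bayesian CRLB together with the equicontinuity bound of Proposition~\ref{prop:FIBwTwoDistIneq}, which uses only the ratio $\max_x P_x/P'_x$ of the \emph{priors}, never bounded likelihood ratios across inputs; all that is required is compactness of the degrading class, supplied by the extremal ULDP mechanisms. Also, in an extremal mechanism the probability with which a non-sensitive symbol reports through the protected outputs is not a free ``mixing weight'' to be optimized---it is pinned to the minimum compatible with the ULDP constraint---and the decomposition step is not an exact block-by-block evaluation but two inequalities (block matrix inversion and AM--HM), which is all a lower bound needs; their tightness at $P^{(\alpha)}$ matters only on the achievability side.
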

For Theorem \ref{thm:ULDP_opt_PUT}, the converse and the achievability parts are shown in Sections~\ref{sec:converse} and~\ref{sec:achievabilityuBD}, respectively. The concave-convex property of $M(\alpha, t)$ and the existence of a saddle point is proved in Appendix~\ref{app:objCvxandAttainSaddle}.

For converse, we first extend two previous results: (i) generalize the uniform asymptotic Cramér–Rao lower bound \cite{ye17_opt_PUT_l2} introduced in Section~\ref{subsec:prevAsympCRLB}, and (ii)  show that it suffices to consider only a newly defined class of \emph{extremal ULDP mechanisms} as optimal, in a manner analogous to the extremal LDP mechanisms in \cite{kairouz2016extremal}. Then, we introduce a novel argument of \emph{decomposing} the data distribution $P$ into three components: (i) the relative values of $P_x$'s over sensitive $x$, (ii) the relative values of $P_x$'s over non-sensitive $x$, and (iii) the total probability of sensitive data $P(\mathcal{X}_{\mathrm{S}})$.
Using this, we derive a lower bound on $M^*(w,v,\epsilon)$ which can be interpreted as the \emph{sum of CRLBs for estimating each component}. Each $M_i$ in Theorem \ref{thm:ULDP_opt_PUT} can be intuitively interpreted as the CRLB for estimating the $i$-th component of the data distribution $P$.

For achievability, we propose a class of \textbf{utility-optimized block design schemes} (uBD schemes).
Each uBD scheme is associated with parameters $(\alpha,t)$ appearing in Theorem~\ref{thm:ULDP_opt_PUT}, and we show that if $(\alpha^*,t^*)$ is a saddle point of $M$, then every uBD scheme with parameter $(\alpha,t)=(\alpha^*,t^*)$ is asymptotically optimal. 
The privacy mechanism $Q$ of a uBD scheme builds upon the block design mechanisms \cite{park2023block} in Definition~\ref{def:BD_mech}, with nontrivial modifications. The distinguishing idea is to exploit a \emph{mixture} of multiple BD mechanisms, rather than restricting to a single BD mechanism.
The parameter $t$ represents the mixture proportions, where $t_k$ denotes the proportion of a $k$-uniform BD mechanism. The estimator sequence is constructed based on two ideas. First, we adopt the idea, also utilized in the converse proof, of decomposing the data distribution $P$ into three components, and estimate each component separately.
Second, we propose a set of estimator sequences $\{\hat{P}^{(\alpha)}\}_{\alpha \in [0,1]}$, each of which attains the Cram\'er-Rao lower bound at a specific data distribution $P^{(\alpha)} \in \Delta_w$.
The supremum over $\alpha$ in Theorem~\ref{thm:ULDP_opt_PUT} reflects the intuition that for the worst-case optimality, we use the estimator sequence that attains the worst-case CRLB. 
To attain the CRLB, the idea of `score-based linear estimator' in \cite{namOptimalPrivateDiscrete2024} is utilized.

Theorem \ref{thm:ULDP_opt_PUT}
characterizes the optimal PUT in terms of a finite-dimensional concave–convex saddle-point optimization problem, which can be efficiently solved using existing optimization solvers. Moreover, the following theorem gives a closed-form solution to this optimization problem for some regime, which is proved in Appendix~\ref{app:proofClosedFormExpPreciseForm}. 
\begin{theorem}\label{thm:closedFormExpPreciseForm}
\,

    \textbf{Case (a):} Suppose that $(w,v,\epsilon)$ satisfies at least one of the following three conditions:
        \begin{enumerate}
            \item[(i)] $v=1$;
            \item[(ii)] $v \geq 2$ and $\epsilon \geq \ln\left(w-v+\sqrt{\dfrac{(w-1)(w-2)}{2}}\right)$;
            \item[(iii)] $v=2$ and $\epsilon \leq \ln\left(1+\sqrt{\dfrac{2(w-2)}{w-1}}\right)$;
        \end{enumerate}
    Then, the following $(\alpha^*,t^*)$ is a saddle point of $M$, i.e., $M^*(w,v,\epsilon)=M(\alpha^*,t^*)$:
    \begin{equation}
        \alpha^* = \left[\frac{v(e^\epsilon-1-w+v)}{w(e^\epsilon-1)}\right]_+, \quad t^* = \delta^{(1 ;v)},
    \end{equation}
    where $[z]_+:=\max(0, z)$.
        
    \textbf{Case (b):} Suppose that $(w,v,\epsilon)$ satisfies $v \geq 4$ and $\epsilon \leq \ln\sqrt{\tfrac{(v-1)(v-2)}{2}}$.
    Then, for each $k^* \in K^*(v,\epsilon) \cap [2:v-1]$, the following $(\alpha^*,t^*)$ is a saddle point of $M$, i.e., $M^*(w,v,\epsilon)=M(\alpha^*,t^*)$:
    \begin{align}
        \alpha^* = 1,\quad t^* = \delta^{(k^* ;v)}.
    \end{align}
    Here, $K^*(v,\epsilon)$ is defined in \eqref{eq:defKStar1}.\footnote{
    Note that $\ln\sqrt{\tfrac{(v-1)(v-2)}{2}}=E(v,1)$, where $E$ is in \eqref{eq:E_vk}. Hence $K^*(v,\epsilon) \cap [2:v-1] \neq \emptyset$.
    }
\end{theorem}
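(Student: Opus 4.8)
The plan is to use the concave--convex structure of $M$ established in Appendix~\ref{app:objCvxandAttainSaddle}: a point $(\alpha^*,t^*)\in[0,1]\times\Delta_v$ is a saddle point of $M$ if and only if $\alpha^*$ maximizes $\alpha\mapsto M(\alpha,t^*)$ on $[0,1]$ and $t^*$ minimizes $t\mapsto M(\alpha^*,t)$ on $\Delta_v$; since in every regime the candidate $t^*$ is a simplex vertex $\delta^{(k^*;v)}$, both requirements reduce to first-order conditions, which I would check regime by regime.

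\emph{The $\sup_\alpha$ condition.} Substituting $t=\delta^{(k^*;v)}$ and using $ke^\epsilon+v-k=k(e^\epsilon-1)+v$, the functions $M_1,M_2$ become affine in $\alpha$ and $M_3$ a concave quadratic, so $\alpha\mapsto M(\alpha,\delta^{(k^*;v)})$ is a concave quadratic and its maximizer over $[0,1]$ is its critical point clipped to $[0,1]$. In Case~(a) ($k^*=1$) solving $\partial_\alpha M(\alpha,\delta^{(1;v)})=0$ gives the critical point $\tfrac{v(e^\epsilon-1-w+v)}{w(e^\epsilon-1)}$; as this is always $<v/w<1$, only the lower clip can be active, which produces $\alpha^*=[\,\cdot\,]_+$. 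In Case~(b) it suffices to show $\partial_\alpha M(\alpha,\delta^{(k^*;v)})\big|_{\alpha=1}\ge 0$; using $v(w-v-1)+w=(v+1)(w-v)$ this simplifies to $k^*(v^2-v+2)\ge v(v+1)$, which holds for every $k^*\ge 2$ because $\tfrac{v(v+1)}{v^2-v+2}<2$, so by concavity the maximizer is $\alpha^*=1$.

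\emph{The $\inf_t$ condition.} Case~(b) is immediate: $\alpha^*=1$ annihilates $M_2$ and $M_3$ via the factor $1-\alpha$, and the algebraic identity $M_1(1,\delta^{(k;v)})=R^{\mathrm{BD}}(v,k,\epsilon)$ yields $M(1,t)=\big(\sum_{k=1}^{v}t_k/R^{\mathrm{BD}}(v,k,\epsilon)\big)^{-1}$, which is minimized over $\Delta_v$ by concentrating all mass on $\argmin_k R^{\mathrm{BD}}(v,k,\epsilon)=K^*(v,\epsilon)$; the hypotheses $v\ge 4$ and $\epsilon\le E(v,1)$ guarantee $K^*(v,\epsilon)\cap[2:v-1]\neq\emptyset$, closing Case~(b). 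For Case~(a) with $v=1$, $\Delta_1$ is a singleton and there is nothing to prove. For Case~(a) with $v\ge 2$, convexity of $M(\alpha^*,\cdot)$ on $\Delta_v$ makes $\delta^{(1;v)}$ a global minimizer iff the directional derivative along each edge toward $\delta^{(j;v)}$, $j\in[2:v]$, is nonnegative; writing $M_i=c_i/\langle t,a^{(i)}\rangle$ with positive vectors $a^{(i)}$, this derivative equals $D_j:=\sum_{i=1}^{3}c_i\big(a^{(i)}_1-a^{(i)}_j\big)/(a^{(i)}_1)^2$, a rational function of $j$ with parameters $(w,v,\epsilon,\alpha^*)$. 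Clearing denominators turns $D_j\ge 0$ into a polynomial inequality in $j$, which I would show holds exactly under condition~(ii) or~(iii). Under condition~(iii) ($v=2$, hence $\alpha^*=0$) the only edge is $j=2$, and the computation collapses---the $e^\epsilon-1$ linear term cancels---to $(e^\epsilon-1)^2\le\tfrac{2(w-2)}{w-1}$, i.e.\ $\epsilon\le\ln\!\big(1+\sqrt{2(w-2)/(w-1)}\big)$. Under condition~(ii) ($v\ge 2$, large $\epsilon$, so $\alpha^*>0$) I would establish unimodality/convexity of $j\mapsto D_j$ on $[1,v]$ from the structure of the maps $j\mapsto a^{(i)}_j$, so that $D_1=0$ together with nonnegativity at the binding value of $j$ (an endpoint, or the right-hand slope at $j=1$) forces $D_j\ge 0$ throughout, the threshold $e^\epsilon\ge w-v+\sqrt{(w-1)(w-2)/2}$ being precisely what makes the binding inequality hold.

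The main obstacle I anticipate is the last step of Case~(a)(ii): proving $D_j\ge 0$ simultaneously for all $j\in[2:v]$ and extracting the \emph{exact} threshold $w-v+\sqrt{(w-1)(w-2)/2}$ rather than merely a loose sufficient condition---this requires correctly pinpointing the binding $j$ and being tight there, complicated by the fact that $\alpha^*$ itself depends on $(w,v,\epsilon)$ and must be substituted before the polynomial manipulation. By contrast, the remaining pieces---the $\sup_\alpha$ condition in both cases, the $\inf_t$ condition in Case~(b), and the $v\in\{1,2\}$ sub-cases of Case~(a)---are short and essentially mechanical.
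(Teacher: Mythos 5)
Most of your outline is sound, and in places it coincides with the paper: your Case (b) computations (the identity $M_1(1,\delta^{(k;v)})=R^{\mathrm{BD}}(v,k,\epsilon)$ reducing the $\inf_t$ step to $\min_k R^{\mathrm{BD}}$, and the reduction of $\partial_\alpha M(\alpha,\delta^{(k;v)})\big|_{\alpha=1}\ge 0$ to $k(v^2-v+2)\ge v(v+1)$) are exactly the paper's, your $v\in\{1,2\}$ subcases are fine, and your direct concave-quadratic treatment of the $\sup_\alpha$ condition in Case (a) (critical point $\tfrac{v(e^\epsilon-1-w+v)}{w(e^\epsilon-1)}<v/w$, clipped at $0$) is a legitimate self-contained alternative to the paper's route, which instead passes through the simple-uBD error formula and the worst-case analysis of uRR. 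The genuine gap is the $\inf_t$ condition in Case (a)(ii), which you yourself flag as the main obstacle: you never prove $D_j\ge 0$ for all $j\in[2:v]$, you only propose to establish unimodality/convexity of $j\mapsto D_j$ and to locate a binding index, without carrying any of it out. With $\alpha^*$ (itself a function of $(w,v,\epsilon)$) substituted, the $D_j$ are unwieldy rational expressions, and nothing in your proposal verifies the claimed structure or identifies the binding $j$. Moreover, condition (ii) is only a sufficient condition for the saddle point, so there is no a priori reason the threshold $e^\epsilon\ge w-v+\sqrt{(w-1)(w-2)/2}$ coincides with the exact sign-change point of the binding edge derivative; expecting tightness there is not a reliable way to recover the stated constant.

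The paper closes precisely this step by a different device that you may want to adopt. At the candidate point the three quantities $M_i(\alpha^*,\delta^{(1;v)})/d_i$ are all equal (each equals $(e^\epsilon+v-1)^2/\big(w(e^\epsilon-1)^2\big)$), so the arithmetic--harmonic mean inequality $M(\alpha^*,t)\ge (d_1+d_2+d_3)^2\big/\sum_{i=1}^{3} d_i^2/M_i(\alpha^*,t)$ is tight at $t=\delta^{(1;v)}$; since each $d_i^2/M_i(\alpha^*,t)$ is \emph{linear} in $t$, the right-hand side is $(d_1+d_2+d_3)^2/\sum_k g_k t_k$, and minimizing $M(\alpha^*,\cdot)$ over $\Delta_v$ reduces to the scalar comparisons $g_1\ge g_k$, $k\in[2:v]$. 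Computing $g_1-g_k$ produces, up to a positive factor, $\zeta^2-2(w-1)\zeta+w(w-1)\tfrac{k-1}{k}$ with $\zeta=e^\epsilon+v-1$; monotonicity of $(k-1)/k$ pins the worst case at $k=2$, and the quadratic inequality $\zeta^2-2(w-1)\zeta+\tfrac{w(w-1)}{2}\ge 0$ holds exactly when $\zeta\ge w-1+\sqrt{(w-1)(w-2)/2}$, i.e.\ condition (ii). This linearization is what makes the exact threshold fall out; if you insist on the edge-derivative framework, you must supply the analogous explicit computation and sign analysis of $D_j$, which your proposal leaves undone.
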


The closed-form expression in Theorem~\ref{thm:closedFormExpPreciseForm} allows us to verify the optimality or suboptimality of existing schemes, as presented in the following corollary.
\begin{corollary}\label{cor:uRROptNaive}
  If $(w,v,\epsilon)$ corresponds to Case (a) of Theorem~\ref{thm:closedFormExpPreciseForm}, uRR \cite{murakami2019utility} is asymptotically optimal. 
  \end{corollary}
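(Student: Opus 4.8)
The plan is to show that uRR, the utility-optimized randomized response of \cite{murakami2019utility}, is an instance of a uBD scheme with parameters $(\alpha, t)$ equal to the saddle point $(\alpha^*, t^*)$ identified in Case (a), namely $t^* = \delta^{(1;v)}$ and $\alpha^* = \left[\tfrac{v(e^\epsilon - 1 - w + v)}{w(e^\epsilon - 1)}\right]_+$. Since Theorem~\ref{thm:ULDP_opt_PUT} establishes that every uBD scheme with parameter $(\alpha^*, t^*)$ is asymptotically optimal (this is the content of the achievability proof in Section~\ref{sec:achievabilityuBD}, which I may assume), it then immediately follows that uRR is asymptotically optimal whenever $(w, v, \epsilon)$ lies in Case (a). So the whole corollary reduces to an explicit matching of mechanisms.

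First I would recall the precise definition of uRR from \cite{murakami2019utility}: it partitions $\mathcal{Y}$ into protected symbols (one for each sensitive input, plus possibly a shared protected symbol) and invertible symbols (one for each non-sensitive input), and applies a randomized-response-style perturbation in which sensitive inputs are mapped into the protected block under an $\epsilon$-LDP constraint while non-sensitive inputs are reported truthfully with some probability and otherwise pushed into the protected block. Next I would write down the uBD scheme with $t = \delta^{(1;v)}$ — i.e., the mixture that places all its mass on the $1$-uniform block design — and show that the resulting privacy mechanism, after the distribution-decomposition-based relabeling used in Section~\ref{sec:achievabilityuBD}, coincides with uRR; in particular the free parameter $\alpha$ of the uBD family should correspond exactly to the probability with which a non-sensitive input is truthfully revealed versus hidden in the protected block. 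I would then verify that the score-based linear estimator attached to this uBD scheme produces the same asymptotic error as the natural unbiased estimator for uRR, so that the asymptotic optimality transfers cleanly.

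The main obstacle I expect is the bookkeeping in the mechanism-matching step: uRR as originally stated uses a specific output alphabet and a specific allocation of probability mass (and there are minor variants depending on whether $\mathcal{X}_{\mathrm{S}}$ has one shared protected symbol or one per sensitive element), so I would need to carefully reconcile the output-alphabet conventions of \cite{murakami2019utility} with the block-design output conventions of Definition~\ref{def:BD_mech} and the uBD construction. A clean way to sidestep much of this is to argue at the level of Fisher information: two ULDP mechanisms that induce the same Fisher information matrix $J_{P,Q}$ at every $P$ (equivalently, the same output distribution up to relabeling and merging of outputs that are statistically indistinguishable) yield the same asymptotically optimal error, so it suffices to check that uRR and the $t=\delta^{(1;v)}$ uBD mechanism with the appropriate $\alpha$ agree after such merging. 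The secondary point to be careful about is the boundary behavior of $\alpha^*$: when $e^\epsilon - 1 \le w - v$ the clipping forces $\alpha^* = 0$, which corresponds to the uRR variant that never truthfully reveals a non-sensitive input beyond what the protected block already affords, and I should confirm that this degenerate uBD scheme is still well-defined and still achieves $M(0, \delta^{(1;v)}) = M^*(w,v,\epsilon)$ via Theorem~\ref{thm:closedFormExpPreciseForm}.
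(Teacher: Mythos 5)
Your high-level strategy is the same as the paper's: identify uRR with a uBD scheme at the saddle-point parameters of Case (a) and then invoke the optimality of uBD schemes at a saddle point (Corollary~\ref{cor:uBDOpt} together with Theorem~\ref{thm:closedFormExpPreciseForm}); indeed the paper proves the corollary exactly by noting that uRR is the $(w,v,\epsilon,1)$-simple uBD scheme. However, your matching step contains a genuine misunderstanding of the uBD construction. The parameter $\alpha$ is \emph{not} a parameter of the privacy mechanism at all: a $(w,v,\epsilon,t)$-uBD mechanism (Definition~\ref{def:uBDMech}) is fully determined by $t$, $\epsilon$, and the chosen block designs, and $\alpha$ enters only through the estimator $\hat{P}^{(\alpha)}$, which is centered at $P^{(\alpha)}$ and uses the score at $P^{(\alpha)}$. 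In particular, for $t=\delta^{(1;v)}$ the probability that a non-sensitive input is reported truthfully is $(e^\epsilon-1)/(e^\epsilon+v-1)$, independent of $\alpha$, so your plan to verify that ``the free parameter $\alpha$ of the uBD family corresponds exactly to the probability with which a non-sensitive input is truthfully revealed'' would fail, and your reading of the clipped case $\alpha^*=0$ as ``the uRR variant that never truthfully reveals a non-sensitive input'' is incorrect: $\alpha^*=0$ changes nothing about the mechanism, it only says the least favorable distribution is $P^{(0)}$ and the estimator is centered there.

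The gap is repairable, and the repair is exactly what the paper does. First, the $(w,v,\epsilon,\delta^{(1;v)})$-uBD mechanism built from the complete block design with $k=1$ coincides with the uRR mechanism (this matching involves only $t$ and $\epsilon$, never $\alpha$). Second, by Proposition~\ref{prop:simpleuBDEstInvarOverAlpha}, when $t$ is the point mass $\delta^{(k;v)}$ with $k=1$ the estimators $\hat{P}^{(\alpha)}$ are identical for all $\alpha\in[0,1]$ and are given in closed form by \eqref{eq:simpleuBDEstSensitive}--\eqref{eq:simpleuBDEstNonSensitive}, which is the standard unbiased uRR estimator; hence the uRR scheme \emph{is} the $(w,v,\epsilon,\alpha^*,\delta^{(1;v)})$-uBD scheme for the Case (a) saddle point $(\alpha^*,\delta^{(1;v)})$, and Corollary~\ref{cor:uBDOpt} applies directly, with no boundary subtlety at $\alpha^*=0$. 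Your Fisher-information fallback is also not quite enough on its own: equality of Fisher information handles the mechanism side but, since asymptotic optimality in this paper is a property of a scheme $(Q,\hat{P})$, you still need the explicit estimator identification (or an error computation such as \eqref{eq:uBDMSEAtPAlpha}) to conclude that uRR with its own estimator attains $M^*(w,v,\epsilon)$.
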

 \begin{corollary}\label{cor:uSSStrictSuboptimal}
If  $(w,v,\epsilon)$ corresponds to Case (b) of Theorem~\ref{thm:closedFormExpPreciseForm} with strict inequality with respect to $\epsilon$, i.e., $v \geq 4$ and $\epsilon <\ln\sqrt{\tfrac{(v-1)(v-2)}{2}}$, the uSS scheme is suboptimal. In other words, for the uSS scheme $(Q_{\mathrm{uSS},k}, \hat{P}_{\mathrm{uSS},k})$ with parameter $k\in [v-1]$,  
    \begin{align}
        M^*(w,v,\epsilon)<\min_{k \in [1:v-1]} R(Q_{\mathrm{uSS},k}, \hat{P}_{\mathrm{uSS},k}).
    \end{align}
\end{corollary}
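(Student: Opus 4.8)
The plan is to combine the closed-form formula of Theorem~\ref{thm:closedFormExpPreciseForm}(b) with an explicit handle on the uSS error, reducing the statement to a strict comparison carried out separately for each subset size $k$.

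First I would instantiate Case~(b). Fixing $k^*\in K^*(v,\epsilon)\cap[2:v-1]$ (nonempty because $\epsilon\le E(v,1)$) and substituting $\alpha^*=1$, $t^*=\delta^{(k^*;v)}$ into $M$, the factor $1-\alpha^*$ annihilates $M_2$ and $M_3$, while the identity $\alpha^* k(e^\epsilon-1)+v=ke^\epsilon+v-k$ at $\alpha^*=1$ turns $M_1(\alpha^*,t^*)$ into $\tfrac{(v-1)^2(k^*e^\epsilon+v-k^*)^2}{vk^*(v-k^*)(e^\epsilon-1)^2}=R^{\mathrm{BD}}(v,k^*,\epsilon)$. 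By Theorem~\ref{thm:BD_opt} this equals $\min_{k\in[v-1]}R^{\mathrm{BD}}(v,k,\epsilon)=M^*(v,v,\epsilon)$, so Theorem~\ref{thm:closedFormExpPreciseForm}(b) yields $M^*(w,v,\epsilon)=M^*(v,v,\epsilon)=\min_{k\in[v-1]}R^{\mathrm{BD}}(v,k,\epsilon)$; that is, in Case~(b) the lower bound of Remark~\ref{rmk:ULDPtoLDP} is attained.

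Next I would bring in the uSS error. From the construction of the uSS mechanism out of subset selection \cite{he2025addressing} — protected outputs given by size-$k$ subsets, plus a dedicated invertible output for each non-sensitive symbol — one computes in closed form the worst-case MSE $R(Q_{\mathrm{uSS},k},\hat P_{\mathrm{uSS},k})$ as a function of $(w,v,k,\epsilon)$, the natural decomposition being the block-design error $R^{\mathrm{BD}}(v,k,\epsilon)$ on the sensitive part plus the MSE contributed by the $w-v\ge1$ non-sensitive coordinates. The key claim to establish is that, when $v\ge4$ and $\epsilon<E(v,1)=\ln\sqrt{\tfrac{(v-1)(v-2)}{2}}$, this total strictly exceeds $R^{\mathrm{BD}}(v,k,\epsilon)$ for \emph{every} $k\in[v-1]$. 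Granting this, the corollary is immediate: picking $k'\in\argmin_{k}R(Q_{\mathrm{uSS},k},\hat P_{\mathrm{uSS},k})$ gives $\min_{k}R(Q_{\mathrm{uSS},k},\hat P_{\mathrm{uSS},k})=R(Q_{\mathrm{uSS},k'},\hat P_{\mathrm{uSS},k'})>R^{\mathrm{BD}}(v,k',\epsilon)\ge\min_{k}R^{\mathrm{BD}}(v,k,\epsilon)=M^*(w,v,\epsilon)$.

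The main obstacle is precisely the per-$k$ strict domination. One must pin down how uSS splits probability between the protected (size-$k$ subset) outputs and the invertible outputs — an allocation forced by Definition~\ref{def:ULDP} and the bound \eqref{eq:ULDP_sensitive} once $k$ is fixed — compute the induced worst-case MSE exactly, and verify that the non-sensitive term stays strictly positive exactly on $\epsilon<E(v,1)$. The intuition is that for such small $\epsilon$ the non-sensitive inputs must place enough mass on the protected region that they cannot be estimated noiselessly, so the contribution is bounded away from zero; but making this uniform in $k$ — in particular for $k=1$, which is uRR and, by Corollary~\ref{cor:uRROptNaive}, is optimal in Case~(a) though not here — is the delicate point, and this is where the strict inequality $\epsilon<E(v,1)$ (together with $v\ge4$, which forces $k^*\ge2$ and keeps the relevant subset sizes in the ``contaminated'' regime) is used.
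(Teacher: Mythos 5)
Your overall reduction is the same as the paper's: in Case (b) you use $M^*(w,v,\epsilon)=M^*(v,v,\epsilon)=\min_{k}R^{\mathrm{BD}}(v,k,\epsilon)$ (Remark~\ref{rmk:ULDPtoLDPTight}), and then try to beat this bound by comparing the uSS error with $R^{\mathrm{BD}}(v,k,\epsilon)$ for each $k$. The gap is that your central step—``the uSS worst-case error strictly exceeds $R^{\mathrm{BD}}(v,k,\epsilon)$ for every $k\in[v-1]$''—is exactly the part you leave unproven, and your sketch of how to prove it points in the wrong direction for the one value of $k$ where it is delicate. The paper evaluates the exact uSS error formula of \cite{he2025addressing} at $P$ uniform on $\mathcal{X}_{\mathrm{S}}$ and finds that the excess over $R^{\mathrm{BD}}(v,k,\epsilon)$ equals $\tfrac{2(k-1)}{v-k}$: strictly positive for $k\geq 2$ \emph{for all} $\epsilon$ (so no ``contaminated regime'' argument is needed there), but exactly zero for $k=1$. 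Hence per-$k$ strict domination over $R^{\mathrm{BD}}$ cannot be obtained from this natural evaluation point at $k=1$, and your proposed fix—that the non-sensitive coordinates contribute a strictly positive term ``exactly on $\epsilon<E(v,1)$''—misattributes the role of the hypothesis: whatever extra variance uSS incurs on the $w-v$ non-sensitive symbols has nothing to do with the threshold $E(v,1)$, and you give no computation that would quantify it for worst-case $P$.

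The paper's way around $k=1$ is different and simpler: since $\epsilon<E(v,1)$ strictly, $1\notin K^*(v,\epsilon)$, so $R^{\mathrm{BD}}(v,1,\epsilon)>\min_{k}R^{\mathrm{BD}}(v,k,\epsilon)=M^*(w,v,\epsilon)$, and the weak bound $R(Q_{\mathrm{uSS},1},\hat{P}_{\mathrm{uSS},1})\geq R^{\mathrm{BD}}(v,1,\epsilon)$ already suffices; the strictness for $k=1$ comes from the block-design side, not from uSS-versus-BD. So your chain of inequalities is salvageable, but as written it rests on an unestablished claim (strict domination at $k=1$), and your stated decomposition of the uSS error (``$R^{\mathrm{BD}}$ on the sensitive part plus a non-sensitive contribution'') does not match the actual formula, which already carries the excess $\tfrac{2(k-1)}{v-k}$ with all mass on sensitive symbols. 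To complete your route you would need either the paper's $k=1$ argument via $K^*(v,\epsilon)$, or an explicit evaluation of the uSS error at a distribution with mass on $\mathcal{X}_{\mathrm{N}}$ showing a strictly positive excess at $k=1$—neither of which is currently in your proposal.
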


Corollary~\ref{cor:uRROptNaive} is a special case of Corollary~\ref{cor:uBDk2optimal} and Corollary~\ref{cor:uSSStrictSuboptimal} is proved in Appendix~\ref{app:compBwuSS}. Corollary~\ref{cor:uRROptNaive} follows from the observation that our uBD scheme with the optimal parameter coincides with uRR in the corresponding regime, resembling the fact that the optimal BD scheme in the low-privacy regime reduces to RR \cite{warner1965randomized}. 
Since the proposal of uRR in \cite{murakami2019utility}, this is the first proof of its optimality in a certain regime.
Corollary~\ref{cor:uSSStrictSuboptimal} shows that uSS scheme, which is the best known previous scheme, is suboptimal, thereby demonstrating a strict gap between uBD and the previously known ULDP schemes. 

We conclude this section with the following remark on the relation with the trivial converse bound in Remark~\ref{rmk:ULDPtoLDP}.
\begin{remark}\label{rmk:ULDPtoLDPTight}
    In the regime of Case (b) in Theorem~\ref{thm:closedFormExpPreciseForm}, it can be easily seen that
    \begin{align}
        M(\alpha^*,t^*)
        =R^{\mathrm{BD}}(v,k^*,\epsilon)=M^*(v,v,\epsilon),
    \end{align}
    where $R^{\mathrm{BD}}$ is in \eqref{eq:defRBD}. Thus, the converse bound in Remark~\ref{rmk:ULDPtoLDP} is tight for $v \geq 4$ and $\epsilon \leq \ln\sqrt{\tfrac{(v-1)(v-2)}{2}}$.
\end{remark}

\section{Converse}\label{sec:converse}
The goal of this section is to prove the converse part of Theorem \ref{thm:ULDP_opt_PUT}, that is, to show
\begin{align}
    M^*(w,v,\epsilon) \geq \sup_{\alpha \in [0,1]}\inf_{t \in \Delta_v} M(\alpha,t).
\end{align}
This section consists of three subsections. Section~\ref{subsec:UnifAsympCRLB} provides a generalized uniform version of asymptotic CRLB \cite{ye17_opt_PUT_l2}, and Section~\ref{subsec:extremalULDPMech} provides the notion of \emph{extremal ULDP mechanisms} analogous to extremal LDP mechanisms \cite{kairouz2016extremal}. Finally, Section~\ref{subsec:sepFISubspaces} finishes the converse proof by using a \emph{distribution decomposition} argument, which also plays an important role in the construction of proposed schemes.

\subsection{Uniform Asymptotic Cram\'er-Rao Lower Bound}\label{subsec:UnifAsympCRLB}
As a first step for the converse proof, we show the generalization of the uniform result in \cite{ye17_opt_PUT_l2}, which we call the \textbf{uniform asymptotic Cram\'er-Rao lower bound (uniform asymptotic CRLB)}. 
Our version of uniform asymptotic CRLB covers the general private discrete distribution estimation, not only private estimation under LDP or ULDP constraint, and thus we expect that it has possible applications to other private estimation problems.
Moreover, we provide a much shorter proof for the uniform asymptotic CRLB compared to \cite{ye17_opt_PUT_l2}.

First, we present the necessary definitions for CRLB.
\begin{definition}\label{def:scoreFunc}
    Let $w \in \mathbb{N}_{\geq 2}$ be given. Let $Q$ be a conditional distribution from $[w]$ to a finite set $\mathcal{Y}$, and $P \in \Delta_w^\circ$.
    For each $h \in \vec{\Delta}_w$, the \textbf{score} of $Q$ at $P$ in direction $h$ is a function $\eta_{P,Q;h}:\supp(Q) \rightarrow \mathbb{R}$, defined by
    \begin{align}
        \eta_{P,Q;h}(y) &= \frac{d}{d\delta}(\log Q_{P+\delta h}(y))|_{\delta=0}\\
        &= \langle \eta_{P,Q}(y), h \rangle.
    \end{align}
    Here, $\eta_{P,Q}:\supp(Q) \rightarrow \mathbb{R}^w$ is the \textbf{score vector}, given by
    \begin{align}
        (\eta_{P,Q}(y))_x = \frac{Q(y|x)}{Q_P(y)}, \quad \forall x \in [w]. \label{eq:defScoreFuncDef}
    \end{align}
\end{definition}

\begin{definition}\label{def:FI}
        The \textbf{Fisher information} of $Q$ at $P$ is a symmetric positive semidefinite bilinear map $\mathcal{J}_{P,Q}:\vec{\Delta}_w^2 \rightarrow \mathbb{R}$, defined as 
    \begin{align}
        \mathcal{J}_{P,Q}(h, h') = \cov_{Y \sim Q_P}(\eta_{P,Q;h}(Y), \eta_{P,Q;h'}(Y)).
    \end{align}
    If an orthonormal basis $\mathcal{B}=(h^{(1)},\cdots,h^{(w-1)})$ of $\vec{\Delta}_w$ is given, the \textbf{Fisher information matrix} with respect to such basis is the $(w-1) \times (w-1)$ matrix $J_{P,Q;\mathcal{B}}$ such that $(J_{P,Q;\mathcal{B}})_{i,j} = \mathcal{J}_{P,Q}(h^{(i)}, h^{(j)})$. We omit $\mathcal{B}$ from the subscript if it is clear from the context.
\end{definition}

\begin{remark}\label{rmk:FIMDiffBases}
    Fisher information matrices with respect to different orthonormal bases are conjugate via an orthogonal matrix. Therefore, many expressions involving trace of a function of Fisher information matrix (such as $\trInv(J_{P,Q})$ in \eqref{eq:standAsympCRB}) are invariant under the choice of basis. If so, we may omit the subscript $\mathcal{B}$ without ambiguity.
\end{remark}

Next, we present the notion of degradedness \cite{blackwellEquivalentComparisonsExperiments1953}, which gives a systematic way of reducing the space of privacy mechanisms to be optimized, as \cite{ye17_opt_PUT_l2} reduced the optimization over all LDP mechanisms to that over extremal LDP mechanisms.
Intuitively speaking, we say that $Q$ degrades $\tilde{Q}$ if $\tilde{Q}$ can be simulated by applying some post-processing to the output of $Q$. The formal definitions are as follows.

\begin{definition}
For finite sets $\mathcal{X},\mathcal{Y},\tilde{\mathcal{Y}}$ and two conditional distributions $Q:\mathcal{X} \rightarrow \mathcal{P}(\mathcal{Y})$ and $\tilde{Q}:\mathcal{X} \rightarrow \mathcal{P}(\tilde{\mathcal{Y}})$, we say $Q$ \textbf{degrades} $\tilde{Q}$, and denote $Q \sqsupseteq \tilde{Q}$, if there exists a conditional distribution $T:\mathcal{Y} \rightarrow \mathcal{P}(\mathcal{\tilde{Y}})$ such that $\tilde{Q}(\tilde{y}|x)=\sum_{y \in \mathcal{Y}}Q(y|x)T(\tilde{y}|y)$ for every $x \in \mathcal{X}$ and $\tilde{y} \in \tilde{\mathcal{Y}}$. 
\end{definition}
\begin{definition}
    For a finite set $\mathcal{X}$ and two classes $\mathcal{Q},\tilde{\mathcal{Q}}$ of conditional distributions from $\mathcal{X}$ to some finite set, we say $\mathcal{Q}$ \textbf{degrades} $\tilde{\mathcal{Q}}$, and denote $\mathcal{Q} \sqsupseteq \tilde{\mathcal{Q}}$ if for every $\tilde{Q} \in \tilde{\mathcal{Q}}$, there exists $Q \in \mathcal{Q}$ which degrades $\tilde{Q}$.
\end{definition}

Now, we state the uniform asymptotic CRLB.
\begin{theorem}[Uniform Asymptotic Cram\'er-Rao Lower Bound]\label{thm:unifAsympCRLB}
Let $w\in \mathbb{N}_{\geq 2}$ be given, and let $\mathcal{X}=[w]$. Let $\tilde{\mathcal{Q}}$ be a class of conditional distributions from $\mathcal{X}$ to some finite set. Suppose that there exist a \emph{fixed} finite set $\mathcal{Y}$ and a \emph{compact} subset $\mathcal{Q}$ of conditional distributions from $\mathcal{X}$ to $\mathcal{Y}$, which degrades $\tilde{\mathcal{Q}}$. Then, we have
\begin{equation}
    \lim_{n \rightarrow \infty} \inf\limits_{\tilde{Q} \in \tilde{\mathcal{Q}}}\inf\limits_{\hat{P}}n\cdot R_{n}(\tilde{Q},\hat{P}) \geq 
    \sup_{P \in \Delta_w^\circ} \inf_{Q \in \mathcal{Q}} \trInv\left(J_{P,Q}\right).
\end{equation}
\end{theorem}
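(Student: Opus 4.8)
The plan is to prove the uniform asymptotic CRLB in two stages: first reduce the infimum over $\tilde{\mathcal{Q}}$ to the infimum over the compact family $\mathcal{Q}$ using a data-processing argument, and then establish a genuinely \emph{uniform} (over $Q \in \mathcal{Q}$) asymptotic CRLB for the compact family via a local-asymptotic-normality argument. For the first stage, I would exploit the following monotonicity of estimation risk under degradation: if $Q \sqsupseteq \tilde Q$ via channel $T$, then any estimator $\tilde{\hat P}_n$ acting on $\tilde Y^n \sim \tilde Q_P^n$ can be simulated from $Y^n \sim Q_P^n$ by first applying $T^{\otimes n}$ and then $\tilde{\hat P}_n$, so $\inf_{\hat P} R_n(Q,\hat P) \le \inf_{\hat P} R_n(\tilde Q,\hat P)$. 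Hence $\inf_{\tilde Q \in \tilde{\mathcal{Q}}} \inf_{\hat P} R_n(\tilde Q,\hat P) \ge \inf_{Q \in \mathcal{Q}} \inf_{\hat P} R_n(Q,\hat P)$, so it suffices to lower bound the right-hand side. (One must also check the Fisher information is monotone under degradation so the final RHS is consistent, but that is not strictly needed for the bound itself — only the risk monotonicity is.)

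For the second stage, I would prove
\begin{equation*}
  \lim_{n\to\infty} \inf_{Q \in \mathcal{Q}} \inf_{\hat P} n \cdot R_n(Q,\hat P) \ge \sup_{P \in \Delta_w^\circ} \inf_{Q \in \mathcal{Q}} \trInv(J_{P,Q}).
\end{equation*}
Fix an interior point $P_0 \in \Delta_w^\circ$ and an orthonormal basis $\mathcal{B}$ of $\vec\Delta_w$; the goal is a lower bound of $\inf_{Q} \trInv(J_{P_0,Q})$ on the left, uniformly in $Q$. The standard route is a Bayesian/van Trees (Bayesian Cramér–Rao) argument: put a smooth prior supported on a small ball $P_0 + n^{-1/2}\mathcal{B}$-neighborhood inside $\Delta_w^\circ$, and apply the van Trees inequality to get, for every $Q$,
\begin{equation*}
  \inf_{\hat P} n \cdot \mathbb{E}_{P \sim \pi_n}\big[\|\hat P_n - P\|_2^2\big] \ge \trInv\!\big(J_{P_0,Q} + o(1)\big) - o(1),
\end{equation*}
where the $o(1)$ terms are controlled \emph{uniformly in $Q \in \mathcal{Q}$} because $\mathcal{Q}$ is a compact subset of channels into a \emph{fixed finite} output alphabet $\mathcal{Y}$ — this is exactly where the two compactness/finiteness hypotheses are used. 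Uniform continuity of $(P,Q) \mapsto J_{P,Q}$ on a compact neighborhood of $P_0$ times $\mathcal{Q}$ (the score vectors \eqref{eq:defScoreFuncDef} are ratios of entries bounded away from zero on such a compact set, since $Q_P(y) \ge \min_x Q(y|x) \cdot \min_x P_x$ — care is needed when some columns of $Q$ vanish, handled by restricting to $\supp(Q)$) then lets us pass to the limit and take $\inf_Q$ inside. Finally, since worst-case MSE dominates average MSE, $R_n(Q,\hat P) \ge \mathbb{E}_{P\sim\pi_n}[\cdots]$, and taking $\sup$ over $P_0 \in \Delta_w^\circ$ gives the claim.

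The main obstacle I anticipate is making the LAN/van Trees remainder terms uniform over the infinite family $\mathcal{Q}$ rather than for a single fixed $Q$; the classical proofs in \cite{vaartAsymptoticStatistics1998, lecamAsymptoticsStatistics2000} and the statement \eqref{eq:standAsympCRB} fix $Q$ first. The compactness of $\mathcal{Q}$ inside the simplex of channels $\mathcal{X}\to\mathcal{P}(\mathcal{Y})$ for a fixed finite $\mathcal{Y}$ is precisely what rescues this: all the quantities entering the remainder (third derivatives of log-likelihoods, covering numbers of the local parameter set, moments of the score) are continuous functions of $Q$ on a compact set and hence uniformly bounded. A secondary technical point is degenerate Fisher information: if $J_{P_0,Q}$ is singular for some $Q$ then $\trInv(J_{P_0,Q}) = \infty$ by convention, and the bound is vacuously consistent, but one should phrase the van Trees step so it still produces the correct $+\infty$ (e.g. by restricting the prior to a one-dimensional sub-family along a null direction of $J_{P_0,Q}$), which requires a short separate argument. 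I expect the paper's "much shorter proof" to come from packaging the van Trees inequality cleanly rather than redoing the full LAN expansion of \cite{ye17_opt_PUT_l2}.
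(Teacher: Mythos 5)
Your overall route is essentially the paper's: reduce to the compact class $\mathcal{Q}$ via degradation, localize with a shrinking prior and apply a Bayesian CRLB (van Trees), use compactness of $\mathcal{Q}$ and the fixed finite $\mathcal{Y}$ to make the remainder uniform in $Q$, and finally exchange the limit with $\inf_Q$. Your stage-one reduction (simulate any estimator for $\tilde{Q}$ from $Q$ through $T^{\otimes n}$, then Rao--Blackwellize to stay within deterministic estimators) is a slightly different packaging than the paper, which instead folds the degradation into the Bayesian CRLB via monotonicity of Fisher information ($J_{P,\tilde{Q}} \preceq J_{P,Q}$); both are valid and equivalent in effect.

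There is, however, one concrete step that fails as written: the choice of an $n^{-1/2}$-radius prior. The van Trees bound has the form $\trInv\bigl(n\,\mathbb{E}_{\pi_n}[J_{P,Q}] + I(\pi_n)\bigr)$, and a prior supported on a ball of radius $\delta_n$ has prior Fisher information $I(\pi_n) = \Theta(\delta_n^{-2})$. With $\delta_n = n^{-1/2}$ this is $\Theta(n)$, the same order as the data term, so after multiplying by $n$ you obtain $\trInv\bigl(J_{P_0,Q} + \Theta(1)\cdot \mathbf{I}_{w-1}\bigr)$, which is strictly smaller than $\trInv(J_{P_0,Q})$; your displayed inequality with the $o(1)$ perturbation does not follow. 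You need $\delta_n \to 0$ with $n\delta_n^{2} \to \infty$; the paper takes $\delta_n = n^{-1/4}$ (cosine-squared prior), which makes $I(\pi_n)=\Theta(n^{1/2})=o(n)$ and, via the uniform bound $\mathcal{J}_{P',Q} \preceq \bigl(\max_x P_x/P'_x\bigr)\mathcal{J}_{P,Q}$ of Proposition~\ref{prop:FIBwTwoDistIneq}, gives $\mathbb{E}_{\pi_n}[J_{P,Q}] \preceq (1+Cn^{-1/4})J_{P_0,Q}$ with $C$ independent of $Q$ — this explicit equicontinuity inequality is what replaces your appeal to uniform continuity of $(P,Q)\mapsto J_{P,Q}$. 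Separately, "pass to the limit and take $\inf_Q$ inside" is not automatic even with uniformity of the perturbation, because $\trInv$ takes extended real values; the paper handles this by noting the regularized bound increases monotonically to $\trInv(J_{P_0,Q})$ and applying Dini's theorem after composing with a homeomorphism of $[-\infty,\infty]$ onto $[0,1]$ (a subsequence/lower-semicontinuity argument would also do). Your singular-$J$ concern needs no extra sub-family construction once this regularized-limit argument is in place, since singular channels only contribute $+\infty$ to the infimum.
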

We prove Theorem~\ref{thm:unifAsympCRLB} by refining the proof of asymptotic CRLB in \cite{polyanskiyInformationTheoryCoding2024}.
\begin{proof}[Proof of Theorem \ref{thm:unifAsympCRLB}]
    It suffices to show that for each fixed $P \in \Delta_w^\circ$, we have
    \begin{align}
        \lim_{n \rightarrow \infty} \inf\limits_{\tilde{Q} \in \tilde{\mathcal{Q}}}\inf\limits_{\hat{P}}n\cdot R_{n}(\tilde{Q},\hat{P}) \geq \inf_{Q \in \mathcal{Q}} \trInv\left(J_{P,Q}\right). \label{eq:unifAsympCRLBFixedDist}
    \end{align}
    Hence, from now on, we fix $P \in \Delta_w^\circ$.
    
    \paragraph{Instantiation of Proof in \cite{polyanskiyInformationTheoryCoding2024}}
    The first half of the proof is the same as the proof of asymptotic CRLB in \cite{polyanskiyInformationTheoryCoding2024}. For the detailed reason for each step, refer to \cite[Section 29.1]{polyanskiyInformationTheoryCoding2024}.
    
    Since $P \in \Delta_w^\circ$, there exists $D>0$ such that whenever $h \in \vec{\Delta}_w$, $\Vert h \Vert \leq D$, we have $P+h \in \Delta_w^\circ$.
    Let us choose an orthonormal basis $\mathcal{B}=(h^{(1)},\cdots,h^{(w-1)})$ of $\vec{\Delta}_w$, and for each $\theta \in \mathbb{R}^{w-1}$, $\Vert \theta \Vert \leq R$, let $\tilde{P}^{(\theta)}:= P+\sum_{i=1}^{w-1} \theta_i h^{(i)} \in \Delta_w^\circ$. 
    Then, whenever $n \geq (w-1)^2/D^4$, we can bound the worst-case MSE by the average MSE,
    \begin{align}
        R_n(\tilde{Q},\hat{P}) \geq \mathbb{E}_{\Theta \sim \phi_n}\left[R_n(\tilde{Q},\hat{P};\tilde{P}^{(\Theta)})\right],
    \end{align}
    where under $\Theta \sim \phi_n$, each of $\Theta_1,\cdots,\Theta_{w-1}$ is i.i.d., supported on $[-n^{-1/4},n^{-1/4}]$, and following the PDF $f(z)=n^{1/4}\cos^2\left(\frac{n^{1/4}\pi z}{2}\right)$.\footnote{
        This PDF is shown \cite{polyanskiyInformationTheoryCoding2024,barnesFisherInformationLocal2020} to minimize the `prior Fisher information' over all PDFs supported on $[-n^{-1/4},n^{-1/4}]$.
    }
    Then, we apply another variant of CRLB, called \textbf{Bayesian CRLB}, resulting that whenever $Q \sqsupseteq \tilde{Q}$ and $n \geq (w-1)^2/D^4$, we have
    \begin{multline}
        \inf_{\hat{P}} \mathbb{E}_{\Theta \sim \phi_n}\left[R_n(\tilde{Q},\hat{P};\tilde{P}^{(\Theta)})\right] \geq\\
        \trInv(n\cdot \mathbb{E}_{\Theta \sim \phi_n}[J_{{\tilde{P}^{(\Theta)}}, Q}] + \pi^2 n^{1/2} \mathbf{I}_{w-1}). \label{eq:BCRLBResult}
    \end{multline}

    \paragraph{New Techniques for the Uniform Result}
    After that, to derive the uniform result, we establish a technical proposition which shows that $\{P \mapsto J^{P,Q}\}_{Q}$ is \emph{equicontinuous} in some sense. The formal statement is in Proposition~\ref{prop:FIBwTwoDistIneq} in Appendix~\ref{appsubsec:proofPropFIBwTwoDistIneq}.
    Using this proposition and the $n^{-1/4}$ order of the diameter of the support of $\Theta$, we derive in Appendix \ref{appsubsec:proofFIExpectationUnifBound} that there exist constants $C>0$ and $N \in \mathbb{N}$ (which only depend on $(w,P)$) such that for every $n \geq N$ and every conditional distribution $Q$ from $[w]$, we have
    \begin{align}
        \mathbb{E}_{\Theta \sim \phi_n}[J_{{\tilde{P}^{(\Theta)}}, Q}] \preceq (1+ C n^{-1/4})J_{P, Q}.\label{eq:FIExpectationUnifBound}
    \end{align}
    Using a standard fact \cite{bikchentaevTraceInequalitiesMatrices2024} that $A \preceq B$ implies $A^{-1} \succeq B^{-1}$ for positive definite matrices $A,B$, we obtain that
    \begin{equation}
        \inf\limits_{\hat{P}} R_{n}(\tilde{Q},\hat{P}) \geq \frac{1}{n}\overline{R}_n(Q) \label{eq:unifAsympCRLBFixedDistFiniteN}
    \end{equation}
    for every $Q \sqsupseteq \tilde{Q}$, where
    \begin{multline}
        \overline{R}_n(Q) =
        \trInv((1+ C n^{-1/4})J_{P, Q} + \pi^2 n^{-1/2} \mathbf{I}_{w-1}).
    \end{multline}
    Thus, we have
    \begin{align}
        \lim_{n \rightarrow \infty} \inf\limits_{\tilde{Q} \in \tilde{\mathcal{Q}}}\inf\limits_{\hat{P}}n\cdot R_{n}(\tilde{Q},\hat{P}) \geq \lim_{n \rightarrow \infty}\inf_{Q \in \mathcal{Q}} \overline{R}_n(Q). \label{eq:UnifAsympCRLBLimit}
    \end{align}
    
    Note that $\overline{R}_n(Q) \uparrow \trInv\left(J_{P,Q}\right)$ as $n \rightarrow \infty$. 
    Using compactness of $\mathcal{Q}$ and Dini's theorem \cite[Thm. 7.13]{rudin1976principles}, we show in Appendix~\ref{appsubsec:proofPropChangeLimitAndInfCompact} that
    \begin{align}
        \lim_{n \rightarrow \infty}\inf_{Q \in \mathcal{Q}} \overline{R}_n(P,Q) = \inf_{Q \in \mathcal{Q}} \trInv\left(J_{P,Q}\right). \label{eq:changeLimitAndInfCompact}
    \end{align}
    This finishes the proof of Theorem~\ref{thm:unifAsympCRLB}.
\end{proof}

\subsection{Extremal ULDP Mechanisms}\label{subsec:extremalULDPMech}
As a next step, to utilize the uniform asymptotic CRLB in the ULDP setup, we find a certain \emph{compact} set of conditional distributions which degrades the class of all ULDP mechanisms. In short, we find such a set analogous to the set of extremal LDP mechanisms \cite{kairouz2016extremal}, which we call the set of \emph{extremal ULDP mechanisms}.

\begin{definition}\label{def:ULDPExtremal}
Let $1 \leq v < w$, $\epsilon>0$ be given, and $\mathcal{X} = [w]$, $\mathcal{X}_{\mathrm{S}} = [v]$. 
A conditional distribution $Q$ is called an \textbf{extremal $(v,\epsilon)$-ULDP mechanism} if the following holds:
\begin{itemize}
    \item $Q:\mathcal{X} \rightarrow \mathcal{P}(\mathcal{Y}_{\mathrm{P}} \sqcup \mathcal{Y}_{\mathrm{I}})$, where $\mathcal{Y}_{\mathrm{P}} = 2^{[v]} \backslash \{\emptyset\}$ and $\mathcal{Y}_{\mathrm{I}} = \{\{x\}: x \in [v+1:w]\}$.
    \item There exists $\gamma:\mathcal{Y}_{\mathrm{P}} \rightarrow \mathbb{R}_{\geq 0}$ such that if $y \in \mathcal{Y}_{\mathrm{P}}$,
    \begin{align}
        Q(y|x) 
        &=\begin{cases}
            \gamma(y) e^\epsilon & (\text{if } x \in y)\\
            \gamma(y)            & (\text{if } x \notin y)
        \end{cases}.
    \end{align}
    \item If $y \in \mathcal{Y}_{\mathrm{I}}$,
    \begin{align}
        Q(y|x) = \left(1-\sum_{y' \in \mathcal{Y}_{\mathrm{P}}} \gamma(y')\right)\mathbbm{1}(x \in y).
    \end{align}
\end{itemize}
We define $\mathcal{Q}_{w,v,\epsilon}^{\mathrm{E}}$ to be the set of all extremal $(v,\epsilon)$-ULDP mechanisms.
\end{definition}
We can equivalently describe an extremal ULDP mechanism $Q$ by its stochastic matrix as follows. The visualization of the stochastic matrix is in Figure~\ref{fig:extremal}. We identify $\mathcal{Y}_{\mathrm{P}} \simeq [1:2^{v}-1]$ via a bijection $y \leftrightarrow \sum_{i=1}^{v}2^{i-1} y_i$, and identify $\mathcal{Y}_{\mathrm{I}} \simeq [2^v:2^v+w-v-1]$ via $\{x\}\leftrightarrow 2^v+x-v-1$. First, we bring the \textbf{staircase pattern matrix} in \cite{kairouz2016extremal}:
\begin{equation}
        S^{(v)} := \mathbf{1}_{v \times 2^v-1} + (e^\epsilon-1) \mathrm{BIN}^{(v)},
    \end{equation}
where $\mathrm{BIN}^{(v)} \in \{0,1\}^{v \times (2^v-1)}$ is the matrix whose $y$'th column corresponds to the binary representation of $y$.
Then, we set $Q_{\mathcal{X}_{\mathrm{S}},\mathcal{Y}_{\mathrm{P}}} = S^{(v)} \diag(\gamma)$, which is precisely an extremal $\epsilon$-LDP mechanism \cite{kairouz2016extremal}.
Next, we set $Q_{\mathcal{X}_{\mathrm{N}},\mathcal{Y}_{\mathrm{P}}}=\mathbf{1}_{(w-v)\times (2^v-1)} \diag(\gamma)$. 
Intuitively, for given $Q_{\mathcal{X}_{\mathrm{S}},\mathcal{Y}_{\mathrm{P}}}$, we set each element of $Q_{\mathcal{X}_{\mathrm{N}},\mathcal{Y}_{\mathrm{P}}}$ to be the minimum possible value while satisfying the ULDP constraint, so that the probability of returning invertible data given non-sensitive data is maximized.
Finally, we set $Q_{\mathcal{X}_{\mathrm{N}},\mathcal{Y}_{\mathrm{I}}}$ to be a constant multiple of an identity matrix, where the constant $f(\gamma)=1-\sum_{y' \in \mathcal{Y}_{\mathrm{P}}} \gamma(y')$ is determined by $\gamma$ so that for each $x \in \mathcal{X}_{\mathrm{N}}$, the $x$'th row of $Q$ is summed up to one. Also, we set $Q_{\mathcal{X}_{\mathrm{S}},\mathcal{Y}_{\mathrm{I}}}$ to be the all-zero matrix.
{
\begin{figure}[h!]
    \centering
    \includestandalone[width = 0.9\linewidth]{figs/extremal}
    \caption{Visualization of staircase pattern matrix and extremal $(v,\epsilon)$-ULDP mechanism.}
    \label{fig:extremal}
\end{figure}
}

Clearly, every extremal $(v,\epsilon)$-ULDP mechanism is a $(v,\epsilon)$-ULDP mechanism. Also, note that among previously proposed mechanisms, uRR \cite{murakami2019utility} and uHR \cite{acharya2020context} are equivalent to some extremal ULDP mechanisms.\footnote{
Here, the equivalence is up to relabeling $\mathcal{Y}$ and the removal / addition of all zero columns of $Q$.
}
It turns out that the set of all extremal ULDP mechanisms degrades the class of all ULDP mechanisms.

\begin{theorem}\label{thm:ULDPExtremal}
For every $1 \leq v < w$ and $\epsilon>0$, we have $\mathcal{Q}_{w,v,\epsilon}^{\mathrm{E}} \sqsupseteq \mathcal{Q}_{w,v,\epsilon}$.
\end{theorem}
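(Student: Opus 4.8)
The plan is to show that every $(v,\epsilon)$-ULDP mechanism $\tilde{Q}:\mathcal{X}\to\mathcal{P}(\tilde{\mathcal{Y}})$ is degraded by some extremal $(v,\epsilon)$-ULDP mechanism $Q$, by constructing an explicit post-processing channel $T:\mathcal{Y}\to\mathcal{P}(\tilde{\mathcal{Y}})$ with $\tilde{Q}(\tilde{y}|x)=\sum_{y}Q(y|x)T(\tilde{y}|y)$. I would first decompose the output alphabet $\tilde{\mathcal{Y}}$ of $\tilde{Q}$ according to Definition~\ref{def:ULDP} into protected outputs $\tilde{\mathcal{Y}}_{\mathrm{P}}$ and invertible outputs $\tilde{\mathcal{Y}}_{\mathrm{I}}$, and further split $\tilde{\mathcal{Y}}_{\mathrm{I}}=\bigsqcup_{x\in\mathcal{X}_{\mathrm{N}}}\tilde{\mathcal{Y}}_{\mathrm{I}}^{(x)}$ where $\tilde{\mathcal{Y}}_{\mathrm{I}}^{(x)}$ collects the invertible outputs that reveal input $x$. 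Handling the two pieces separately, the invertible part is easy: all the probability mass $\tilde{Q}(\tilde{\mathcal{Y}}_{\mathrm{I}}^{(x)}|x)$ is concentrated on input $x\in\mathcal{X}_{\mathrm{N}}$, so in the extremal mechanism we would like to route the single invertible symbol $\{x\}\in\mathcal{Y}_{\mathrm{I}}$ to this mass, and then use $T$ to redistribute it within $\tilde{\mathcal{Y}}_{\mathrm{I}}^{(x)}$ proportionally.

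The protected part is where the real work is, and here I would lean on the analogous result for LDP. The sub-channel $\tilde{Q}_{\mathcal{X}_{\mathrm{S}},\tilde{\mathcal{Y}}_{\mathrm{P}}}$ restricted to sensitive inputs, after normalizing columns, is essentially an $\epsilon$-LDP mechanism on $[v]$, and the extremal-LDP result of Kairouz \emph{et al.}~\cite{kairouz2016extremal} gives a staircase/extremal mechanism $S^{(v)}\diag(\gamma')$ that degrades it via some $T_{\mathrm{P}}$. The subtlety peculiar to ULDP is that the extremal mechanism must \emph{simultaneously} (i) match the sensitive-input behaviour on $\mathcal{Y}_{\mathrm{P}}$, (ii) set $Q_{\mathcal{X}_{\mathrm{N}},\mathcal{Y}_{\mathrm{P}}}$ to the minimal constant value $\gamma$ (one entry per column, not the full staircase), and (iii) leave exactly the right residual mass $f(\gamma)=1-\sum_{y'}\gamma(y')$ for the invertible block so that every non-sensitive row sums to one. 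So I would choose $\gamma$ for the extremal mechanism to be the \emph{column minima} of $\tilde{Q}$ over all inputs, i.e. $\gamma(y)=\min_{x\in\mathcal{X}}\tilde{Q}_{\mathrm{P}}(\cdot)$ suitably aggregated over the columns of $\tilde{Q}$ that $T$ will merge into $y$; the ULDP constraint~\eqref{eq:ULDP_sensitive} on protected columns guarantees the ratio between any entry and this minimum lies in $[1,e^\epsilon]$, which is exactly what makes the staircase decomposition go through and also forces the non-sensitive entries (which by~\eqref{eq:ULDP_non_sensitive} are among the minimizers, being zero or small) to be consistent with value $\gamma(y)$. Then $T$ on $\mathcal{Y}_{\mathrm{P}}$ is the Kairouz-style mixing matrix, and $T$ on $\mathcal{Y}_{\mathrm{I}}$ splits $\{x\}$ into $\tilde{\mathcal{Y}}_{\mathrm{I}}^{(x)}$; one checks the total residual mass matches because both $\tilde{Q}$ and $Q$ have non-sensitive rows summing to one and agree (after $T$) on $\mathcal{Y}_{\mathrm{P}}$.

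I expect the main obstacle to be the bookkeeping that ties the three blocks together: verifying that the $\gamma$ forced by the protected block leaves precisely the correct leftover probability $f(\gamma)$ for the invertible block on \emph{every} non-sensitive row, and that non-sensitive inputs are never "over-counted" in the protected block. Concretely, one must show that the minimal non-sensitive mass on protected outputs, $\sum_{y\in\mathcal{Y}_{\mathrm{P}}}\gamma(y)$, as produced by column-minimization, does not exceed $\min_{x\in\mathcal{X}_{\mathrm{N}}}\tilde{Q}(\tilde{\mathcal{Y}}_{\mathrm{P}}|x)$, so that $f(\gamma)\ge 0$ and the invertible routing is feasible for all non-sensitive inputs at once — this is a consequence of the definition of $\gamma$ as a pointwise minimum but requires care with how columns of $\tilde{Q}$ are merged. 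The LDP sub-argument~\cite{kairouz2016extremal} can be invoked essentially as a black box once the sensitive sub-channel is isolated; it is the interface between that sub-argument and the invertible block — mediated entirely by the scalar $f(\gamma)$ and the row-sum constraints — that needs the explicit verification.
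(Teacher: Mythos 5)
There is a genuine gap, and it sits exactly where you expect the "bookkeeping" to be: the behaviour of \emph{non-sensitive} inputs on \emph{protected} outputs. Two points go wrong. First, your reading of \eqref{eq:ULDP_non_sensitive} is incorrect: that condition constrains only invertible outputs, so on a protected column $\tilde{y}\in\tilde{\mathcal{Y}}_{\mathrm{P}}$ the entries $\tilde{Q}(\tilde{y}|x)$ for $x\in\mathcal{X}_{\mathrm{N}}$ are \emph{not} "among the minimizers, zero or small"; by \eqref{eq:ULDP_sensitive} they may be up to $e^{\epsilon}$ times the column minimum, they can exceed every sensitive entry, and they can differ across different non-sensitive $x$. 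Second, because you route the extremal invertible symbol $\{x\}$ only into $\tilde{\mathcal{Y}}_{\mathrm{I}}^{(x)}$, the reproduction of $\tilde{Q}(\tilde{y}|x)$ for $\tilde{y}\in\tilde{\mathcal{Y}}_{\mathrm{P}}$ and $x\in\mathcal{X}_{\mathrm{N}}$ must come entirely from the protected block of $Q$, i.e., it equals $\sum_{y\in\mathcal{Y}_{\mathrm{P}}}\gamma(y)T(\tilde{y}|y)$, a quantity that is the \emph{same for every} non-sensitive $x$. As soon as two non-sensitive inputs place different masses on some protected output (which ULDP allows whenever $w-v\geq 2$), no $(\gamma,T)$ of your form exists. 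Moreover, invoking \cite{kairouz2016extremal} as a black box on the sensitive sub-channel gives no control over the induced base level $\sum_{y}\gamma(y)T(\tilde{y}|y)$: e.g., for $v=2$, if the Kairouz decomposition of a sensitive column $(a(1+e^{\epsilon}),a(1+e^{\epsilon}))$ mixes the singletons $\{1\}$ and $\{2\}$ with weight $a$ each, the base level is $2a$, while ULDP only forces the non-sensitive entry to be at least $a(1+e^{-\epsilon})<2a$; the needed correction would be negative, so feasibility is not just the aggregate inequality $\sum_{y}\gamma(y)\leq\min_{x\in\mathcal{X}_{\mathrm{N}}}\tilde{Q}(\tilde{\mathcal{Y}}_{\mathrm{P}}|x)$ you flag, and the decomposition cannot be chosen by looking at the sensitive rows alone.

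The paper's proof supplies exactly the ingredient your sketch lacks. Since \eqref{eq:ULDP_sensitive} holds over all of $\mathcal{X}$, each protected column $\tilde{Q}_{\tilde{y}}\in\mathbb{R}^{w}$ lies in the conic hull of the \emph{full} staircase vectors $s^{(z)}$, $z\subset\mathcal{X}$ (Lemma 12 of \cite{ye2018optimal} applied to the whole alphabet, not just $[v]$). The crux is then an explicit rewriting of every $s^{(z)}$ whose elevated set meets $\mathcal{X}_{\mathrm{N}}$ (including the case $z\cap\mathcal{X}_{\mathrm{S}}=\emptyset$, handled via $e^{-\epsilon}s^{(\mathcal{X}_{\mathrm{S}})}$) as a conic combination of staircases elevated only on nonempty sensitive subsets plus point masses $\delta^{(x;w)}$, $x\in\mathcal{X}_{\mathrm{N}}$. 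Those point-mass coefficients are realized by letting the extremal invertible symbol $\{x\}$ be post-processed \emph{onto protected outputs} $\tilde{y}\in\tilde{\mathcal{Y}}_{\mathrm{P}}$ as well — precisely the routing your construction forbids — and this is what absorbs the $x$-dependent, possibly large, non-sensitive entries. Finally, summing the coefficients over $\tilde{y}$ and using $\sum_{\tilde{y}}\tilde{Q}_{\tilde{y}}=\mathbf{1}_{w}$ makes the normalization \eqref{eq:ULDPGammaCond} of $\gamma$ and the stochasticity of $T$ fall out automatically, so no per-row residual-mass verification is needed. Your split into protected/invertible parts and the column-minimum intuition for $\gamma$ point in the right direction, but without this full-column decomposition step the proposed construction fails.
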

The proof is similar to the proof of degradation of LDP mechanisms by extremal LDP mechanisms in \cite{namQuantumAdvantagePrivate2025, amorinoFactorizationExtremalPrivacy2025}, and we present the proof in Appendix~\ref{app:proofULDPExtremal}. 

To combine Theorems~\ref{thm:unifAsympCRLB} and \ref{thm:ULDPExtremal}, it remains to show that $\mathcal{Q}_{w,v,\epsilon}^{\mathrm{E}}$ is compact. 
It is straightforward to see that the given $\gamma:\mathcal{Y}_{\mathrm{P}} = 2^{[v]} \backslash \{\emptyset\} \rightarrow \mathbb{R}_{\geq 0}$ induces a valid extremal $(v,\epsilon)$-ULDP mechanism (satisfying $\sum_{y} Q(y|x)=1$ for all $x \in \mathcal{X}$) if and only if
\begin{align}
    \sum_{y \in \mathcal{Y}_{\mathrm{P}}} \gamma(y) (1+(e^\epsilon-1)\mathbbm{1}(x \in y)) = 1, \quad \forall x \in [v].
     \label{eq:ULDPGammaCond}
\end{align}
Clearly, the set of all $\gamma$ satisfying \eqref{eq:ULDPGammaCond} is compact, and hence $\mathcal{Q}_{w,v,\epsilon}^{\mathrm{E}}$ is compact.
Thus, by Theorems~\ref{thm:unifAsympCRLB} and \ref{thm:ULDPExtremal}, we obtain
\begin{align}
    M(w,v,\epsilon) \geq \sup_{P \in \Delta_w^\circ} \inf_{Q \in \mathcal{Q}_{w,v,\epsilon}^{\mathrm{E}}} \trInv\left(J_{P,Q}\right). \label{eq:unifAsympCRLBULDPApp}
\end{align}

\subsection{Distribution Decomposition for Tighter Bound}\label{subsec:sepFISubspaces}
Up to this point, we have established the converse bound by generalizing existing results.
Now, we present our prominent idea of \emph{decomposing} the data distribution, which not only lies behind the converse proof, but also plays an important role in the construction of our proposed schemes in Section~\ref{sec:achievabilityuBD}.
Based on this idea, we present the complete converse proof at the end of this subsection.

To formalize this idea, we introduce the following three mutually orthogonal subspaces of $\mathbb{R}^w$ which span the direction space $\vec{\Delta}_w$ in \eqref{eq:directSpaceDef},
\begin{align}
    \mathcal{H}_{1} &:=\{h: h_x=0 \text{ for }x \in [v+1:w], \sum_{x=1}^{v} h_x = 0\}, \label{eq:TanSpaceSubspace1}\\
    \mathcal{H}_{2} &:=\{h: h_x=0 \text{ for }x \in [v], \sum_{x=v+1}^{w} h_x = 0\},\label{eq:TanSpaceSubspace2}\\
    \mathcal{H}_{3} &:=\mathrm{span}\left(\begin{bmatrix}
        (w-v)\mathbf{1}_{v}\\
        -v \mathbf{1}_{w-v}
    \end{bmatrix}\right).\label{eq:TanSpaceSubspace3}
\end{align}
Let $d_i:=\dim(\mathcal{H}_i)$, that is, $(d_1,d_2,d_3)=(v-1,w-v-1,1)$.
Also, let $\Pi_{i}:\mathbb{R}^w \rightarrow \mathcal{H}_{i}$ be the orthogonal projection onto $\mathcal{H}_{i}$.
Then, intuitively, $\Pi_1(P)$ and $\Pi_2(P)$ represent the relative values of $P_x$'s over sensitive $x$ and that over non-sensitive $x$, respectively. Also, $\Pi_3(P)$ designates the probability of sensitive data $P(\mathcal{X}_{\mathrm{S}})$, as it is easy to see that
\begin{align}
    \Pi_3(P) = \frac{wP(\mathcal{X}_{\mathrm{S}})-v}{vw(w-v)}\begin{bmatrix}
        (w-v)\mathbf{1}_{v}\\
        -v \mathbf{1}_{w-v}
    \end{bmatrix}.
\end{align}
Additionally, we define the class of distributions $P^{(\alpha)} \in \Delta_w$, $\alpha \in [0,1]$, by the mixture of the uniform distribution on $\mathcal{X}_{\mathrm{S}}$ and the uniform distribution on $\mathcal{X}_{\mathrm{N}}$, with a rate $\alpha:1-\alpha$. More precisely, we have
\begin{align}
    P^{(\alpha)}_x = \begin{cases}
        \frac{\alpha}{v} & (\text{if }x \in \mathcal{X}_{\mathrm{S}})\\
        \frac{1-\alpha}{w-v} & (\text{if }x \in \mathcal{X}_{\mathrm{N}})
    \end{cases}.
    \label{eq:DefPAlpha}
\end{align}
The key idea for the converse proof is to give a lower bound on the optimal PUT, which can be intuitively interpreted as the \emph{sum of CRLBs for estimating $\Pi_i(P)$ over $i=1,2,3$}. 
Building on this idea, we now present the formal converse proof.
\begin{proposition}\label{prop:ConvPart}
For every $w>v \geq 1$ and $\epsilon>0$, we have
\begin{align}
    M^*(w,v,\epsilon) \geq \sup_{\alpha \in [0,1]}\inf_{t \in \Delta_v} M(\alpha,t).
\end{align}
\end{proposition}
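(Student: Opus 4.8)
The plan is to start from the bound \eqref{eq:unifAsympCRLBULDPApp}, namely
\begin{align}
    M^*(w,v,\epsilon) \geq \sup_{P \in \Delta_w^\circ} \inf_{Q \in \mathcal{Q}_{w,v,\epsilon}^{\mathrm{E}}} \trInv\left(J_{P,Q}\right),\nonumber
\end{align}
and lower-bound the right-hand side by restricting the outer supremum to the one-parameter family $P = P^{(\alpha)}$ of \eqref{eq:DefPAlpha}. So it suffices to show that for each $\alpha \in [0,1]$,
\begin{align}
    \inf_{Q \in \mathcal{Q}_{w,v,\epsilon}^{\mathrm{E}}} \trInv\left(J_{P^{(\alpha)},Q}\right) \geq \inf_{t \in \Delta_v} M(\alpha,t).\nonumber
\end{align}
The first key step is to lower-bound $\trInv(J_{P,Q})$ using the orthogonal decomposition $\vec{\Delta}_w = \mathcal{H}_1 \oplus \mathcal{H}_2 \oplus \mathcal{H}_3$. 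Writing $J = J_{P,Q}$ in an orthonormal basis adapted to this decomposition, $J$ becomes a block matrix; by the standard fact that $\tr(J^{-1}) \geq \sum_i \tr\bigl((J|_{\mathcal{H}_i})^{-1}\bigr)$ where $J|_{\mathcal{H}_i} = \Pi_i J \Pi_i$ is the compression of the Fisher information to $\mathcal{H}_i$ (this is the block-diagonal-dominance inequality for the trace of an inverse, equivalently a consequence of the Schur-complement characterization), we get
\begin{align}
    \trInv(J_{P,Q}) \geq \sum_{i=1}^{3}\trInv\bigl(\mathcal{J}_{P,Q}|_{\mathcal{H}_i}\bigr),\nonumber
\end{align}
where $\mathcal{J}_{P,Q}|_{\mathcal{H}_i}$ denotes the restriction of the Fisher information bilinear form to $\mathcal{H}_i \times \mathcal{H}_i$. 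This reduces the problem to lower-bounding, for each $i$, the infimum over extremal ULDP mechanisms of $\trInv(\mathcal{J}_{P^{(\alpha)},Q}|_{\mathcal{H}_i})$, and identifying these three infima with $M_1(\alpha,\cdot)$, $M_2(\alpha,\cdot)$, $M_3(\alpha,\cdot)$ respectively.

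The second step is the per-subspace computation. An extremal ULDP mechanism is parametrized by $\gamma:\mathcal{Y}_{\mathrm{P}} \to \mathbb{R}_{\geq 0}$ satisfying \eqref{eq:ULDPGammaCond}, so for $P = P^{(\alpha)}$ one can write the score vectors $\eta_{P,Q}(y)$ in closed form: for $y \in \mathcal{Y}_{\mathrm{P}}$ corresponding to a subset of size $k = |y| \in [v]$, the output marginal is $Q_P(y) = \gamma(y)\bigl(\tfrac{\alpha k(e^\epsilon-1)}{v} + 1\bigr)$, and $(\eta_{P,Q}(y))_x$ equals $e^\epsilon/Q_P(y)$ or $1/Q_P(y)$ according to whether $x \in y$; for $y \in \mathcal{Y}_{\mathrm{I}}$, $\eta$ is a scaled indicator. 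Then $\mathcal{J}_{P,Q}|_{\mathcal{H}_i}$ becomes a sum over $y$ of rank-one-type contributions $\gamma(y)$-weighted, and after projecting onto $\mathcal{H}_i$ and summing, each $\mathcal{J}|_{\mathcal{H}_i}$ turns out to be a scalar multiple of the identity on $\mathcal{H}_i$ (by the symmetry of $P^{(\alpha)}$ and of the complete-block-design-like structure), with the scalar depending on $\gamma$ only through the aggregated quantities $t_k := \bigl(\text{total }\gamma\text{-mass on }|y|=k\text{ blocks, suitably normalized}\bigr)$, which lie in $\Delta_v$ by \eqref{eq:ULDPGammaCond}. Substituting the explicit scalars gives exactly $\trInv(\mathcal{J}|_{\mathcal{H}_i}) = M_i(\alpha,t)$ (with the $v=1$ and $t=\delta^{(v;v)}$ degenerate cases matching the footnote conventions, since then some $\mathcal{J}|_{\mathcal{H}_i}$ is singular). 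Taking $\inf$ over $\gamma$ becomes $\inf$ over $t \in \Delta_v$, yielding $\inf_Q \sum_i \trInv(\mathcal{J}|_{\mathcal{H}_i}) \geq \inf_{t} \sum_i M_i(\alpha,t) = \inf_t M(\alpha,t)$; taking $\sup$ over $\alpha$ completes the proof, and the concave–convex/saddle-point claim from Appendix~\ref{app:objCvxandAttainSaddle} lets us replace $\sup_\alpha\inf_t$ by $M(\alpha^*,t^*)$.

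The main obstacle I expect is the bookkeeping in the second step: verifying that, at $P = P^{(\alpha)}$, the compressions $\mathcal{J}_{P,Q}|_{\mathcal{H}_i}$ really are scalar multiples of the identity and that the scalars depend on $\gamma$ only through the size-profile $t \in \Delta_v$ — this requires carefully exploiting the permutation symmetry of $P^{(\alpha)}$ within the sensitive block and within the non-sensitive block, and averaging an arbitrary $\gamma$ over this symmetry group without increasing $\trInv$ (a convexity/Jensen argument, using that $\trInv$ is convex on positive definite matrices and that the symmetrized Fisher information dominates the average in Loewner order). A secondary subtlety is that $\mathcal{H}_3$ couples the sensitive and non-sensitive coordinates, so its Fisher information picks up contributions from both $\mathcal{Y}_{\mathrm{P}}$ and $\mathcal{Y}_{\mathrm{I}}$; one must check the invertible-data term $f(\gamma)\mathbf{I}$ contributes the expected amount and that $f(\gamma) = 1-\sum_{y}\gamma(y)$ is correctly expressed through $t$. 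Once the symmetrization is in place the remaining identities are routine algebra matching the displayed formulas for $M_1, M_2, M_3$.
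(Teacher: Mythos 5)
Your overall route is the paper's: restrict the supremum in \eqref{eq:unifAsympCRLBULDPApp} to the family $P^{(\alpha)}$, split $\vec{\Delta}_w$ into $\mathcal{H}_1\oplus\mathcal{H}_2\oplus\mathcal{H}_3$, use the Schur-complement/block-inversion inequality $\trInv(J)\geq\sum_i\trInv(B_i)$, and reduce the dependence on the mechanism to the size profile $t(Q)\in\Delta_v$. The gap is in your second step. You assert that for an \emph{arbitrary} extremal $(v,\epsilon)$-ULDP mechanism the compressions $\mathcal{J}_{P^{(\alpha)},Q}|_{\mathcal{H}_i}$ are scalar multiples of the identity with scalars depending only on $t$. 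This is false: the constraint \eqref{eq:ULDPGammaCond} forces no symmetry among blocks of the same size. For instance, with $v=3$ take $\gamma$ supported on $y=\{1\}$ and $y=\{2,3\}$ with $\gamma(\{1\})=\gamma(\{2,3\})=1/(1+e^\epsilon)$; the protected outputs then only distinguish the direction $(2,-1,-1)$ inside $\mathcal{H}_1$, so $B_1$ is singular (rank one on a two-dimensional space), not a multiple of $\mathbf{I}$, and $\trInv(B_1)=\infty$ while $M_1(\alpha,t)$ is finite. Your proposed repair by symmetrizing $\gamma$ over the permutation group also fails as stated: Fisher information is jointly convex (quadratic-over-linear) in the channel, so the mechanism with averaged $\gamma$ satisfies $J_{P,\bar{Q}}\preceq\frac{1}{|G|}\sum_\sigma J_{P,Q^\sigma}$ --- it is \emph{dominated by} the orbit average (averaging is a garbling), which is the opposite of the Loewner domination you invoke, and hence does not yield $\trInv(J_{P,\bar{Q}})\leq\trInv(J_{P,Q})$.

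The step can be repaired without any symmetry claim, and this is what the paper does: after $\trInv(J)\geq\sum_i\trInv(B_i)$, apply the AM--HM inequality $\trInv(B_i)\geq d_i^2/\tr(B_i)$ and then compute only the \emph{trace} $\tr(B_i)=\mathbb{E}_{Y\sim Q_{P^{(\alpha)}}}\Vert\Pi_i(\eta_{P^{(\alpha)},Q}(Y))\Vert^2$, which (unlike the full block) genuinely depends on $\gamma$ only through $t(Q)$ and equals $d_i^2/M_i(\alpha,t(Q))$; this is the calculation in Appendix~\ref{app:proofConvPart} and is where your ``routine algebra'' belongs. (Equivalently, one can average the \emph{matrix} $J_{P^{(\alpha)},Q}$ over the orthogonal conjugations induced by $S_v\times S_{w-v}$ --- not the mechanism --- and use convexity of $\trInv$ plus Schur's lemma; the block traces are conjugation-invariant, so this recovers exactly the same bound $\sum_i d_i^2/\tr(B_i)$.) With that replacement your argument matches the paper's proof; as written, the scalar-identity claim and the direction of the symmetrization inequality are genuine errors.
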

\begin{proof}
First, we further bound \eqref{eq:unifAsympCRLBULDPApp} by replacing the supremum over all $P\in \Delta_w^\circ$ with the supremum over $P^{(\alpha)}$, $\alpha \in (0,1)$, obtaining
\begin{align}
    M^*(w,v,\epsilon) \geq \sup_{\alpha \in (0,1)} \inf_{Q \in \mathcal{Q}_{w,v,\epsilon}^{\mathrm{E}}} \trInv\left(J_{P^{(\alpha)}, Q}\right). \label{eq:ULDPUnifCRBLimitToPAlpha}
\end{align}
We choose an orthonormal basis $\mathcal{B}=\{h^{(1)},\cdots,h^{(w-1)}\}$ of $\mathcal{H}$, such that the first $(v-1)$ vectors form a basis of $\mathcal{H}_1$, the next $(w-v-1)$ vectors form a basis of $\mathcal{H}_2$, and the last vector $h^{(w-1)}$ is the unit vector in $\mathcal{H}_3$.

Now, let $Q \in \mathcal{Q}_{w,v,\epsilon}^{\mathrm{E}}$ and $\alpha \in (0,1)$ be given. Let $B_1, B_2, B_3$ be the principal submatrices of $J_{P^{(\alpha)},Q;\mathcal{B}}$, where $B_1$ is formed by selecting the first $(v-1)$ rows and columns, $B_2$ by selecting the next $w-v-1$ rows and columns, and $B_3$ by selecting the last row and column.
Using the block matrix inversion formula \cite{boyd_convex_2004}, we derive in Appendix~\ref{app:proofCRLBDecompIntoSubspaces} that
\begin{align}
   \trInv\left(J_{P^{(\alpha)}, Q}\right) \geq \sum_{i=1}^{3} \trInv\left(B_i\right). \label{eq:CRLBDecompIntoSubspaces}
\end{align}
By the arithmetic mean-harmonic mean inequality, we have $\trInv\left(B_i\right) \geq \frac{d_i^2}{\tr(B_i)}$. 
Also, we derive in Appendix~\ref{app:proofConvPart} that
\begin{align}
    \tr(B_i) = \frac{d_i^2}{M_i(\alpha, t(Q))} , \quad \forall i\in\{1,2,3\},  \label{eq:TraceFIRestPropToJi}
\end{align}
where $t(Q) \in \Delta_v$ is the PMF of $|Y'|$ when $Y'\sim Q_{P^{(1)}}$ (recall that in an extremal ULDP mechanism, $Y$ is a subset of $\mathcal{X}$).
Thus, we obtain
\begin{align}
    \trInv\left( J_{P^{(\alpha)}, Q}\right) \geq \sum_{i=1}^{3}{M_i(\alpha,t(Q))}=M(\alpha,t(Q)).
\end{align}
Since this holds for every $Q \in \mathcal{Q}_{w,v,\epsilon}^{\mathrm{E}}$ and $\alpha \in (0,1)$, we have
\begin{align}
    M^*(w,v,\epsilon) \geq \sup_{\alpha \in (0,1)}\inf_{t \in \Delta_v} M(\alpha,t).
\end{align}
By continuity of $M$ and compactness of $\Delta_v$, we can replace $\sup\limits_{\alpha \in (0,1)}$ by $\sup\limits_{\alpha \in [0,1]}$.
This ends the converse proof.
\end{proof}

\section{Achievability: Utility-Optimized Block Design Scheme}\label{sec:achievabilityuBD}
\begin{figure*}
    \centering
    \includegraphics[width = .9\linewidth]{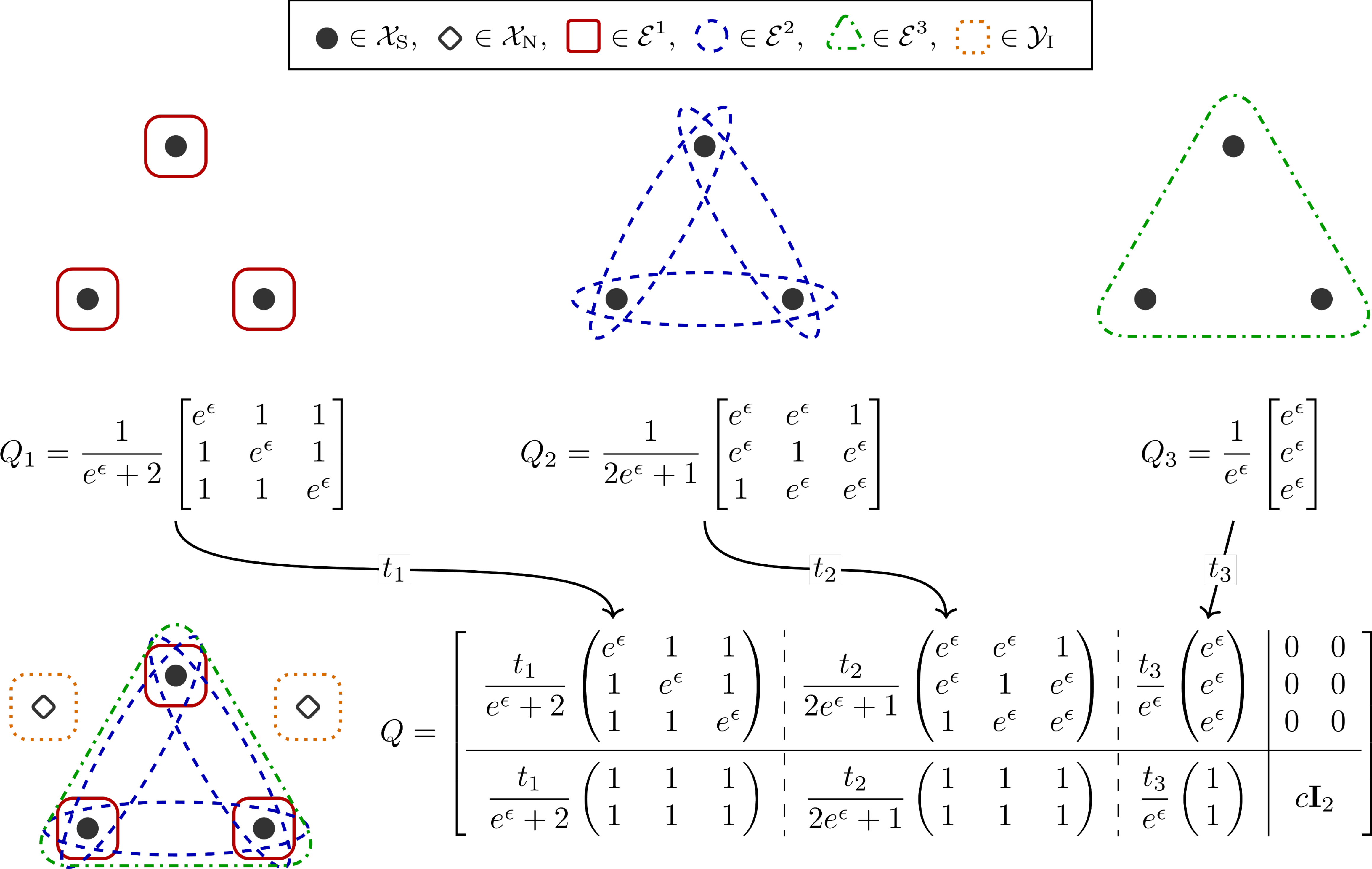}
    \caption{Visualization of $(5,3,\epsilon,t)$-uBD mechanism.}
    \label{fig:uBDvisual}
\end{figure*}
In this section, we propose a new class of ULDP schemes called \textbf{utility-optimized block design (uBD) schemes}. After that, we show that an optimized uBD scheme is asymptotically optimal.

\subsection{Construction of uBD Schemes}
\paragraph{Privacy mechanism}
We first propose a \textbf{uBD mechanism} which perturbs the data.
To construct a uBD mechanism, we first prepare a collection of block design mechanisms $(Q^{(k)}:[v] \rightarrow \mathcal{P}(2^{[v]}))_{k \in [v]}$, where each of $Q^{(k)}$ has the uniformity parameter $k$. This guarantees that the output spaces of $Q^{(k)}$'s are disjoint. Also, we prepare a mixture proportion $t \in \Delta_v$.
Then, a uBD mechanism $Q$ is constructed as follows. 
The visualization of $Q$ is in Figure~\ref{fig:uBDvisual}.
If a sensitive data $x \in \mathcal{X}_{\mathrm{S}}$ is given as the input of a uBD mechanism $Q$, then a uBD mechanism randomly chooses a block design mechanism $Q^{(k)}$, each with probability $t_k$, respectively, and simulates the chosen $Q^{(k)}$ to generate its output.
Hence, in terms of a stochastic matrix, $Q_{\mathcal{X}_{\mathrm{S}}, \mathcal{Y}_{\mathrm{P}}}$ is the horizontal stack of the stochastic matrices of $Q^{(k)}$ multiplied by $t_k$.
After that, we set $Q_{\mathcal{X}_{\mathrm{N}}, \mathcal{Y}}$ so that the resultant $Q$ becomes an \emph{extremal $(v,\epsilon)$-ULDP mechanism}, which is defined in Definition~\ref{def:ULDPExtremal}.
That is, for given $x \in \mathcal{X}_{\mathrm{N}}$, $Q$ either outputs one of the edges from the support of some $Q^{(k)}$, or directly outputs its input $x$ in the form of an edge $\{x\}$.
Here, the output distribution $Q(\cdot|x)$ is suitably determined to maximize the probability that a mechanism outputs $\{x\}$ while satisfying the $(v,\epsilon)$-ULDP constraint.
The precise definition of a uBD mechanism is as follows.

\begin{definition}\label{def:uBDMech}
Let $w > v \geq 1$, $\mathcal{X} = [w]$, $\mathcal{X}_{\mathrm{S}} = [v]$, $\epsilon >0$. Suppose that the following are given:
\begin{itemize}
    \item $t \in \Delta_v$,
    \item A collection of sets $\left(\mathcal{E}_k\right)_{k \in [v]}$, $\mathcal{E}_k \subset 2^{\mathcal{X}_{\mathrm{S}}}$, such that for each $k$, $\left(\mathcal{X}_{\mathrm{S}}, \mathcal{E}_k\right)$ is a $(v,b_k, r_k, k, \lambda_k)$-block design for some $(b_k, r_k, \lambda_k)$.
\end{itemize}
Then, the corresponding \textbf{$(w,v,\epsilon,t)$-uBD mechanism} is defined as an extremal $(v,\epsilon)$-ULDP mechanism as in Definition \ref{def:ULDPExtremal}, which is induced from the following $\gamma:\mathcal{Y}_{\mathrm{P}} \rightarrow \mathbb{R}_{\geq 0}$:
\begin{align}
    \gamma(y) = t_k \cdot \frac{\mathbbm{1}(y \in \mathcal{E}_k)}{r_k (e^\epsilon-1) + b_k}, \quad \forall y \text{ such that }|y|=k.
\end{align}
\end{definition}
By the regularity of block design, it is straightforward to check that this $\gamma$ satisfies the necessary and sufficient condition to induce an extremal ULDP mechanism in \eqref{eq:ULDPGammaCond}.
Note that for every $(w,v,\epsilon,t)$, there exists a $(w,v,\epsilon,t)$-uBD mechanism, by setting each of $(\mathcal{X}_{\mathrm{S}}, \mathcal{E}_k)$ to be the complete block design in Remark \ref{rmk:CBD}.
Also, to specify a uBD mechanism, it suffices to designate $\mathcal{E}_k$ only for $k \in [v]$ such that $t_k>0$, not necessary for all $k \in [v]$.

\paragraph{Estimator} 
Now, we propose a class of estimators for a uBD mechanism.
The main ideas for our proposed estimators are as follows.
\begin{enumerate}
    \item[(i)] First, as we derive the lower bound based on a version of CRLB, we target to construct an estimator which saturates CRLB. For this purpose, we exploit the score vector in \eqref{eq:defScoreFuncDef}.
    \item[(ii)] Second, by exploiting the decomposition $\vec{\Delta}_w=\mathcal{H}_1 \oplus \mathcal{H}_2 \oplus \mathcal{H}_3$ introduced in Section~\ref{subsec:sepFISubspaces}, we construct an estimator based on the relationship between the projections of the unknown distribution $P$ and the score vector onto each of $\mathcal{H}_i$.
\end{enumerate}
Based on these ideas, we observe that by the symmetries of block designs, for any given $\alpha \in (0,1)$, the expectation of the projection of the score vector of a uBD mechanism at $P^{(\alpha)}$ (defined in \eqref{eq:DefPAlpha}) is a linear function of the projection of $P-P^{(\alpha)}$, where the linear coefficient is related with $M_i$ apperaing in Theorem \ref{thm:ULDP_opt_PUT} and $d_i:=\dim(\mathcal{H}_i)$.
\begin{proposition}\label{prop:scoreEstLinear}
    Let $Q$ be a $(w,v,\epsilon,t)$-uBD mechanism, and $\alpha \in (0,1)$. Then, for each $i=1,2,3$, we have
    \begin{align}
         \mathbb{E}_{Y \sim Q_P}[\Pi_{i}(\eta_{P^{(\alpha)},Q}(Y))] &=  \Pi_{i}(P - P^{(\alpha)}) \frac{d_i}{M_i(\alpha,t)}.\label{eq:scoreEstLinear}
    \end{align}
    Recall that $\Pi_i:\mathbb{R}^w \rightarrow \mathcal{H}_i$ is the orthogonal projection onto $\mathcal{H}_i$ defined in \eqref{eq:TanSpaceSubspace1}-\eqref{eq:TanSpaceSubspace3}, and $(d_1,d_2,d_3)=(v-1,w-v-1,1)$.
\end{proposition}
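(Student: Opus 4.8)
The plan is to compute $\mathbb{E}_{Y\sim Q_P}[\eta_{P^{(\alpha)},Q}(Y)]$ in closed form by recognizing it as the Fisher-information operator at $P^{(\alpha)}$ applied to $P-P^{(\alpha)}$, and then to show that this operator is block-diagonal with respect to $\mathcal{H}_1\oplus\mathcal{H}_2\oplus\mathcal{H}_3$, acting as the scalar $d_i/M_i(\alpha,t)$ on the $i$-th block. The reduction to the Fisher operator is the following elementary observation: the score $\eta_{P^{(\alpha)},Q}(y)$ does not depend on $P$, while $Q_P(y)=\sum_x Q(y|x)P_x$ is affine in $P$; combined with $Q_P(y)=Q_{P^{(\alpha)}}(y)\langle\eta_{P^{(\alpha)},Q}(y),P\rangle$, $\langle\eta_{P^{(\alpha)},Q}(y),P^{(\alpha)}\rangle=1$, and $\sum_y Q_{P^{(\alpha)}}(y)\eta_{P^{(\alpha)},Q}(y)=\mathbf{1}_w\perp\vec{\Delta}_w$, this yields, for every $h'\in\vec{\Delta}_w$,
\[
\mathbb{E}_{Y\sim Q_P}\big[\langle\eta_{P^{(\alpha)},Q}(Y),h'\rangle\big]
=\sum_{y}Q_{P^{(\alpha)}}(y)\,\langle\eta_{P^{(\alpha)},Q}(y),P-P^{(\alpha)}\rangle\,\langle\eta_{P^{(\alpha)},Q}(y),h'\rangle
=\mathcal{J}_{P^{(\alpha)},Q}(P-P^{(\alpha)},h').
\]
Since $\mathbb{E}_{Y\sim Q_P}[\Pi_i(\eta_{P^{(\alpha)},Q}(Y))]\in\mathcal{H}_i$ and its pairing with each $h'\in\mathcal{H}_i$ equals $\mathcal{J}_{P^{(\alpha)},Q}(P-P^{(\alpha)},h')$, it suffices to prove $\mathcal{J}_{P^{(\alpha)},Q}(h,h')=\sum_{i=1}^{3}\frac{d_i}{M_i(\alpha,t)}\langle\Pi_i h,\Pi_i h'\rangle$ for all $h,h'\in\vec{\Delta}_w$ (for $v=1$ the $i=1$ assertion is vacuous since $\mathcal{H}_1=\{0\}$).

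Next I would write the scores explicitly. By Definition~\ref{def:uBDMech}, for $y\in\mathcal{Y}_{\mathrm{P}}$ with $|y|=k$ one has $Q_{P^{(\alpha)}}(y)=\gamma(y)\,\frac{\alpha k(e^\epsilon-1)+v}{v}$, hence $\eta_{P^{(\alpha)},Q}(y)=\frac{v}{\alpha k(e^\epsilon-1)+v}\big(\mathbf{1}_w+(e^\epsilon-1)\mathbbm{1}_y\big)$, where $\mathbbm{1}_y\in\{0,1\}^w$ indicates membership in $y\subseteq[v]$; and for $y=\{x_0\}\in\mathcal{Y}_{\mathrm{I}}$ one has $Q_{P^{(\alpha)}}(\{x_0\})=\frac{1-\alpha}{w-v}\,(1-G)$ with $G:=\sum_{y'\in\mathcal{Y}_{\mathrm{P}}}\gamma(y')$, hence $\eta_{P^{(\alpha)},Q}(\{x_0\})=\frac{w-v}{1-\alpha}\,e_{x_0}$. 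Decomposing the $\vec{\Delta}_w$-parts of these vectors across the three subspaces gives the crucial structural fact: $\Pi_{\vec{\Delta}_w}(\mathbbm{1}_y)$ lies in $\mathcal{H}_1\oplus\mathcal{H}_3$, with $\mathcal{H}_1$-component $\mathbbm{1}_y-\tfrac kv\mathbbm{1}_{[v]}$ and $\mathcal{H}_3$-component $\tfrac{k}{vw}z$ (where $z$ generates $\mathcal{H}_3$, see \eqref{eq:TanSpaceSubspace3}), whereas $\Pi_{\vec{\Delta}_w}(e_{x_0})$ lies in $\mathcal{H}_2\oplus\mathcal{H}_3$. In particular, every $y\in\mathcal{Y}_{\mathrm{P}}$ contributes nothing when paired against $\mathcal{H}_2$, and every $y\in\mathcal{Y}_{\mathrm{I}}$ nothing against $\mathcal{H}_1$, so the three diagonal blocks of $\mathcal{J}_{P^{(\alpha)},Q}$ decouple.

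The remaining work is to collapse the bilinear sums $\mathcal{J}_{P^{(\alpha)},Q}(h,h')=\sum_y Q_{P^{(\alpha)}}(y)\langle\eta_{P^{(\alpha)},Q}(y),h\rangle\langle\eta_{P^{(\alpha)},Q}(y),h'\rangle$. Writing $\gamma(y)=c_k$ for $y\in\mathcal{E}_k$, the relations $b_kk=vr_k$ and $r_k(k-1)=\lambda_k(v-1)$ of Lemma~\ref{bdlemma} give $c_kr_k=\frac{t_kk}{ke^\epsilon+v-k}$, $c_kb_k=\frac{t_kv}{ke^\epsilon+v-k}$, and hence $1-G=(e^\epsilon-1)\sum_k t_k\frac{k}{ke^\epsilon+v-k}$. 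Using $\sum_{y\in\mathcal{E}_k}\mathbbm{1}_y=r_k\mathbbm{1}_{[v]}$, the pairwise-balance count $\sum_{y\in\mathcal{E}_k:\,x\in y}\mathbbm{1}_y=(r_k-\lambda_k)e_x+\lambda_k\mathbbm{1}_{[v]}$ on $[v]$ for $x\in[v]$, and $\sum_{x_0\in\mathcal{X}_{\mathrm{N}}}e_{x_0}=\mathbbm{1}_{[v+1:w]}$, the cross-blocks vanish (for instance $\mathcal{J}(h_1,h_3)$ reduces to a multiple of $\sum_{y\in\mathcal{E}_k}(\mathbbm{1}_y-\tfrac kv\mathbbm{1}_{[v]})=(r_k-\tfrac{kb_k}{v})\mathbbm{1}_{[v]}=0$, and $\mathcal{J}(h_2,h_3)$ to a multiple of $\sum_{x_0}(e_{x_0}-\tfrac1{w-v}\mathbbm{1}_{[v+1:w]})=0$, while $\mathcal{J}(h_1,h_2)=0$ already by the previous paragraph); the $\mathcal{H}_1$ block collapses to $\frac{v(e^\epsilon-1)^2}{v-1}\big(\sum_k\frac{c_kr_k(v-k)}{\alpha k(e^\epsilon-1)+v}\big)\langle h_1,h_1'\rangle=\frac{d_1}{M_1(\alpha,t)}\langle h_1,h_1'\rangle$, and the $\mathcal{H}_2$ block to $\frac{(w-v)(1-G)}{1-\alpha}\langle h_2,h_2'\rangle=\frac{d_2}{M_2(\alpha,t)}\langle h_2,h_2'\rangle$, after substituting the expressions for $c_kr_k$ and $1-G$.

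I expect the $\mathcal{H}_3$ block to be the main obstacle, as it is the only one receiving contributions from both $\mathcal{Y}_{\mathrm{P}}$ and $\mathcal{Y}_{\mathrm{I}}$. After reducing everything to scalars along $z$ (using $\langle z,z\rangle=vw(w-v)$, $\langle\mathbbm{1}_y,z\rangle=k(w-v)$, $\langle e_{x_0},z\rangle=-v$, and $\langle P-P^{(\alpha)},z\rangle=w\sum_{x\in[v]}(P-P^{(\alpha)})_x$), the two families of outputs must be shown to combine into $\frac{d_3}{M_3(\alpha,t)}$; this boils down to the per-$k$ identity $(1-\alpha)k(e^\epsilon-1)+\big(\alpha k(e^\epsilon-1)+v\big)=ke^\epsilon+v-k$, which after clearing denominators makes the $\mathcal{Y}_{\mathrm{P}}$ term and the $\mathcal{Y}_{\mathrm{I}}$ term add up correctly. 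Managing this identity together with the profusion of block-design constants is the bulk of the computation. Finally, the degenerate cases---$v=1$, $t=\delta^{(v;v)}$, or $t_k=0$ for some $k$---cause no difficulty, since the denominators $\alpha k(e^\epsilon-1)+v$ and $ke^\epsilon+v-k$ are always strictly positive, and they are consistent with the conventions in the footnote of Theorem~\ref{thm:ULDP_opt_PUT} (e.g.\ $M_1=\infty$ when $t=\delta^{(v;v)}$ matches $\Pi_1(\eta_{P^{(\alpha)},Q})$ having zero mean).
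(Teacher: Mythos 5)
Your proposal is correct, and it takes a genuinely different route from the paper's proof. The paper verifies \eqref{eq:scoreEstLinear} separately for each $i$ as a vector identity in $P$: for $i=1$ it exploits linearity of both sides in $P$ and checks the identity at point masses after computing the conditional probabilities $Q(\mathcal{Y}_{\mathrm{P}}^{(k,x)}\,|\,x)$, $Q(\mathcal{Y}_{\mathrm{P}}^{(k,x)}\,|\,x')$, $Q(\mathcal{Y}_{\mathrm{P}}^{(k,x)}\,|\,x'')$ from the block-design symmetries, and for $i=2,3$ it computes $Q_P(\{x\})$, $Q_P(\mathcal{Y}_{\mathrm{P}}^{(k)})$, $Q_P(\mathcal{Y}_{\mathrm{I}})$ directly. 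You instead first observe that $\mathbb{E}_{Y\sim Q_P}[\langle\eta_{P^{(\alpha)},Q}(Y),h'\rangle]=\mathcal{J}_{P^{(\alpha)},Q}(P-P^{(\alpha)},h')$ (using $Q_P(y)=Q_{P^{(\alpha)}}(y)\langle\eta_{P^{(\alpha)},Q}(y),P\rangle$ and the zero-mean property of the score on $\vec{\Delta}_w$), and then prove the stronger structural fact that for a uBD mechanism the Fisher form at $P^{(\alpha)}$ is block-diagonal over $\mathcal{H}_1\oplus\mathcal{H}_2\oplus\mathcal{H}_3$ with each block equal to $\tfrac{d_i}{M_i(\alpha,t)}$ times the identity; I checked your block computations (the $r_k-\lambda_k=r_k\tfrac{v-k}{v-1}$ collapse for $\mathcal{H}_1$, $c_kr_k=\tfrac{t_kk}{ke^\epsilon+v-k}$, $1-G=(e^\epsilon-1)\sum_kt_k\tfrac{k}{ke^\epsilon+v-k}$, the vanishing of the cross-blocks via $b_kk=vr_k$, and the $\mathcal{H}_3$ recombination identity $(1-\alpha)k(e^\epsilon-1)+\alpha k(e^\epsilon-1)+v=ke^\epsilon+v-k$) and they all match $\tfrac{d_i}{M_i}$ as defined in Theorem~\ref{thm:ULDP_opt_PUT}. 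Your approach is more conceptual: it makes explicit that the estimator in Definition~\ref{def:uBDEstimator} is inverting a blockwise-scalar Fisher operator, and as a byproduct it reproves \eqref{eq:TraceFIRestPropToJi} for uBD mechanisms (the trace of each block), whereas the paper only needs traces in the converse and so proves this proposition by the lighter device of linearity in $P$ plus point-mass evaluation, avoiding the verification of off-diagonal vanishing and of the full scalar-identity structure (which indeed requires the $\lambda$-pairwise balance, exactly where you invoke it). Note that your scalar-block claim genuinely uses the block-design symmetry of uBD and would fail for a general extremal ULDP mechanism, so it is important (and you do this) to restrict that statement to uBD mechanisms.
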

The proof of Proposition \ref{prop:scoreEstLinear} is in Appendix~\ref{app:proofScoreEstLinear}.
Using this, we can obtain an unbiased estimator for each of $\alpha$ by inverting the linear relationship for each $i=1,2,3$ to estimate each $\Pi_{i}(P - P^{(\alpha)})$, and aggregating the results for $i$'s. 
An exception is the case of $v \geq 2$, $t = \delta^{(v;v)}$, where the linear relationship is non-invertible due to $M_1(\alpha,t)=\infty$.
We note that the idea of score-based linear estimator was used in \cite{namOptimalPrivateDiscrete2024} for private estimation under LDP and a 1-bit communication constraint.
\begin{definition}{\label{def:uBDEstimator}}
    Let $Q$ be a $(w,v,\epsilon,t)$-uBD mechanism. Suppose that $t \neq \delta^{(v;v)}$ when $v \geq 2$. For each $\alpha \in [0,1]$, we define a sequence of unbiased estimators $\hat{P}^{(\alpha)}$ corresponding to $Q$, by $\hat{P}^{(\alpha)}_n (y^n) = \frac{1}{n}\sum_{j=1}^{n}\hat{P}^{(\alpha)}_1 (y_j)$, where
    \begin{enumerate}
        \item[(i)] For $\alpha \in (0,1)$, $\hat{P}^{(\alpha)}_1$ is given by\footnote{
        Here, if $d_i=0$, we regard $\frac{M_i(\alpha,t)}{d_i}\Pi_i(\eta_{P^{(\alpha)},Q}(y))=0$.
        }
            \begin{align}
                \hspace*{-5ex} \hat{P}^{(\alpha)}_1 (y) = P^{(\alpha)} + \sum_{i=1}^{3}\frac{M_i(\alpha,t)}{d_i}\Pi_i(\eta_{P^{(\alpha)},Q}(y)). \label{eq:uBDEstimator}
            \end{align}
        \item[(ii)] For $\alpha \in \{0,1\}$, $\hat{P}^{(\alpha)}_1$ is given by the continuous extension of \eqref{eq:uBDEstimator}.
    \end{enumerate}
    We call such a pair $(Q,\hat{P}^{(\alpha)})$ a \textbf{$(w,v,\epsilon,\alpha,t)$-uBD scheme}.
\end{definition}
To facilitate its implementation, we provide the table of each component of $\Pi_i(\eta_{P^{(\alpha)},Q}(y))$ for a uBD mechanism (or for a general extremal ULDP mechanism) in Table~\ref{tab:compScoreFuncAndProj} in the beginning of the appendix.

\subsection{Estimation Error and Optimality of uBD Schemes}
The estimation error of a uBD scheme can be represented using the objective function $M(\alpha,t)$ in Theorem~\ref{thm:ULDP_opt_PUT} and its derivative with respect to $\alpha$. Note that similar to block design schemes \cite{park2023block}, given $(w,v,\epsilon)$, the estimation error of a uBD scheme only depends on $(\alpha,t)$, and does not depend on specific choices of block designs $(\mathcal{X}_{\mathrm{S}}, \mathcal{E}_k)$.
\begin{proposition}\label{prop:uBDEstError}
    Let $(Q,\hat{P})$ be a $(w,v,\epsilon,\alpha,t)$-uBD scheme. Suppose that $t \neq \delta^{(v;v)}$ when $v \geq 2$. Then
    \begin{align}
        R_{n}(Q,\hat{P}) = \frac{1}{n}R_1(Q,\hat{P}), \quad R(Q,\hat{P}) = R_1(Q,\hat{P}), \label{eq:CanonicalEstMSE}
    \end{align}
    and
    \begin{multline}
        R(Q,\hat{P}) =
        \sup_{\beta \in [0,1]} \biggl[-\frac{w}{v(w-v)}(\beta-\alpha)^2 \\
        \hspace*{5em}+ (\beta-\alpha) F(\alpha,t) + M(\alpha,t)\biggr], \label{eq:uBDLoss}
    \end{multline}
    where $F(\alpha,t) = \frac{\partial}{\partial \alpha}M(\alpha,t)$.
\end{proposition}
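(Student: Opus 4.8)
The plan is to reduce the claim to a one-client computation and then to an explicit Gaussian-type optimization over the data distribution $P$. First I would establish the first display \eqref{eq:CanonicalEstMSE}: since $\hat{P}^{(\alpha)}_n(y^n)=\frac1n\sum_j \hat{P}^{(\alpha)}_1(y_j)$ is an average of i.i.d.\ terms (under $Y_j\sim Q_P$), and since each $\hat{P}^{(\alpha)}_1$ is unbiased by construction (Proposition~\ref{prop:scoreEstLinear} was exactly set up so that inverting the linear relation yields unbiasedness), the mean squared error decomposes as $R_n(Q,\hat P;P)=\frac1n \operatorname{tr}\cov_{Y\sim Q_P}(\hat P^{(\alpha)}_1(Y))$, so taking the supremum over $P\in\Delta_w$ gives $R_n(Q,\hat P)=\frac1n R_1(Q,\hat P)$ and hence $R(Q,\hat P)=\lim_n n\cdot R_n = R_1(Q,\hat P)$.

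Next I would compute $R_1(Q,\hat P;P)=\operatorname{tr}\cov_{Y\sim Q_P}(\hat P^{(\alpha)}_1(Y))$ explicitly. Using $\hat P^{(\alpha)}_1(y)=P^{(\alpha)}+\sum_i \frac{M_i(\alpha,t)}{d_i}\Pi_i(\eta_{P^{(\alpha)},Q}(y))$ and the orthogonality of the $\mathcal H_i$, this splits into $\sum_{i=1}^3 \frac{M_i(\alpha,t)^2}{d_i^2}\,\mathbb E_{Y\sim Q_P}\big[\|\Pi_i(\eta_{P^{(\alpha)},Q}(Y))\|_2^2\big]-\|\mathbb E \hat P^{(\alpha)}_1(Y)-P^{(\alpha)}\|^2$ after subtracting the mean; by Proposition~\ref{prop:scoreEstLinear} the mean term is $\sum_i \frac{M_i(\alpha,t)^2}{d_i^2}\|\Pi_i(P-P^{(\alpha)})\|^2\frac{d_i^2}{M_i(\alpha,t)^2}=\|P-P^{(\alpha)}\|^2$-type quantity. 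So the real work is the second moment $\mathbb E_{Y\sim Q_P}\|\Pi_i(\eta_{P^{(\alpha)},Q}(Y))\|_2^2$, which I expect to be an affine function of the scalar $P(\mathcal X_{\mathrm S})$ — more precisely, using the explicit form of $\Pi_i(\eta_{P^{(\alpha)},Q}(y))$ from Table~\ref{tab:compScoreFuncAndProj} and the block-design symmetries, each such second moment, after multiplying by $M_i^2/d_i^2$ and summing, should collapse (the $\mathcal H_1,\mathcal H_2$ directions contribute the correct CRLB constants $M_i(\alpha,t)$ up to terms linear in $P(\mathcal X_{\mathrm S})$, and the single $\mathcal H_3$ direction contributes a term quadratic in $P(\mathcal X_{\mathrm S})$ through the curvature $\frac{w}{v(w-v)}$). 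Writing $P(\mathcal X_{\mathrm S})=\beta$, I would show $R_1(Q,\hat P;P)$ depends on $P$ only through $\beta\in[0,1]$ and equals $-\frac{w}{v(w-v)}(\beta-\alpha)^2+(\beta-\alpha)F(\alpha,t)+M(\alpha,t)$, where the linear coefficient is identified as $\frac{\partial}{\partial\alpha}M(\alpha,t)$ by differentiating the defining sums $M_i(\alpha,t)$ (this identification is the cleanest way to recognize the coefficient; alternatively one can match the two expressions at $\beta=\alpha$, where both sides equal $M(\alpha,t)$, and match first derivatives in $\beta$). Taking $\sup_{\beta\in[0,1]}$ then yields \eqref{eq:uBDLoss}.

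The main obstacle I anticipate is the bookkeeping in the second-moment computation for the $\mathcal H_3$ component. The projections onto $\mathcal H_1$ and $\mathcal H_2$ behave like the corresponding LDP/block-design calculations and should be relatively mechanical once Table~\ref{tab:compScoreFuncAndProj} is in hand, but the $\mathcal H_3$ direction mixes sensitive and non-sensitive coordinates with different weights $(w-v)$ and $-v$, and the score vector on invertible outputs $\mathcal Y_{\mathrm I}$ behaves very differently from on protected outputs $\mathcal Y_{\mathrm P}$; getting the quadratic-in-$\beta$ term with exactly the coefficient $\frac{w}{v(w-v)}$ (which is $1/\|\text{unit vector scaling in }\mathcal H_3\|$-type normalization) is where a sign or normalization slip is most likely. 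A useful sanity check I would build in is that at $\beta=\alpha$, i.e.\ $P=P^{(\alpha)}$, the estimator saturates the CRLB, so $R_1(Q,\hat P;P^{(\alpha)})=\operatorname{tr}(J_{P^{(\alpha)},Q}^{-1})=\sum_i \trInv(B_i)=\sum_i M_i(\alpha,t)=M(\alpha,t)$, pinning down the constant term; and differentiating both the left side (via an envelope/direct argument on the second moments) and the right side in $\beta$ at $\beta=\alpha$ pins down $F(\alpha,t)=\partial_\alpha M(\alpha,t)$, leaving only the verification that the $\beta$-dependence is exactly quadratic with the stated leading coefficient.
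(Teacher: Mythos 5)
Your overall route coincides with the paper's: reduce to $n=1$ via unbiasedness and i.i.d.\ averaging, decompose the single-client error as $R_1(Q,\hat P;P)=\mathbb{E}_{Y\sim Q_P}\|\hat P_1(Y)-P^{(\alpha)}\|^2-\|P-P^{(\alpha)}\|^2$, compute the second moments from Table~\ref{tab:compScoreFuncAndProj} and the block-design symmetries, and identify the constant and linear coefficients with $M(\alpha,t)$ and $F(\alpha,t)$. However, there is a genuine gap at the central step: you claim that $R_1(Q,\hat P;P)$ depends on $P$ only through $\beta=P(\mathcal{X}_{\mathrm{S}})$, and this is false. Only the second-moment term $\mathbb{E}_{Y\sim Q_P}\|\hat P_1(Y)-P^{(\alpha)}\|^2$ has that property (it is affine in $\beta$, because the conditional law of $\|\Pi_i(\eta_{P^{(\alpha)},Q}(Y))\|^2$ given $X=x$ is the same for all sensitive $x$ and the same for all non-sensitive $x$); the subtracted term $\|P-P^{(\alpha)}\|^2$ depends on all of $P$, not just on $\beta$. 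Relatedly, you attribute the quadratic term $-\frac{w}{v(w-v)}(\beta-\alpha)^2$ to the $\mathcal H_3$ second moment; in fact every second moment is affine in $\beta$, and the quadratic term is exactly $-\|P^{(\beta)}-P^{(\alpha)}\|^2$. Because of this, evaluating your expression along the family $\{P^{(\beta)}\}$ and taking $\sup_\beta$ only gives a lower bound on $R_1(Q,\hat P)=\sup_{P\in\Delta_w}R_1(Q,\hat P;P)$, not the equality \eqref{eq:uBDLoss}.

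The missing ingredient is the paper's reduction \eqref{eq:uBDworstInPAlpha}: among all $P$ with $P(\mathcal{X}_{\mathrm{S}})=\beta$, the single-client error is maximized at $P=P^{(\beta)}$. This follows from the two facts above, since on that slice the second moment is constant while $\|P-P^{(\alpha)}\|^2$ is minimized at $P^{(\beta)}$ (expand the square; $\sum_{x\in\mathcal{X}_{\mathrm{S}}}P_x^2$ and $\sum_{x\in\mathcal{X}_{\mathrm{N}}}P_x^2$ are minimized at the within-group uniform values by Cauchy--Schwarz). With this step inserted, the rest of your plan matches the paper: at $\beta=\alpha$ the second moment equals $M(\alpha,t)$ via \eqref{eq:TraceFIRestPropToJi}, and the slope in $\beta$ is identified with $F(\alpha,t)=\partial_\alpha M(\alpha,t)$; for the latter, your ``match first derivatives'' remark still needs an actual argument (the paper writes $Q_{P^{(\beta)}}(y)$ as an affine function of $\beta$ and uses unbiasedness to eliminate the term arising from the $\alpha$-dependence of $P^{(\alpha)}$), which is a direct but nontrivial verification rather than an immediate consequence of differentiating the defining formulas.
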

\begin{proof}
    Since the estimator is unbiased and $Y_1,\cdots,Y_n$ are i.i.d. given $P$, \eqref{eq:CanonicalEstMSE} is straightforward. 
    Next, by the symmetry of a uBD scheme, the similar argument as \cite{park2023block} shows that whenever $P(\mathcal{X}_{\mathrm{S}})=\beta$, we have
    \begin{align}
        R_1(Q,\hat{P};P) \leq R_1(Q,\hat{P};P^{(\beta)}). \label{eq:uBDworstInPAlpha}
    \end{align}
    The detailed proof of \eqref{eq:uBDworstInPAlpha} is in Appendix~\ref{appsubsec:uBDworstInPAlpha}.
    Hence, we have
    \begin{align}
        R(Q,\hat{P}) = R_1(Q,\hat{P}) = \sup_{\beta \in [0,1]} R_1(Q,\hat{P};P^{(\beta)}). \label{eq:uBDworstInAlphaLimit}
    \end{align}
    In Appendix~\ref{appsubsec:uBDMSEAtPAlpha}, we show that
    \begin{multline}
        R_1(Q,\hat{P};P^{(\beta)})\\
        =-\frac{w}{v(w-v)}(\beta-\alpha)^2 + (\beta-\alpha) F(\alpha,t) + M(\alpha,t),\label{eq:uBDMSEAtPAlpha}
    \end{multline}
    which proves \eqref{eq:uBDLoss}.
\end{proof}
From this, we derive the asymptotic optimality of an optimized uBD scheme and finish the proof of Theorem \ref{thm:ULDP_opt_PUT}.
\begin{corollary}\label{cor:uBDOpt}
    Let $(\alpha^*,t^* )$ be a saddle point of $M(\alpha,t)$, and let $(Q,\hat{P})$ be a $(w,v,\epsilon,\alpha^*,t^*)$-uBD scheme. Then, $(Q,\hat{P})$ is asymptotically optimal, and $M^*(w,v,\epsilon)=M(\alpha^*, t^*)$.
\end{corollary}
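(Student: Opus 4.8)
\textbf{Proof plan for Corollary~\ref{cor:uBDOpt}.} The plan is to combine the converse bound (Proposition~\ref{prop:ConvPart}) with an exact evaluation of the uBD estimation error (Proposition~\ref{prop:uBDEstError}) at the saddle point $(\alpha^*,t^*)$, and to check separately the degenerate case $v\geq 2$, $t^*=\delta^{(v;v)}$. First I would address the main (non-degenerate) case $t^*\neq\delta^{(v;v)}$, so that the estimator sequence $\hat{P}^{(\alpha^*)}$ of Definition~\ref{def:uBDEstimator} is well-defined and Proposition~\ref{prop:uBDEstError} applies. From \eqref{eq:uBDLoss}, the asymptotic error of the $(w,v,\epsilon,\alpha^*,t^*)$-uBD scheme is
\begin{align}
    R(Q,\hat{P}^{(\alpha^*)}) = \sup_{\beta \in [0,1]}\Bigl[-\tfrac{w}{v(w-v)}(\beta-\alpha^*)^2 + (\beta-\alpha^*)F(\alpha^*,t^*) + M(\alpha^*,t^*)\Bigr],
\end{align}
where $F(\alpha,t)=\tfrac{\partial}{\partial\alpha}M(\alpha,t)$. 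The key step is to show that this supremum equals $M(\alpha^*,t^*)$, i.e.\ that the quadratic-in-$\beta$ bracket is maximized (over $[0,1]$) at $\beta=\alpha^*$ with value $M(\alpha^*,t^*)$. Since the coefficient of $(\beta-\alpha^*)^2$ is negative, the unconstrained maximizer is $\beta=\alpha^*+\tfrac{v(w-v)}{2w}F(\alpha^*,t^*)$, so it suffices to argue that the first-order term vanishes or points outward: concretely, I would use the saddle-point characterization $M(\alpha^*,t^*)=\sup_{\alpha\in[0,1]}M(\alpha,t^*)$ to conclude that $\alpha^*$ is a maximizer of the concave function $\alpha\mapsto M(\alpha,t^*)$ on $[0,1]$, which forces $F(\alpha^*,t^*)=0$ if $\alpha^*\in(0,1)$, $F(\alpha^*,t^*)\leq 0$ if $\alpha^*=0$, and $F(\alpha^*,t^*)\geq 0$ if $\alpha^*=1$. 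In each of these three subcases the constrained maximizer of the bracket over $\beta\in[0,1]$ is exactly $\beta=\alpha^*$ (in the boundary cases because pushing $\beta$ toward the relevant endpoint only decreases the bracket), and the maximal value is $M(\alpha^*,t^*)$; hence $R(Q,\hat{P}^{(\alpha^*)})=M(\alpha^*,t^*)$.

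Having shown achievability of $M(\alpha^*,t^*)$, I would then invoke the converse Proposition~\ref{prop:ConvPart} together with the saddle-point identity $\sup_{\alpha\in[0,1]}\inf_{t\in\Delta_v}M(\alpha,t)=M(\alpha^*,t^*)$ (stated in Theorem~\ref{thm:ULDP_opt_PUT} and proved in Appendix~\ref{app:objCvxandAttainSaddle}) to get
\begin{align}
    M(\alpha^*,t^*) \geq M^*(w,v,\epsilon) \geq \sup_{\alpha\in[0,1]}\inf_{t\in\Delta_v}M(\alpha,t) = M(\alpha^*,t^*),
\end{align}
so all inequalities are equalities. This simultaneously yields $M^*(w,v,\epsilon)=M(\alpha^*,t^*)$ and $R(Q,\hat{P}^{(\alpha^*)})=M^*(w,v,\epsilon)$, i.e.\ asymptotic optimality of the scheme. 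Note this also completes the achievability half of Theorem~\ref{thm:ULDP_opt_PUT}, since it shows the claimed $\sup$-$\inf$ value is attained by a concrete scheme.

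It remains to handle the degenerate case $v\geq 2$ and $t^*=\delta^{(v;v)}$, for which Definition~\ref{def:uBDEstimator} and Proposition~\ref{prop:uBDEstError} do not directly apply. Here I would observe from the footnote in Theorem~\ref{thm:ULDP_opt_PUT} that $M_1(\alpha,\delta^{(v;v)})=\infty$ for all $\alpha$, hence $M(\alpha,\delta^{(v;v)})=\infty$, so $\inf_{t}M(\alpha,t)$ is not achieved at $t=\delta^{(v;v)}$ for any $\alpha$; consequently a saddle point $(\alpha^*,t^*)$ cannot have $t^*=\delta^{(v;v)}$ unless $M^*(w,v,\epsilon)=\infty$, which is impossible since $M^*_n=\Theta(1/n)$. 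Therefore this case is vacuous and the corollary follows from the non-degenerate analysis. I expect the main obstacle to be the careful boundary bookkeeping in the first paragraph: verifying that the sign conditions on $F(\alpha^*,t^*)$ coming from the one-sided optimality of $\alpha^*$ on $[0,1]$ are exactly what is needed to pin the constrained maximizer of the quadratic bracket at $\beta=\alpha^*$, rather than at an interior point or the opposite endpoint. The concavity of $\alpha\mapsto M(\alpha,t^*)$ (part of Theorem~\ref{thm:ULDP_opt_PUT}) is what makes these one-sided derivative conditions both available and sufficient.
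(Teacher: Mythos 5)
Your proposal is correct and follows essentially the same route as the paper: evaluate the uBD error via Proposition~\ref{prop:uBDEstError}, use the concavity of $\alpha\mapsto M(\alpha,t^*)$ together with the saddle-point property to obtain the sign conditions on $F(\alpha^*,t^*)$ that pin the supremum over $\beta$ at $\beta=\alpha^*$, and then sandwich $M(\alpha^*,t^*)$ between the achieved error and the converse bound of Proposition~\ref{prop:ConvPart}. Your explicit dismissal of the degenerate case $v\geq 2$, $t^*=\delta^{(v;v)}$ is valid extra care; the paper handles it implicitly since the saddle point is taken within $\mathrm{dom}(M)$, which excludes $\delta^{(v;v)}$ (as $M_1(\alpha,\delta^{(v;v)})=\infty$ while $\inf_{t}M(\alpha,t)$ is finite).
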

\begin{proof}
    From the concavity of $M(\alpha,t)$ on $\alpha$, we have
    \begin{align}
        F(\alpha^*,t^* ) \begin{cases}
            =0 & (\text{if } \alpha^* \in (0,1))\\
            \leq 0 & (\text{if } \alpha^* = 0)\\
            \geq 0 & (\text{if } \alpha^* = 1)
        \end{cases}.
    \end{align}
    From this, it is clear that when $(\alpha,t)=(\alpha^*, t^* )$, the supremum in \eqref{eq:uBDLoss} is attained at $\beta=\alpha^* $, and thus $R(Q,\hat{P}) = M(\alpha^*,t^* )$. This implies that $M(\alpha^*,t^* ) \geq M^*(w,v,\epsilon)$. Since we have proved in Section~\ref{sec:converse} that $M(\alpha^*,t^* ) \leq M^*(w,v,\epsilon)$, we conclude that $M(\alpha^*,t^* ) = M^*(w,v,\epsilon)$, and thus $R(Q,\hat{P}) = M^*(w,v,\epsilon)$.
\end{proof}
\begin{remark}
    Corollary~\ref{cor:uBDOpt} means that an optimized uBD scheme achieves the minimum possible \emph{asymptotic} error among \emph{all} $(v,\epsilon)$-ULDP schemes, which are not necessarily unbiased.
    Moreover, by the same argument as in \cite{park2023block}, we can also show that for each of \emph{fixed $n$}, an optimized uBD scheme attains the minimum possible estimation error among all \emph{unbiased} $(v,\epsilon)$-ULDP schemes.
\end{remark}

\subsection{Closed-form Solution: Simple uBD Schemes}
Note that in all regimes where we find the closed-form solution as in Theorem~\ref{thm:closedFormExpPreciseForm}, $t^*$ is a point mass. There is a special behavior of uBD schemes such that $t$ is a point mass, which not only derives a simpler description of such schemes, but also plays an important role in the proof of Theorem~\ref{thm:closedFormExpPreciseForm}.
In short, the estimator $\hat{P}^{(\alpha)}$ does not depend on $\alpha$. 
\begin{proposition}\label{prop:simpleuBDEstInvarOverAlpha}
    Let $k \in [v]$. Let $Q$ be a $(w,v,\epsilon,\delta^{(k;v)})$-uBD mechanism. Assume that $k \neq v$ when $v \geq 2$. Then, the estimators $\hat{P}^{(\alpha)}$ in Definition~\ref{def:uBDEstimator} satisfy $\hat{P}^{(\alpha)}=\hat{P}^{(\alpha')}$ for every $\alpha,\alpha' \in [0,1]$. More precisely, the following holds for all $\alpha \in [0,1]$:
        \begin{enumerate}
        \item[(i)] For $x \in \mathcal{X}_{\mathrm{S}} = [v]$,
    \begin{multline}
        \!\!\!\!\!\!\!\!\!\!\!\!(\hat{P}^{(\alpha)}_1(y))_x =\\
        \begin{cases}
            1+\frac{v-1}{k(e^\epsilon-1)} & (\text{if }x \in y)\\
            -\frac{(k-1)(e^\epsilon-1)+(v-1)}{(v-k)(e^\epsilon-1)} & (\text{if }x \notin y, y \in \mathcal{Y}_{\mathrm{P}})\\
            -\frac{1}{k(e^\epsilon-1)} & (\text{if }y \in \mathcal{Y}_{\mathrm{I}})
        \end{cases}.\label{eq:simpleuBDEstSensitive}
    \end{multline}
        \item[(ii)] For $x \in \mathcal{X}_{\mathrm{N}} = [v+1:w]$,
    \begin{equation}
        (\hat{P}^{(\alpha)}_1(y))_x = \frac{(k(e^\epsilon-1)+v)}{k(e^\epsilon-1)} \mathbbm{1}(x \in y).\label{eq:simpleuBDEstNonSensitive}
    \end{equation}
    \end{enumerate}
\end{proposition}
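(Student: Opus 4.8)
The plan is a direct computation that exploits the extreme symmetry of a $(w,v,\epsilon,\delta^{(k;v)})$-uBD mechanism. Since the mixture weight $t=\delta^{(k;v)}$ is a point mass, only the size-$k$ blocks carry positive probability, so without loss of generality $\mathcal{Y}_{\mathrm{P}}$ consists of the blocks of $\mathcal{E}_k$ (all other columns of $Q$ are identically zero and contribute nothing), and by Definition~\ref{def:uBDMech} the weight $\gamma(y)$ is the same constant for every $y\in\mathcal{E}_k$. First I would record the one-line closed forms of $M_i(\alpha,\delta^{(k;v)})$, obtained by collapsing the sums $\sum_{k'}t_{k'}(\cdots)$ in Theorem~\ref{thm:ULDP_opt_PUT} to their $k$-th terms, e.g. $M_1(\alpha,\delta^{(k;v)})=\frac{(v-1)^2(\alpha k(e^\epsilon-1)+v)(ke^\epsilon+v-k)}{vk(v-k)(e^\epsilon-1)^2}$ and the analogous expressions for $M_2,M_3$. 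Next, using Definition~\ref{def:ULDPExtremal} together with the form of $P^{(\alpha)}$ in \eqref{eq:DefPAlpha}, a short count gives $Q_{P^{(\alpha)}}(y)=\gamma(y)\cdot\frac{\alpha k(e^\epsilon-1)+v}{v}$ for $y\in\mathcal{Y}_{\mathrm{P}}$ and $Q_{P^{(\alpha)}}(\{x_0\})=\frac{1-\alpha}{w-v}f(\gamma)$ for $\{x_0\}\in\mathcal{Y}_{\mathrm{I}}$. Dividing $Q(y|x)$ by these, the nuisance constants $\gamma(y),f(\gamma)$ cancel and the score vector $\eta_{P^{(\alpha)},Q}(y)$ takes only the values $\frac{ve^\epsilon}{\alpha k(e^\epsilon-1)+v}$ (on $x\in y$, $y\in\mathcal{Y}_{\mathrm{P}}$), $\frac{v}{\alpha k(e^\epsilon-1)+v}$ (on the remaining coordinates when $y\in\mathcal{Y}_{\mathrm{P}}$), $\frac{w-v}{1-\alpha}$ (on $x=x_0$ when $y=\{x_0\}\in\mathcal{Y}_{\mathrm{I}}$) and $0$ (on $x\ne x_0$, $y\in\mathcal{Y}_{\mathrm{I}}$).

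With the score vector in hand, the projections $\Pi_i(\eta_{P^{(\alpha)},Q}(y))$ only require the two block means $\bar z^{\mathrm S},\bar z^{\mathrm N}$ of these entries over the sensitive and non-sensitive coordinates: $\Pi_1$ and $\Pi_2$ subtract $\bar z^{\mathrm S}$, $\bar z^{\mathrm N}$ on the respective blocks, and $\Pi_3(z)=\frac{\bar z^{\mathrm S}-\bar z^{\mathrm N}}{w}\big[(w-v)\mathbf{1}_v;\,-v\mathbf{1}_{w-v}\big]$; these are exactly the quantities tabulated in Table~\ref{tab:compScoreFuncAndProj}, which I would cite to avoid redoing the linear algebra. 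Substituting into $\hat P^{(\alpha)}_1(y)=P^{(\alpha)}+\sum_{i}\frac{M_i(\alpha,\delta^{(k;v)})}{d_i}\Pi_i(\eta_{P^{(\alpha)},Q}(y))$ with $(d_1,d_2,d_3)=(v-1,w-v-1,1)$, the key algebraic phenomenon becomes transparent once one verifies the three $\alpha$-free identities $\frac{M_1(\alpha,\delta^{(k;v)})}{v-1}\cdot\frac{v}{\alpha k(e^\epsilon-1)+v}=\frac{(v-1)(ke^\epsilon+v-k)}{k(v-k)(e^\epsilon-1)^2}$, $M_3(\alpha,\delta^{(k;v)})\cdot\frac{v}{\alpha k(e^\epsilon-1)+v}=\frac{w(1-\alpha)}{(w-v)(e^\epsilon-1)k}$ and $\frac{M_2(\alpha,\delta^{(k;v)})}{(w-v-1)(1-\alpha)}=\frac{ke^\epsilon+v-k}{(w-v)(e^\epsilon-1)k}$: after substituting these, the residual $\alpha$-dependence of each coordinate of $\hat P^{(\alpha)}_1(y)$ sits only in the $\mathcal{H}_3$-contribution and in the $P^{(\alpha)}$ summand itself, and the two cancel coordinate by coordinate. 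Carrying this out in the cases — $x$ sensitive with $x\in y$; $x$ sensitive with $x\notin y$ ($y\in\mathcal{Y}_{\mathrm{P}}$); $x$ sensitive with $y\in\mathcal{Y}_{\mathrm{I}}$; $x$ non-sensitive with $y\in\mathcal{Y}_{\mathrm{P}}$; $x$ non-sensitive with $y\in\mathcal{Y}_{\mathrm{I}}$, split by whether $x\in y$ — yields precisely the five expressions in \eqref{eq:simpleuBDEstSensitive}--\eqref{eq:simpleuBDEstNonSensitive}, none of which involves $\alpha$. In particular $\hat P^{(\alpha)}=\hat P^{(\alpha')}$ for all $\alpha,\alpha'\in(0,1)$, and since the final formulas are already $\alpha$-free the continuous-extension clause of Definition~\ref{def:uBDEstimator} produces the same object for $\alpha\in\{0,1\}$.

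I do not anticipate a conceptual obstacle: the proof is bookkeeping, and the only real care needed is to organize the case analysis so that the disappearance of every $\alpha$-term is visible rather than emerging from a chain of ad hoc simplifications — which is exactly why I would isolate the three $\alpha$-free identities above at the outset and apply them mechanically. Two minor points finish the argument: the case $x\notin y$ with $y\in\mathcal{Y}_{\mathrm{P}}$ is vacuous when $k=v$, which is excluded when $v\ge 2$, so the denominator $(v-k)(e^\epsilon-1)$ never vanishes; and the degenerate instance $v=1$ (where $d_1=0$) is covered by the convention $\frac{M_1}{d_1}\Pi_1(\cdot)=0$ already built into Definition~\ref{def:uBDEstimator}.
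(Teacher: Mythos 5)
Your proposal is correct and follows essentially the same route as the paper, which omits the proof precisely because it is this direct calculation of \eqref{eq:uBDEstimator}: substituting the closed forms of $M_i(\alpha,\delta^{(k;v)})$ and the tabulated projections $\Pi_i(\eta_{P^{(\alpha)},Q}(y))$ from Table~\ref{tab:compScoreFuncAndProj} into the estimator and checking case by case that the $\alpha$-dependence of the $\mathcal{H}_3$-term cancels against that of $P^{(\alpha)}$, with the boundary values $\alpha\in\{0,1\}$ handled by continuity. The only nitpick is cosmetic: your second displayed identity, $M_3(\alpha,\delta^{(k;v)})\cdot\frac{v}{\alpha k(e^\epsilon-1)+v}=\frac{w(1-\alpha)}{(w-v)(e^\epsilon-1)k}$, is not literally $\alpha$-free (it keeps a factor $1-\alpha$), though the subsequent cancellation argument you give is exactly right.
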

Proposition~\ref{prop:simpleuBDEstInvarOverAlpha} can be shown by a direct calculation of \eqref{eq:uBDEstimator} to verify that it is equal to the above expression, hence we omit the proof.
Due to this proposition, we say a \textbf{$(w,v,\epsilon,k)$-simple uBD scheme} to mean a $(w,v,\epsilon,\alpha,\delta^{(k;v)})$-scheme, without ambiguity on specifying $\alpha$.
Proposition~\ref{prop:simpleuBDEstInvarOverAlpha} can be used to implement the estimator for a simple uBD scheme in a simpler way than using \eqref{eq:uBDEstimator}.
As a remark, the $(w,v,\epsilon,1)$-simple uBD scheme is equivalent to uRR \cite{murakami2019utility}.

Granting Theorem~\ref{thm:closedFormExpPreciseForm} about the closed-form solution in some regimes, which is proved in Appendix~\ref{app:proofClosedFormExpPreciseForm}, we deduce a closed-form characterization of optimal uBD schemes in the corresponding regimes. Interestingly, for Case (a), we deduce the optimality of uRR \cite{murakami2019utility}, which has not been shown before.
\begin{corollary}\label{cor:uBDk2optimal}
\,

    \begin{enumerate}
        \item[(a)] 
        Suppose that at least one of the following three conditions holds:
        \begin{enumerate}
            \item[(i)] $v=1$;
            \item[(ii)] $v \geq 2$ and $\epsilon \geq \ln\left(w-v+\sqrt{\dfrac{(w-1)(w-2)}{2}}\right)$;
            \item[(iii)] $v=2$ and $\epsilon \leq \ln\left(1+\sqrt{\dfrac{2(w-2)}{w-1}}\right)$;
        \end{enumerate}
        Then, uRR \cite{murakami2019utility} (equivalently, the $(w,v,\epsilon,1)$-simple uBD scheme) is asymptotically optimal.
        \item[(b)] Suppose that $v \geq 4$ and $\epsilon \leq \ln\sqrt{\tfrac{(v-1)(v-2)}{2}}$. Then, for each $k^* \in K^*(v,\epsilon) \cap [2:v-1]$, any $(w,v,\epsilon,k^*)$-simple uBD schemes are asymptotically optimal.
    \end{enumerate} 
\end{corollary}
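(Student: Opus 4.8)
The plan is to derive the corollary by chaining together three results already in hand: the explicit saddle points of Theorem~\ref{thm:closedFormExpPreciseForm}, the optimality of an optimized uBD scheme in Corollary~\ref{cor:uBDOpt}, and the $\alpha$-invariance of the estimator when the mixture proportion is a point mass (Proposition~\ref{prop:simpleuBDEstInvarOverAlpha}), finishing in part~(a) with the remark that the $(w,v,\epsilon,1)$-simple uBD scheme is uRR \cite{murakami2019utility}. First I would observe that the hypotheses of part~(a) are verbatim the conditions (i)--(iii) defining Case~(a) of Theorem~\ref{thm:closedFormExpPreciseForm}, and the hypothesis of part~(b) is verbatim Case~(b). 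Hence Theorem~\ref{thm:closedFormExpPreciseForm} supplies an explicit saddle point $(\alpha^*,t^*)$ of $M$ in each situation, with $t^*=\delta^{(1;v)}$ in part~(a), and $t^*=\delta^{(k^*;v)}$ for any prescribed $k^*\in K^*(v,\epsilon)\cap[2:v-1]$ (a nonempty set in the regime of part~(b), by the footnote to Theorem~\ref{thm:closedFormExpPreciseForm}) in part~(b).

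Next I would feed this saddle point into Corollary~\ref{cor:uBDOpt}, after checking that the $(w,v,\epsilon,\alpha^*,t^*)$-uBD scheme of Definition~\ref{def:uBDEstimator} is well-defined, i.e., that $t^*\neq\delta^{(v;v)}$ whenever $v\ge 2$: in part~(a), $t^*=\delta^{(1;v)}$ and the only sub-cases with $v\ge 2$ are (ii) and (iii), where $1<v$, while sub-case (i) is $v=1$, for which Definition~\ref{def:uBDEstimator} imposes no such restriction; in part~(b), $t^*=\delta^{(k^*;v)}$ with $k^*\le v-1<v$. Corollary~\ref{cor:uBDOpt} then yields that this scheme is asymptotically optimal and that $M^*(w,v,\epsilon)=M(\alpha^*,t^*)$.

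It remains to recognize this asymptotically optimal scheme as the claimed simple uBD (respectively uRR) scheme. Since $t^*$ is a point mass at $k$ (with $k=1$ in part~(a) and $k=k^*$ in part~(b)) and $k\neq v$ whenever $v\ge 2$, Proposition~\ref{prop:simpleuBDEstInvarOverAlpha} shows that $\hat{P}^{(\alpha^*)}$ does not depend on $\alpha^*$; hence the scheme of the previous step is precisely the $(w,v,\epsilon,k)$-simple uBD scheme, independently of the particular value of $\alpha^*$ produced by Theorem~\ref{thm:closedFormExpPreciseForm}. For part~(b) this is already the assertion. For part~(a) I would conclude by invoking the remark following Proposition~\ref{prop:simpleuBDEstInvarOverAlpha} that the $(w,v,\epsilon,1)$-simple uBD scheme coincides with uRR; the degenerate case $v=1$ of sub-case (i), where $\Delta_1$ is a singleton and the only block design on one vertex is the trivial $(1,1,1,1,0)$-design, is handled by the same chain of reasoning.

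Granting Theorem~\ref{thm:closedFormExpPreciseForm}, there is no substantive obstacle: the argument is an assembly of cited statements, and the only points demanding care are matching the hypotheses to the Cases of Theorem~\ref{thm:closedFormExpPreciseForm} and confirming the point-mass structure of $t^*$ so that Proposition~\ref{prop:simpleuBDEstInvarOverAlpha} applies and the optimal scheme can be described without reference to $\alpha$. Were one instead to prove the corollary without Theorem~\ref{thm:closedFormExpPreciseForm}, the real work would be verifying the saddle-point identity $M(\alpha^*,t^*)=\sup_{\alpha}M(\alpha,t^*)=\inf_{t}M(\alpha^*,t)$ by differentiating the explicit concave--convex objective $M$ and carrying out the regime-by-regime case analysis; that is where the difficulty would lie.
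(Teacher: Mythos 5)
Your proposal is correct and follows essentially the same route as the paper, which deduces the corollary by combining the saddle points of Theorem~\ref{thm:closedFormExpPreciseForm} with Corollary~\ref{cor:uBDOpt}, and uses Proposition~\ref{prop:simpleuBDEstInvarOverAlpha} (plus the noted equivalence of the $(w,v,\epsilon,1)$-simple uBD scheme with uRR) to describe the optimal scheme without reference to $\alpha^*$. Your added checks that $t^*\neq\delta^{(v;v)}$ for $v\ge 2$ and that $K^*(v,\epsilon)\cap[2:v-1]\neq\emptyset$ are exactly the care points the paper relies on implicitly.
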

\begin{remark}
    If $(\mathcal{X}_{\mathrm{S}},\mathcal{E}_k)$ is the complete block design in Remark~\ref{rmk:CBD}, the corresponding $(w,v,\epsilon,k)$-simple uBD scheme can be seen as a modification of the SS scheme \cite{ye2018optimal}. 
    However, this uBD scheme is not equivalent to the uSS scheme \cite{he2025addressing}, especially because the uBD scheme matches precisely one invertible element to each non-sensitive element, whereas the uSS scheme matches more than one invertible element, thereby introducing additional variance in the estimation.
\end{remark}

\begin{remark}\label{cor:comm_eff_uBD}
    The main advantage of considering the class of uBD schemes is that we can find an optimal and ``communication-efficient" ULDP scheme, similar to the class of block design schemes \cite{park2023block}.
    Here, the communication cost is defined as $\log_2|\supp(Q)|$ bits per client.
    Specifically, a straightforward generalization of the argument in \cite[Section IV-B]{park2023block} shows that in the regime of Case (b) in Theorem~\ref{thm:closedFormExpPreciseForm}, if there exists a block design with $b=v$ and $k \in K^*(v,\epsilon) \cap [2:v-1]$, then there exists a simple uBD scheme which is not only optimal, but also has communication cost of $\log_2 w$.
    This amount of communication is the smallest possible communication cost required for a consistent estimation \cite{park2023block, acharya2019communication}. Note that in the regime of Case (a), uRR \cite{murakami2019utility} achieves optimal PUT and the communication cost of $\log_2 w$.
\end{remark}

The region where a closed-form solution cannot be obtained through Theorem~\ref{thm:closedFormExpPreciseForm} is $v \geq 2$ and $\epsilon \in (\epsilon_L,\epsilon_H)$, where
\begin{align}
    \epsilon_L &= \begin{cases}
        \ln \left(1+\sqrt{\frac{2(w-2)}{w-1}} \right) & (\text{if }v=2)\\
        \ln \sqrt{\frac{(v-1)(v-2)}{2}} & (\text{if }v \geq 3)
    \end{cases},\label{eq:epsLowDef}\\
    \epsilon_H &=  \ln\left(w-v+\sqrt{\frac{(w-1)(w-2)}{2}}\right).\label{eq:epsHighDef}
\end{align}
In the next section, while performing experiments on a real dataset, we find the optimal parameters $(\alpha,t)$ by \emph{MATLAB Optimization Toolbox} in the region above and analyze the result.

\section{Experiments}\label{sec:experiment}
\begin{figure*}[h!]
    \centering
    \begin{subfigure}{0.8\linewidth}
        \makebox[\linewidth][c]{\includegraphics[width = 0.5\linewidth]{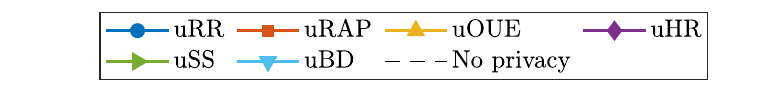}}
    \end{subfigure}
        \begin{subfigure}{0.32\linewidth}
        \makebox[\linewidth][c]{\includegraphics[width=\linewidth]{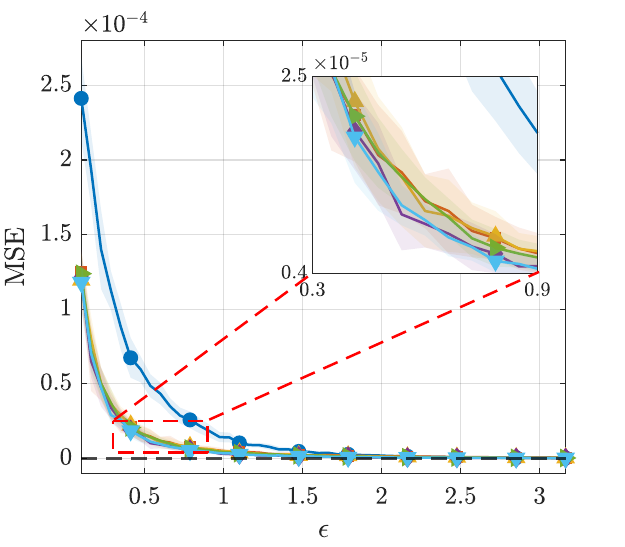}}
        \caption{$\epsilon\in\left[0.1,\, \epsilon_L\right]$}
    \end{subfigure}
    \begin{subfigure}{0.32\linewidth}
        \makebox[\linewidth][c]{\includegraphics[width=\linewidth]{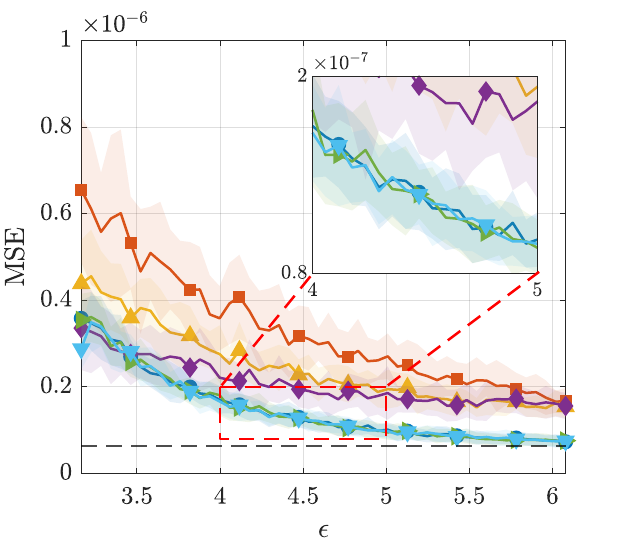}}
        \caption{$\epsilon\in\left[\epsilon_L,\,\epsilon_H \right]$}
    \end{subfigure}
    \begin{subfigure}{0.32\linewidth}
        \makebox[\linewidth][c]{\includegraphics[width=\linewidth]{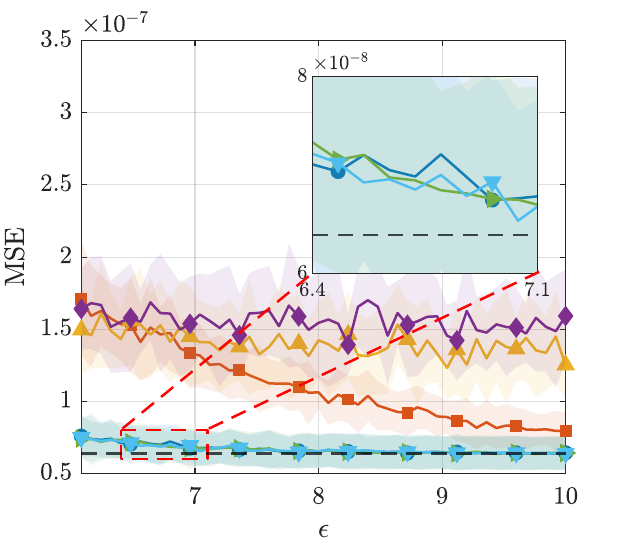}}
        \caption{$\epsilon\in[\epsilon_H,\,10]$}
    \end{subfigure}
    \caption{$\epsilon$ vs. PUT where $(w,v) = (277,35)$.}
    \label{fig:exp_(277,35)}
\end{figure*}
\begin{figure*}[h!]
    \centering
    \begin{subfigure}{0.32\linewidth}
        \makebox[\linewidth][c]{\includegraphics[width=\linewidth]{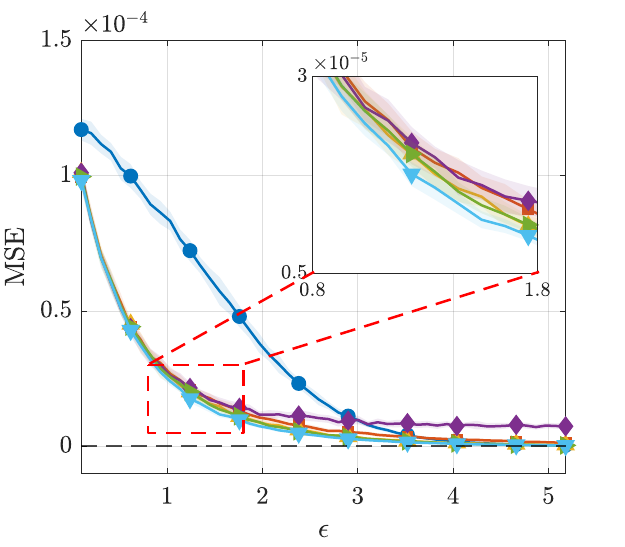}}
        \caption{$\epsilon\in\left[0.1,\, \epsilon_L\right]$}
    \end{subfigure}
    \begin{subfigure}{0.32\linewidth}
        \makebox[\linewidth][c]{\includegraphics[width=\linewidth]{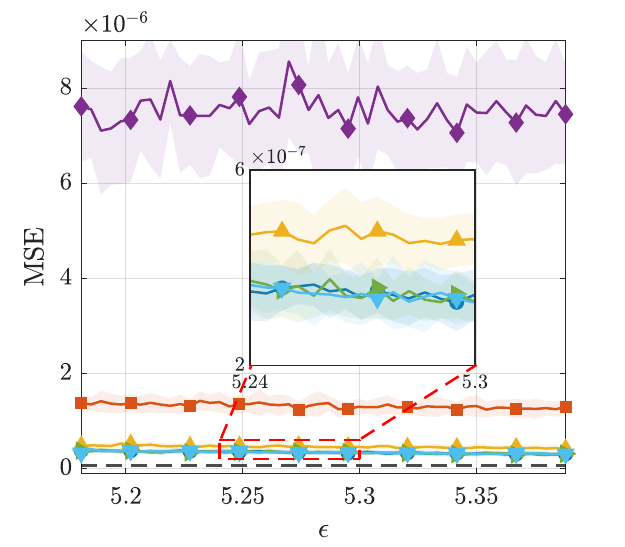}}
        \caption{$\epsilon\in\left[\epsilon_L,\,\epsilon_H \right]$}
    \end{subfigure}
    \begin{subfigure}{0.32\linewidth}
        \makebox[\linewidth][c]{\includegraphics[width=\linewidth]{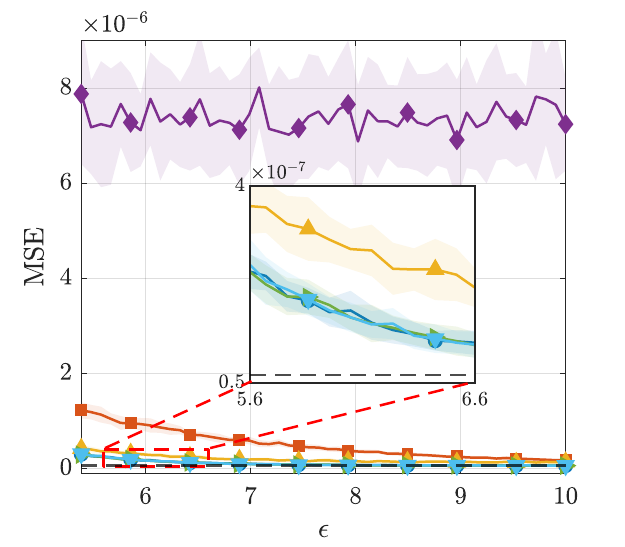}}
        \caption{$\epsilon\in[\epsilon_H,\,10]$}
    \end{subfigure}
    \caption{$\epsilon$ vs. PUT where $(w,v) = (277,253)$.}
    \label{fig:exp_(277,253)}
\end{figure*}
\begin{figure*}[h!]
    \centering
    \begin{subfigure}{0.32\linewidth}
        \makebox[\linewidth][c]{\includegraphics[width =\linewidth]{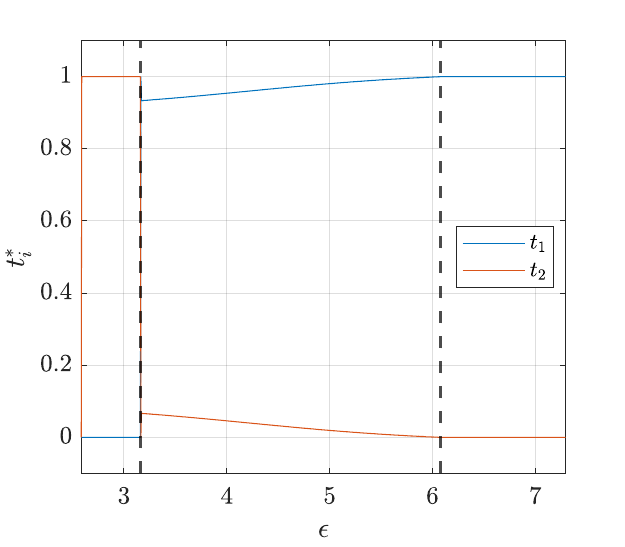}}
        \caption{$(w,v)=(277,35)$}
    \end{subfigure}
    \begin{subfigure}{0.32\linewidth}
        \makebox[\linewidth][c]{\includegraphics[width =\linewidth]{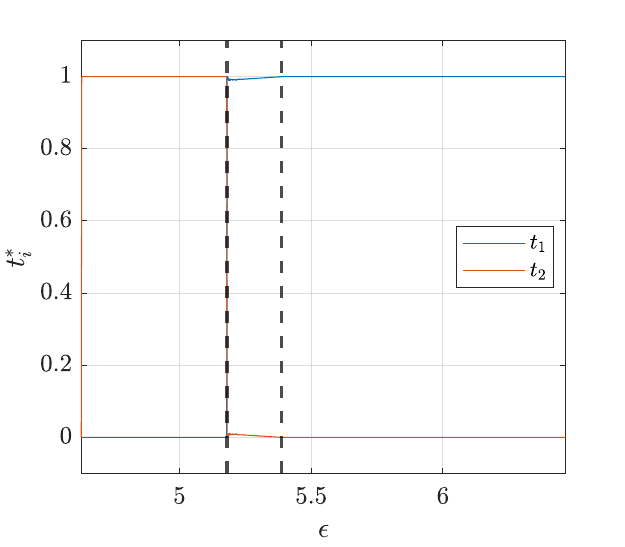}}
        \caption{$(w,v)=(277,253)$}
    \end{subfigure}
    \caption{$\epsilon$ vs. $t^*_i$, where the vertical lines represent $\epsilon_L$ and $\epsilon_H$.}
    \label{fig:exp_t_i}
\end{figure*}
We empirically evaluate the PUTs of uBD and baseline schemes \cite{murakami2019utility, liu2024multidimensional, acharya2020context, he2025addressing}.
We use the American Community Survey (ACS) 2023 1-year public use microdata sample (PUMS) \cite{acs_pums_2023_1y}, which contains records for $n = 1,732,343$ individuals.
Six attributes are retained: age, educational attainment, marital status, disability record, employment status, and income-to-poverty ratio, whose category counts are $(4,2,2,3,2,3)$, giving $288$ in total; excluding empty nominal categories yields $w = 277$ unique input alphabets.

We applied two alternative criteria:
\begin{enumerate}
    \item[(i)] \textbf{Stringent criterion} ${(v = 35)}$: A value is sensitive only if all three conditions hold simultaneously:
    \begin{itemize}
        \item educationally disadvantaged, widowed/divorced, or disabled,
        \item absence from the labor force,
        \item income below the poverty threshold.
    \end{itemize}
    \item[(ii)] \textbf{Permissive criterion} ${(v = 253)}$: A value is sensitive if it satisfies any one of the above conditions.
\end{enumerate}
The stringent rule labels $79{,}141$ records as sensitive, while the permissive rule identifies $1{,}022{,}617$.

For each criterion, we randomly sample $50,000$ users twenty times and report the average PUT across these subsamples. 
Guided by Theorem~\ref{thm:closedFormExpPreciseForm}, we partition the privacy budget into three intervals: $[0.1, \epsilon_L]$, $[\epsilon_L, \epsilon_H]$, and $[\epsilon_H, 10]$, where 
$(\epsilon_L,\epsilon_H)$ are in \eqref{eq:epsLowDef}-\eqref{eq:epsHighDef}.
Figs. \ref{fig:exp_(277,35)} and \ref{fig:exp_(277,253)} show that our scheme consistently outperforms all baselines, regardless of privacy regimes and sensitive-set sizes.
In the intermediate regime $\epsilon\in(\epsilon_L, \epsilon_H)$, where a closed-form expression of the optimal parameter $(\alpha^*,t^*)$ is unknown, we plot the empirical values of $t^*$ in Fig.~\ref{fig:exp_t_i}.
Remarkably, $t^*$ collapses to the two-point form $t^* = (t_1,t_2,0,\dots,0)$.\footnote{Components below $10^{-8}$ are treated as numerical noise and forced to zero; after which the vector is renormalized.}
This observation leads us to carefully conjecture that utilizing a single block design is not sufficient, but utilizing just two block designs—those with $k=1$ and $k=2$—is sufficient to attain the optimal PUT throughout $\epsilon\in(\epsilon_L,\epsilon_H)$.
The MATLAB codes for experiments are available at \url{https://github.com/phy811/uBD}.

\section{Conclusion}\label{sec:concl}
In this paper, we characterized the optimal PUT for discrete distribution estimation under the ULDP constraint. We proved the converse by leveraging a uniform asymptotic Cram\'er-Rao lower bound, a reduction to extremal ULDP mechanisms, and a distribution decomposition argument. Also, we proposed a class of PUT-optimal uBD schemes, obtained through nontrivial modifications of block design mechanisms and by employing a score-based estimator together with a distribution decomposition argument.

An interesting direction for future research is to derive closed-form solutions to our optimization problem in the intermediate privacy regime, thereby specifying the optimal parameters for the estimator $\alpha^*$ and the mixture proportion $t^*$.
Also, we could consider other formulations of PUT, such as instance-optimality setup \cite{steinbergerEfficiencyLocalDifferential2024} rather than worst-case setup, and considering general $\ell_p$ loss or $f$-divergence loss (such as relative entropy) other than MSE loss ($\ell_2$).
It is also of interest to extend the analysis to a broader class of privacy metrics beyond LDP and ULDP.
We believe that the analytical tools developed in this paper--particularly the uniform asymptotic Cram\'er-Rao lower bound--will provide a valuable foundation for pursuing these directions.

\bibliographystyle{ieeetr}
\bibliography{IEEEabrv, ref}

\newpage
\appendices

\begingroup
\renewcommand{\arraystretch}{2}
\newcommand{\balstrut}{\rule[-3ex]{0pt}{7ex}}
\newcolumntype{C}{>{\centering\arraybackslash\balstrut}c}
\begin{table*}[htbp]
    \centering
    \begin{NiceTabular}{|C||CCC|CCC|}
\hline
\multirow{2}{*}{} & \multicolumn{3}{c}{$\displaystyle x \in \mathcal{X}_{\mathrm{S}}$}                                 & \multicolumn{3}{c}{$\displaystyle x \in \mathcal{X}_{\mathrm{N}}$}                                 \\ \cline{2-7} 
                  & \multicolumn{1}{c}{{$\displaystyle y \in \mathcal{Y}_{\mathrm{P}}^{(k)}$, $\displaystyle x \in y$}}\vline & \multicolumn{1}{c}{{$\displaystyle y \in \mathcal{Y}_{\mathrm{P}}^{(k)}$, $\displaystyle x \notin y$}}\vline & $\displaystyle y \in \mathcal{Y}_{\mathrm{I}}$ & \multicolumn{1}{c}{$\displaystyle y \in \mathcal{Y}_{\mathrm{P}}^{(k)}$}\vline & \multicolumn{1}{c}{$\displaystyle y \in \mathcal{Y}_{\mathrm{I}}, x \in y$}\vline & $\displaystyle y \in \mathcal{Y}_{\mathrm{I}}, x \notin y$ \\ \hline
 $\displaystyle(\eta_{P^{(\alpha),Q}}(y))_x$                 & \multicolumn{1}{c}{$\displaystyle\frac{ve^\epsilon}{\alpha k(e^\epsilon-1)+v}$}   & \multicolumn{1}{c}{$\displaystyle\frac{v}{\alpha k(e^\epsilon-1)+v}$}   &  $\displaystyle0$  & \multicolumn{1}{c}{$\displaystyle\frac{v}{\alpha k(e^\epsilon-1)+v}$}   & \multicolumn{1}{c}{$\displaystyle\frac{w-v}{1-\alpha}$}   &  $0$  \\ \hline
$\displaystyle\left(\Pi_1\left(\eta_{P^{(\alpha),Q}}(y)\right)\right)_x$                & \multicolumn{1}{c}{$\displaystyle\frac{(v-k)(e^\epsilon-1)}{\alpha k(e^\epsilon-1)+v}$}   & \multicolumn{1}{c}{$\displaystyle\frac{-k(e^\epsilon-1)}{\alpha k(e^\epsilon-1)+v}$}   &  $\displaystyle0$  & \multicolumn{1}{c}{$\displaystyle0$}   & \multicolumn{1}{c}{$\displaystyle0$}   &  $\displaystyle0$  \\ \hline
$\displaystyle\left(\Pi_2\left(\eta_{P^{(\alpha),Q}}(y)\right)\right)_x$                 & \multicolumn{1}{c}{$\displaystyle0$}   & \multicolumn{1}{c}{$\displaystyle0$}   &  $\displaystyle0$  & \multicolumn{1}{c}{$\displaystyle0$}   & \multicolumn{1}{c}{$\displaystyle\frac{w-v-1}{1-\alpha}$}   &  $\displaystyle-\frac{1}{1-\alpha}$  \\ \hline
$\displaystyle\left(\Pi_3\left(\eta_{P^{(\alpha),Q}}(y)\right)\right)_x$                   & \multicolumn{1}{c}{$\displaystyle\frac{k(w-v)(e^\epsilon-1)}{w(\alpha k(e^\epsilon-1)+v)}$}   & \multicolumn{1}{c}{$\displaystyle\frac{k(w-v)(e^\epsilon-1)}{w(\alpha k(e^\epsilon-1)+v)}$}   &  $\displaystyle-\frac{w-v}{w(1-\alpha)}$  & \multicolumn{1}{c}{$\displaystyle\frac{-vk(e^\epsilon-1)}{w(\alpha k(e^\epsilon-1)+v)}$}   & \multicolumn{1}{c}{$\displaystyle\frac{v}{w(1-\alpha)}$}   &   $\displaystyle\frac{v}{w(1-\alpha)}$ \\ \hline
\end{NiceTabular}
    \caption{Components of Score Vectors and Their Projections for Extremal ULDP Mechanisms.
    $\mathcal{Y}_{\mathrm{P}}^{(k)}$ is defined in Appendix~\ref{app:additionalNotation}.
    }
    \label{tab:compScoreFuncAndProj}
\end{table*}
\endgroup

\section{Additional Notations}\label{app:additionalNotation}
\newcommand{\VAR}{\mathrm{VAR}}
For an $\mathbb{R}^n$-valued random variable $Z$, we denote $\VAR(Z):=\sum_{i=1}^{n}\Var(Z_i)$.

Suppose that $v$ is given. For each $k\in[v]$, we define $\mathcal{Y}_{\mathrm{P}}^{(k)}:=\{y \subset [v]:|y|=k\}$.

\section{Convexity and Attainment of the Optimization Problem}\label{app:objCvxandAttainSaddle}
In this appendix, we briefly explain the reason that the objective function $M(\alpha,t)$ has a saddle point and is concave-convex.

First, we show that $M$ is concave-convex.
For a fixed $\alpha$, each of $M_i(\alpha,t)$ is convex in $t$, being a composition of a linear function with a convex function $z \mapsto 1/z, z \in [0,\infty)$. Also, for a fixed $t$, each of $M_i(\alpha,t)$ is proportional to the weighted harmonic mean of concave functions on $\alpha$ over $k$ with weights $t_k$, where the corresponding concave functions for $i=1,2,3$ are $\tfrac{(\alpha k(e^\epsilon-1)+v)(ke^\epsilon+v-k)}{k(v-k)}$, $\tfrac{(1-\alpha)(ke^\epsilon+v-k)}{k}$, and $\tfrac{(1-\alpha)(\alpha k(e^\epsilon-1)+v)}{k}$, respectively. It is a standard fact \cite{boyd_convex_2004} that the harmonic mean is concave; thus, each of $M_i(\alpha,t)$ is concave in $\alpha$.
Thus, we conclude that the objective function $M(\alpha,t)$ is concave-convex in $(\alpha,t)$.

Next, we show that $M$ has a saddle point. For $v=1$, we have $\mathrm{dom}(M) = [0,1] \times \Delta_v$, and for $v \geq 2$, we have $\mathrm{dom}(M) = [0,1] \times (\Delta_v \backslash \{\delta^{(v;v)}\})$. Also, $M$ is continuous everywhere in $[0,1] \times \Delta_v$. Since $[0,1]$ is compact, by Sion's minimax theorem \cite{sionGeneralMinimaxTheorems1958}, the saddle point of $M$ exists.

\section{Detailed Proof for Section~\ref{sec:converse}}\label{app:DetailsForConverse}
\subsection{Statement and Proof of Equicontinuity Result in Theorem~\ref{thm:unifAsympCRLB}}\label{appsubsec:proofPropFIBwTwoDistIneq}
\begin{proposition}\label{prop:FIBwTwoDistIneq}
        Let $P,P' \in \Delta_w^\circ$. Then, for any conditional distribution $Q$ from $[w]$ to a finite set $\mathcal{Y}$, we have
        \begin{align}
         \mathcal{J}_{P',Q} \preceq \left(\max_{x \in [w]}\frac{P_x}{P'_x} \right) \mathcal{J}_{P, Q}. \label{eq:FIBwTwoDistInequ}
        \end{align}
\end{proposition}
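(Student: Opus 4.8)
The plan is to use the explicit quadratic-form representation of the Fisher information and compare the two forms summand by summand over $\supp(Q)$. First I would note that for $h\in\vec{\Delta}_w$ the score $y\mapsto\inp{\eta_{P,Q}(y)}{h}$ is centered under $Q_P$, since $\bbE_{Y\sim Q_P}[\inp{\eta_{P,Q}(Y)}{h}]=\sum_{x\in[w]}h_x\sum_{y}Q(y|x)=\sum_{x\in[w]}h_x=0$. Hence the Fisher form is a pure second moment,
\begin{equation}
    \mathcal{J}_{P,Q}(h,h)=\sum_{y\in\supp(Q)}\frac{\big(\sum_{x\in[w]}Q(y|x)h_x\big)^2}{Q_P(y)},
\end{equation}
and similarly for $P'$ with $Q_{P'}(y)$ in the denominator. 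Since $P,P'\in\Delta_w^\circ$, every $Q_P(y)$ and $Q_{P'}(y)$ with $y\in\supp(Q)$ is strictly positive, so these expressions are well defined.

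Next I would establish the pointwise comparison of the output marginals. Setting $c:=\max_{x\in[w]}P_x/P'_x$, which is finite because $P'\in\Delta_w^\circ$, we have $P_x\le c\,P'_x$ for every $x$, so $Q_P(y)=\sum_{x}Q(y|x)P_x\le c\sum_{x}Q(y|x)P'_x=c\,Q_{P'}(y)$ for every $y$. For $y\in\supp(Q)$ this rearranges to $1/Q_{P'}(y)\le c/Q_P(y)$.

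Combining the two, for every $h\in\vec{\Delta}_w$ one obtains, term by term, $\mathcal{J}_{P',Q}(h,h)\le c\,\mathcal{J}_{P,Q}(h,h)$; equivalently $c\,\mathcal{J}_{P,Q}-\mathcal{J}_{P',Q}$ is a nonnegative quadratic form on $\vec{\Delta}_w$, i.e.\ positive semidefinite, which is exactly the claimed ordering $\mathcal{J}_{P',Q}\preceq c\,\mathcal{J}_{P,Q}$ (and, passing to a fixed orthonormal basis of $\vec{\Delta}_w$, the matrix inequality $J_{P',Q}\preceq c\,J_{P,Q}$). There is essentially no obstacle here; the only points requiring care are (i) recognizing that for $h\in\vec{\Delta}_w$ the score is centered, so the Fisher form takes the clean shape $\sum_y(\cdot)^2/Q_{(\cdot)}(y)$ that licenses a term-by-term bound, and (ii) using $P,P'\in\Delta_w^\circ$ to guarantee the denominators are strictly positive on $\supp(Q)$.
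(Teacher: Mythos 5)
Your proof is correct and follows essentially the same route as the paper's: both rewrite $\mathcal{J}_{P,Q}(h,h)$ as the sum $\sum_{y}\bigl(\sum_x h_x Q(y|x)\bigr)^2/Q_P(y)$ and then bound it term by term using the pointwise inequality $Q_P(y)\le \bigl(\max_x P_x/P'_x\bigr)Q_{P'}(y)$. The only difference is that you spell out the zero-mean property of the score justifying the second-moment form, which the paper uses implicitly.
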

\begin{proof}
Let $r=\max_{x \in [w]}\frac{P_x}{P'_x}$. We need to show that: for any $h \in \vec{\Delta}_w$, we have
\begin{align}
    \mathcal{J}_{P',Q}(h,h) \leq r \mathcal{J}_{P,Q}(h,h).
\end{align}
Observe that
\begin{align}
    {Q_{P'}(y)} &= \sum_{x \in [w]}P'_xQ(y|x) \\ &\geq \sum_{x \in [w]}({P_x}/{r})Q(y|x) = {Q_P(y)}/{r}.
\end{align}
Thus, we obtain
    \begin{align}
        \mathcal{J}_{P',Q}(h,h) &= \mathrm{Var}_{Y \sim Q_{P'}}(\eta_{P',Q;h}(Y)) \\
        &= \sum_{y \in \supp(Q)} \frac{\left(\sum_{x \in [w]} h_x Q(y|x) \right)^2}{Q_{P'}(y)}\\
        &\leq r\sum_{y \in \supp(Q)} \frac{\left(\sum_{x \in [w]} h_x Q(y|x) \right)^2}{Q_{P}(y)}\\
        &= r \mathcal{J}_{P,Q}(h,h).
    \end{align}
\end{proof}

\subsection{Derivation of (\ref{eq:FIExpectationUnifBound}) in Theorem~\ref{thm:unifAsympCRLB}}\label{appsubsec:proofFIExpectationUnifBound}
First, observe that
\begin{align}
    \frac{P_x}{\tilde{P}_x^{(\theta)}}=\frac{P_x}{P_x + \sum_{i=1}^{w-1}\theta_i h^{(i)}_x} = \frac{1}{1 + \sum_{i=1}^{w-1}\frac{\theta_i h^{(i)}_x}{P_x}}.
\end{align}
Since $\Vert h^{(i)}\Vert=1$, we have $\left|\sum_{i=1}^{w-1}\frac{\theta_i h^{(i)}_x}{P_x}\right| \leq \frac{(w-1) \max_i |\theta_i|}{P_x}$. Also, note that $\frac{1}{1-z}\leq 1+z$ for $z \in (0,1)$. Thus, letting
\begin{align}
    N&=\max\left(\frac{(w-1)^2}{D^4}, \left(\frac{(w-1)}{\min_x P_x}\right)^4 \right)+1,\\
    C&=\frac{(w-1)}{\min_x P_x},
\end{align}
we derive that whenever $n \geq N$ and $\Theta \sim \phi_n$, we have $\frac{(w-1) \max_i |\Theta_i|}{P_x} \leq C n^{-1/4} < 1$ almost surely, and thus
\begin{align}
    \max_{x}\frac{P_x}{\tilde{P}_x^{(\Theta)}} \leq \frac{1}{1-C n^{-1/4}}\leq 1+Cn^{-1/4}.
\end{align}
Combining this with Proposition~\ref{prop:FIBwTwoDistIneq}, we derive \eqref{eq:FIExpectationUnifBound}.

\subsection{Derivation of (\ref{eq:changeLimitAndInfCompact}) in Theorem~\ref{thm:unifAsympCRLB}}\label{appsubsec:proofPropChangeLimitAndInfCompact}
We first remark that $J_{P,Q}$ is continuous in $Q$ but it can be singular, and hence $\trInv\left(J_{P, Q}\right)$ is extended real-valued continuous function on $Q$.
To address this issue, we note that there is an order-preserving homeomorphism $\sigma:\mathbb{R} \cup \{-\infty,\infty\} \rightarrow [0,1]$, e.g., a logistic function $\sigma(x)=\frac{1}{1+e^{-x}}$.
Then, by Dini's theorem, $\sigma(\overline{R}_n(P,Q)) \uparrow \sigma \left(\trInv\left(J_{P, Q}\right)\right)$ \emph{uniformly} as $n \rightarrow \infty$, and thus
\begin{align}
    \lim_{n \rightarrow \infty} \inf_{Q \in \mathcal{Q}} \sigma(\overline{R}_n(P,Q)) = \inf_{Q \in \mathcal{Q}}\sigma \left(\trInv\left(J_{P, Q}\right)\right).
\end{align}
By taking $\sigma^{-1}$ on both sides, we obtain \eqref{eq:changeLimitAndInfCompact}.

\subsection{Proof of Theorem~\ref{thm:ULDPExtremal}}\label{app:proofULDPExtremal}
Let an $(v,\epsilon)$-ULDP mechanism $\tilde{Q}:\mathcal{X}\rightarrow \mathcal{P}(\tilde{\mathcal{Y}})$ be given, whose sets of protected and invertible data are $\tilde{\mathcal{Y}}_{\mathrm{P}}$ and $\tilde{\mathcal{Y}}_{\mathrm{I}}$, respectively. Also, let $\mathcal{Y}=\mathcal{Y}_{\mathrm{P}} \sqcup \mathcal{Y}_{\mathrm{I}}$, $\mathcal{Y}_{\mathrm{P}} = 2^{[v]} \backslash \{\emptyset\}$, and $\mathcal{Y}_{\mathrm{I}} = \{\{x\}: x \in [v+1:w]\}$.
We need to find $Q \in \mathcal{Q}^{\mathrm{E}}_{w,v,\epsilon}$ and $T:\mathcal{Y} \rightarrow \mathcal{P}(\tilde{\mathcal{Y}})$ such that
\begin{align}
    \tilde{Q}(\tilde{y}|x)=\sum_{y \in \mathcal{Y}}Q(y|x)T(\tilde{y}|y), \quad \forall x\in\mathcal{X},\tilde{y} \in \tilde{\mathcal{Y}}. \label{eq:degradeDef}
\end{align}
Suppose that $Q \in \mathcal{Q}_{w,v,\epsilon}^{\mathrm{E}}$ is induced from $\gamma:\mathcal{Y}_{\mathrm{P}} \rightarrow \mathbb{R}_{\geq 0}$ as in Definition~\ref{def:ULDPExtremal}. Then, \eqref{eq:degradeDef} is equivalent to that the following holds for every $\tilde{y} \in \tilde{\mathcal{Y}}$:
\begin{align}
    \tilde{Q}_{\tilde{y}} = &\sum_{y \in \mathcal{Y}_{\mathrm{P}}}T(\tilde{y}|y)\gamma(y) s^{(y)} \nonumber\\
    &+ \sum_{x \in \mathcal{X}_{\mathrm{N}}}T(\tilde{y}|\{x\})\left(1-\sum_{y' \in \mathcal{Y}_{\mathrm{P}}}\gamma(y')\right)\delta^{(x)},\label{eq:degradeCond}
\end{align}
where $\tilde{Q}_{\tilde{y}}:=(\tilde{Q}(\tilde{y}|1),\cdots,(\tilde{y}|w))$, $\delta^{(x)}:=\delta^{(x;w)}$, and for each $z \subset \mathcal{X}$, $s^{(z)} \in \{1,e^\epsilon\}^w$ is defined by
\begin{align}
    s_x^{(z)}=\begin{cases}
        e^\epsilon & (\text{if }x \in z)\\
        1 & (\text{if }x \notin z)
    \end{cases}.
\end{align}
Therefore, it suffices to find $\gamma$ and $T$ which satisfy \eqref{eq:degradeCond}.

By \cite[Lemma 12]{ye2018optimal} and the definition of protected data, for each $\tilde{y} \in \tilde{\mathcal{Y}}_{\mathrm{P}}$, $\tilde{Q}_{\tilde{y}}$ is in a conic hull of $\{s^{(z)}:z \subset \mathcal{X}\}$. Moreover, for each $z \subset \mathcal{X}$, $s^{(z)}$ is in a conic hull of $\mathfrak{S}:=\{s^{(y)}:y \in \mathcal{Y}_{\mathrm{P}}\}\cup\{\delta^{(x)}:x \in \mathcal{X}_{\mathrm{N}}\}$, as follows:
\begin{itemize}
    \item If $z \cap \mathcal{X}_{\mathrm{S}} \neq \emptyset$, then
    \begin{align}
        s^{(z)}=s^{(z \cap \mathcal{X}_{\mathrm{S}})} + (e^\epsilon-1)\sum_{x \in \mathcal{X}_{\mathrm{N}} \cap z} \delta^{(x)}.
    \end{align}
    \item If $z \cap \mathcal{X}_{\mathrm{S}} = \emptyset$, then
    \begin{align}
        s^{(z)}=e^{-\epsilon} s^{(\mathcal{X}_{\mathrm{S}})} &+ (e^\epsilon-e^{-\epsilon})\sum_{x \in \mathcal{X}_{\mathrm{N}} \cap z} \delta^{(x)} \nonumber\\
        &+ (1-e^{-\epsilon})\sum_{x \in \mathcal{X}_{\mathrm{N}} \backslash z} \delta^{(x)}.
    \end{align}
\end{itemize}
Thus, for each $\tilde{y} \in \tilde{\mathcal{Y}}_P$, $\tilde{Q}_{\tilde{y}}$ is in a conic hull of $\mathfrak{S}$. Moreover, for each $\tilde{y} \in \tilde{\mathcal{Y}}_{\mathrm{I}}$, $\tilde{Q}_{\tilde{y}}$ is a constant multiple of $\delta^{(x)}$ for some $x \in \mathcal{X}_{\mathrm{N}}$, which is also in a conic hull of $\mathfrak{S}$. Thus, for each $\tilde{y} \in \tilde{\mathcal{Y}}$, we can write
\begin{align}
    \tilde{Q}_{\tilde{y}} = \sum_{y \in \mathcal{Y}_{\mathrm{P}}} \gamma(y,\tilde{y}) s^{(y)} + \sum_{x \in \mathcal{X}_{\mathrm{N}}} u(x,\tilde{y}) \delta^{(x)} \label{eq:tildeQConicComb}
\end{align}
for some $\gamma(y,\tilde{y}), u(x,\tilde{y}) \geq 0$.
Using this, we can observe that \eqref{eq:degradeCond} is achieved by the following $\gamma$ and $T$:
\begin{align}
    \gamma(y)&=\sum_{\tilde{y} \in \tilde{\mathcal{Y}}} \gamma(y,\tilde{y}), \\
    T(\tilde{y}|y)&=
    \begin{cases}
        \gamma(y,\tilde{y})/\gamma(y) & (\text{if }y \in \mathcal{Y}_{\mathrm{P}})\\
        \frac{u(x,\tilde{y})}{1-\sum_{y' \in \mathcal{Y}_{\mathrm{P}}}\gamma(y')} & (\text{if }y \in \mathcal{Y}_{\mathrm{I}}, y=\{x\})
    \end{cases}.
\end{align}
Here, if $y \in \mathcal{Y}_{\mathrm{P}}$ satisfies $\gamma(y)=0$, then we set $T(\cdot|y)$ to be an arbitrary PMF.

From \eqref{eq:tildeQConicComb} and $\sum_{\tilde{y} \in \tilde{\mathcal{Y}}} \tilde{Q}_{\tilde{y}}=\mathbf{1}_w$, it is straightforward to see that above $\gamma$ satisfies \eqref{eq:ULDPGammaCond} to induce a valid extremal $(v,\epsilon)$-ULDP mechanism, and above $T$ is a valid conditional distribution satisfying $\sum_{\tilde{y} \in \tilde{\mathcal{Y}}}T(\tilde{y}|y)=1$. This finishes the proof.

\subsection{Derivation of (\ref{eq:CRLBDecompIntoSubspaces}) in Proposition~\ref{prop:ConvPart}}\label{app:proofCRLBDecompIntoSubspaces}
The block matrix inversion formula \cite{boyd_convex_2004} implies that for a positive definite matrix $A$ with $A=\begin{bmatrix}
    B & C^T \\ C & D
\end{bmatrix}$, where both $B$ and $D$ are square submatrices, $A^{-1}$ has the following block form
\begin{align}
    A^{-1} = \begin{bmatrix}
        (B-C^T D^{-1}C)^{-1} & \tilde{C}^T\\
        \tilde{C}^T & (D-CB^{-1}C^T)^{-1}
    \end{bmatrix},
\end{align}
where both $B-C^T D^{-1}C$ and $D-CB^{-1}C^T$ are positive definite by the Schur complement lemma \cite{boyd_convex_2004}.
This result, together with $A \preceq B \Rightarrow A^{-1} \succeq B^{-1}$ \cite{bikchentaevTraceInequalitiesMatrices2024}, we obtain
\begin{align}
    \trInv(A) \geq \trInv(B)+\trInv(D).\label{eq:invTrIneq}
\end{align}
\eqref{eq:invTrIneq} also holds when $A$ is non-invertible positive semidefinite, by the definition that $\trInv(A)=\infty$. Now, \eqref{eq:CRLBDecompIntoSubspaces} follows directly from applying \eqref{eq:invTrIneq} twice.

\subsection{Derivation of (\ref{eq:TraceFIRestPropToJi}) in Proposition~\ref{prop:ConvPart}}\label{app:proofConvPart}
For each $i=1,2,3$, we have 
\begin{align}
    \tr(B_i)=\sum_{j=1}^{d_i}\mathcal{J}_{P^{(\alpha),Q}}(h^{(j;i)},h^{(j;i)})
\end{align}
for an orthonormal basis $\{h^{(j;i)}\}_{j=1}^{d_i}$ of $\mathcal{H}_i$. By definition of Fisher information and the standard fact \cite{polyanskiyInformationTheoryCoding2024} that the score has zero mean, we have
\begin{align}
    \tr(B_i) &= \sum_{j=1}^{d_i}\Var_{Y \sim Q_{P^{(\alpha)}}}\left[\langle \eta_{P^{(\alpha)},Q}(Y),h^{(j;i)} \rangle\right]\\
    &= \sum_{j=1}^{d_i}\mathbb{E}_{Y \sim Q_{P^{(\alpha)}}}\left[\langle \eta_{P^{(\alpha)},Q}(Y),h^{(j;i)} \rangle^2\right]\\
    &= \mathbb{E}_{Y \sim Q_{P^{(\alpha)}}}\left\Vert\Pi_i\left(\eta_{P^{(\alpha)},Q}(Y)\right) \right\Vert^2. \label{eq:trFIandNormProj}
\end{align}
Now, suppose that $Q$ is induced from $\gamma:\mathcal{Y}_{\mathrm{P}} \rightarrow \mathbb{R}_{\geq 0}$ as in Definition~\ref{def:ULDPExtremal}. Also, for simplicity, let $t:=t(Q)$. It is straightforward to see that for each $k \in [v]$, we have
\begin{align}
    t_k = \sum_{y \in \mathcal{Y}_{\mathrm{P}}^{(k)}} \frac{ke^\epsilon+v-k}{v}\gamma(y).
\end{align}
Also, we have
\begin{align}
    Q_{P^{(\alpha)}}(\mathcal{Y}_{\mathrm{P}}^{(k)}) &= \sum_{y \in \mathcal{Y}_{\mathrm{P}}^{(k)}} \frac{\alpha k(e^\epsilon-1)+v}{v} \gamma(y)\\
    &= \frac{\alpha k(e^\epsilon-1)+v}{ke^\epsilon+v-k} t_k
\end{align}
and
\begin{align}
    Q_{P^{(\alpha)}}(\mathcal{Y}_{\mathrm{I}}) 
    &= 1-\sum_{k=1}^{v}  Q_{P^{(\alpha)}}(\mathcal{Y}_{\mathrm{P}}^{(k)})\\
    &= \sum_{k=1}^{v}\frac{(1-\alpha)(e^\epsilon-1)k}{ke^\epsilon+v-k}t_k.
\end{align}
Furthermore, referring Table~\ref{tab:compScoreFuncAndProj}, we obtain that
\begin{itemize}
    \item If $y \in \mathcal{Y}_{\mathrm{P}}^{(k)}$, then $\left\Vert\Pi_i\left(\eta_{P^{(\alpha)},Q}(y)\right) \right\Vert^2=A_{i,k}$, where
\end{itemize}
\begin{multline}
    (A_{1,k},A_{2,k},A_{3,k}) =\\
    \left(\frac{vk(v-k)(e^\epsilon-1)^2}{(\alpha k(e^\epsilon-1)+v)^2}, 0, \frac{k^2(e^\epsilon-1)^2 v(w-v)}{w(\alpha k(e^\epsilon-1)+v)^2}\right).\label{eq:normScoreProtected}
\end{multline}
\begin{itemize}
    \item If $y \in \mathcal{Y}_{\mathrm{I}}$, then $\left\Vert\Pi_i\left(\eta_{P^{(\alpha)},Q}(y)\right) \right\Vert^2=A_i'$, where
\end{itemize}
\begin{multline}
    (A_1',A_2',A_3') =\\
    \left(0,\frac{(w-v)(w-v-1)}{(1-\alpha)^2}, \frac{v(w-v)}{w(1-\alpha)^2}\right).\label{eq:normScoreInvertible}
\end{multline}

Thus, we have
\begin{align}
    \tr(B_i) = \left(\sum_{k=1}^{v}Q_{P^{(\alpha)}}(\mathcal{Y}_{\mathrm{P}}^{(k)}) A_{i,k}\right) + Q_{P^{(\alpha)}}(\mathcal{Y}_{\mathrm{I}})A_i'.
\end{align}
The desired result \eqref{eq:TraceFIRestPropToJi} follows from the direct calculation.

\section{Detailed Proof for Section~\ref{sec:achievabilityuBD}}
\subsection{Proof of Proposition~\ref{prop:scoreEstLinear}}\label{app:proofScoreEstLinear}
We show \eqref{eq:scoreEstLinear} separately for each of $i=1,2,3$.

\textbf{1) $i=1$:}
For each $x \in \mathcal{X}_{\mathrm{S}}$ and $k \in [v]$, let
\begin{align}
    \mathcal{Y}_{\mathrm{P}}^{(k,x)}&:=\{y \in \mathcal{Y}_{\mathrm{P}}^{(k)}:x \in y\}, \\
    \mathcal{Y}_{\mathrm{P}}^{(k,\neg x)}&:=\mathcal{Y}_{\mathrm{P}}^{(k)}\backslash\mathcal{Y}_{\mathrm{P}}^{(k,x)}.
\end{align}
We have
\begin{align}
    (\Pi_1(P-P^{(\alpha)}))_x &= \begin{cases}
        P_x-\frac{1}{v}P(\mathcal{X}_{\mathrm{S}}) & (\text{if }x \in \mathcal{X}_{\mathrm{S}})\\
        0 & (\text{if }x \in \mathcal{X}_{\mathrm{N}})
    \end{cases}.
\end{align}
Hence, from Table~\ref{tab:compScoreFuncAndProj}, we deduce that showing \eqref{eq:scoreEstLinear} is equivalent to showing the equality below for all $x \in \mathcal{X}_{\mathrm{S}}$:
\begin{align}
    &\sum_{k=1}^{v} \frac{(e^\epsilon-1)\left((v-k)Q_P(\mathcal{Y}_{\mathrm{P}}^{(k,x)})-kQ_P(\mathcal{Y}_{\mathrm{P}}^{(k,\neg x)})\right)}{\alpha k(e^\epsilon-1)+v}\nonumber \\
    &= \left(P_x-\frac{1}{v}P(\mathcal{X}_{\mathrm{S}})\right)\frac{v-1}{M_1(\alpha,t)}.\label{eq:scoreEstLinearSubspace1}
\end{align}

Using symmetries of block designs and Lemma~\ref{bdlemma}, we can see that for every $x,x' \in \mathcal{X}_{\mathrm{S}}$ and $x'' \in \mathcal{X}_{\mathrm{N}}$ such that $x \neq x'$, we have
\begin{align}
        Q(\mathcal{Y}_{\mathrm{P}}^{(k,x)}|x)&=t_k \frac{r_k e^\epsilon}{r_k(e^\epsilon-1)+b_k}=t_k\frac{ke^\epsilon}{ke^\epsilon+v-k},\\
        Q(\mathcal{Y}_{\mathrm{P}}^{(k,x)}|x')&=t_k\frac{\lambda_k e^\epsilon + r_k-\lambda_k}{r_k(e^\epsilon-1)+b_k}\\
        &=t_k\frac{k((k-1)e^\epsilon+v-k)}{(v-1)(ke^\epsilon+v-k)},\\
        Q(\mathcal{Y}_{\mathrm{P}}^{(k,x)}|x'') &=t_k \frac{r_k}{r_k(e^\epsilon-1)+b_k}=t_k\frac{k}{ke^\epsilon+v-k},
\end{align}
and
\begin{align}
    Q(\mathcal{Y}_{\mathrm{P}}^{(k,\neg x)}|x) &= t_k-Q(\mathcal{Y}_{\mathrm{P}}^{(k,x)}|x),\\
    Q(\mathcal{Y}_{\mathrm{P}}^{(k,\neg x)}|x') &= t_k-Q(\mathcal{Y}_{\mathrm{P}}^{(k,x)}|x'),\\
    Q(\mathcal{Y}_{\mathrm{P}}^{(k,\neg x)}|x'') &= t_k \frac{b_k-r_k}{r_k(e^\epsilon-1)+b_k}\\
    &=t_k\frac{v-k}{ke^\epsilon+v-k}.
\end{align}
By direct calculation, we can see that for every $x \in \mathcal{X}_{\mathrm{S}}$ and $x' \in \mathcal{X}$, \eqref{eq:scoreEstLinearSubspace1} holds for $P=\delta^{(x';v)}$. Since both sides of \eqref{eq:scoreEstLinearSubspace1} are linear in $P$, we conclude that \eqref{eq:scoreEstLinearSubspace1} holds for every $x \in \mathcal{X}_{\mathrm{S}}$ and $P \in \mathcal{P}(\mathcal{X})$.

\textbf{2) $i=2$:}
We have
\begin{align}
    (\Pi_2(P-P^{(\alpha)}))_x &= \begin{cases}
        0 & (\text{if }x \in \mathcal{X}_{\mathrm{S}})\\
        P_x-\frac{P(\mathcal{X}_{\mathrm{N}})}{w-v} & (\text{if }x \in \mathcal{X}_{\mathrm{N}})
    \end{cases}.
\end{align}
Similarly to the $i=1$ case, we deduce that showing \eqref{eq:scoreEstLinear} is equivalent to showing the equality below for all $x \in \mathcal{X}_{\mathrm{N}}$:
\begin{align}
    &\frac{(w-v-1)Q_P(\{x\})-Q_P(\mathcal{Y}_{\mathrm{I}}\backslash\{\{x\}\})}{1-\alpha}\nonumber\\
    &=\left(P_x-\frac{1}{w-v}P(\mathcal{X}_{\mathrm{N}})\right)\frac{w-v-1}{M_2(\alpha,t)}.\label{eq:scoreEstLinearSubspace2}
\end{align}
We can see that for each $x \in \mathcal{X}_{\mathrm{N}}$,
\begin{align}
    Q_P(\{x\}) &= P_x \left(1-\sum_{k=1}^{v}t_k\frac{b_k}{r_k(e^\epsilon-1)+b_k}\right)\\
    &=P_x \sum_{k=1}^{v}t_k\frac{k(e^\epsilon-1)}{ke^\epsilon+v-k}\label{eq:probInvElem}\\
    &=P_x \cdot \frac{(1-\alpha)(w-v-1)}{(w-v)M_2(\alpha,t)}.
\end{align}
This also implies
\begin{multline}
    Q_P(\mathcal{Y}_{\mathrm{I}}\backslash\{\{x\}\})=\\
    (P(\mathcal{X}_{\mathrm{N}})-P_x)\cdot \frac{(1-\alpha)(w-v-1)}{(w-v)M_2(\alpha,t)}.
\end{multline}
Thus, \eqref{eq:scoreEstLinearSubspace2} clearly holds for every $x \in \mathcal{X}_{\mathrm{N}}$.

\textbf{3) $i=3$:}
We have
\begin{align}
    (\Pi_3(P-P^{(\alpha)}))_x &= \begin{cases}
        \frac{P(\mathcal{X}_{\mathrm{S}})-\alpha}{v} & (\text{if }x \in \mathcal{X}_{\mathrm{S}})\\
        -\frac{P(\mathcal{X}_{\mathrm{S}})-\alpha}{w-v} & (\text{if }x \in \mathcal{X}_{\mathrm{N}})
    \end{cases}.
\end{align}
Similarly, we deduce that showing \eqref{eq:scoreEstLinear} is equivalent to showing the equality below:
\begin{align}
    &\left(\sum_{k=1}^{v}\frac{k(e^\epsilon-1)Q_P(\mathcal{Y}_{\mathrm{P}}^{(k)})}{w(\alpha k(e^\epsilon-1)+v)}\right)-\frac{Q_P(\mathcal{Y}_{\mathrm{I}})}{w(1-\alpha)} \nonumber\\
    &=\frac{P(\mathcal{X}_{\mathrm{S}})-\alpha}{v(w-v)M_3(\alpha,t)}. \label{eq:scoreEstLinearSubspace3}
\end{align}
From \eqref{eq:probInvElem}, we have
\begin{align}
    Q_P(\mathcal{Y}_{\mathrm{I}})=(1-P(\mathcal{X}_{\mathrm{S}})) \sum_{k=1}^{v}t_k\frac{k(e^\epsilon-1)}{ke^\epsilon+v-k}.
\end{align}
Also, we have
\begin{align}
    Q_P(\mathcal{Y}_{\mathrm{P}}^{(k)})&=t_k \left(P(\mathcal{X}_{\mathrm{S}})+\frac{b_k}{r_k(e^\epsilon-1)+b_k}P(\mathcal{X}_{\mathrm{N}})\right)\\
    &=t_k \cdot\frac{k(e^\epsilon-1)P(\mathcal{X}_{\mathrm{S}})+v}{ke^\epsilon+v-k}.
\end{align}
Then \eqref{eq:scoreEstLinearSubspace3} follows from the direct calculation.

\subsection{Derivation of (\ref{eq:uBDworstInPAlpha}) in Proposition~\ref{prop:uBDEstError}}\label{appsubsec:uBDworstInPAlpha}
By continuity, it suffices to show \eqref{eq:uBDworstInPAlpha} for $\alpha \in (0,1)$. Hence, we assume $\alpha \in (0,1)$ from now on. 

Let $X\sim P$ and $Y\sim Q(\cdot|X)$. Since $\hat{P}$ is unbiased, we have
\begin{align}
    R_1(Q,\hat{P};P) &= \VAR(\hat{P}_1(Y))\\
    &=\VAR(\hat{P}_1(Y)-P^{(\alpha)})\\
    &=\mathbb{E}\Vert \hat{P}_1(Y)-P^{(\alpha)} \Vert^2-\Vert P-P^{(\alpha)} \Vert^2. \label{eq:uBDMSEDecomp1}
\end{align}
By the definition of the proposed estimator and the law of total expectation, we have
\begin{align}
    &\mathbb{E}\Vert \hat{P}_1(Y)-P^{(\alpha)} \Vert^2 \\
    =& \sum_{i=1}^{3} \frac{(M_i(\alpha,t))^2}{d_i^2} {\mathbb{E}\Vert \Pi_i(\eta_{P^{(\alpha)},Q}(Y))\Vert^2} \label{eq:uBDMSEDecomp2}\\
    =& \sum_{i=1}^{3} {\sum_{x \in \mathcal{X}} \frac{(M_i(\alpha,t))^2}{d_i^2} P_x \mathbb{E}[\Vert \Pi_i(\eta_{P^{(\alpha)},Q}(Y))\Vert^2|X=x]}.
\end{align}
As we presented in \eqref{eq:normScoreProtected} and \eqref{eq:normScoreInvertible}, $\Vert \Pi_i(\eta_{P^{(\alpha)},Q}(y))\Vert^2$ depends on $y$ only by (i) whether $y \in \mathcal{Y}_{\mathrm{P}}$ or $\mathcal{Y}_{\mathrm{I}}$, and (ii) the value of $|y|$. 
By the regularity of block designs, the conditional distributions of $\Vert \Pi_i(\eta_{P^{(\alpha)},Q}(Y))\Vert^2$ given $X=x$ are the same for all $x \in \mathcal{X}_{\mathrm{S}}$. Also, by the definition of extremal ULDP mechanisms, the conditional distributions of $\Vert \Pi_i(\eta_{P^{(\alpha)},Q}(Y))\Vert^2$ given $X=x$ are the same for all $x \in \mathcal{X}_{\mathrm{N}}$. 
Thus, we can write
\begin{align}
    \mathbb{E}\Vert \hat{P}_1(Y)-P^{(\alpha)} \Vert^2 &= \nu_1 P(\mathcal{X}_{\mathrm{S}})+\nu_2 P(\mathcal{X}_{\mathrm{N}})\\
    &=\nu_1 \beta + \nu_2 (1-\beta),
\end{align}
where $\nu_1,\nu_2 \geq 0$ are given by
\begin{align}
    \nu_1 = \sum_{i=1}^{3} \frac{(M_i(\alpha,t))^2}{d_i^2}{\mathbb{E}[\Vert \Pi_i(\eta_{P^{(\alpha)},Q}(Y))\Vert^2|X=x]},\\
    \nu_2 = \sum_{i=1}^{3} \frac{(M_i(\alpha,t))^2}{d_i^2}{\mathbb{E}[\Vert \Pi_i(\eta_{P^{(\alpha)},Q}(Y))\Vert^2|X=x']},
\end{align}
for any of $x \in \mathcal{X}_{\mathrm{S}}$ and $x' \in \mathcal{X}_{\mathrm{N}}$.
Thus, $\mathbb{E}\Vert \hat{P}_1(Y)-P^{(\alpha)} \Vert^2$ is a constant among all $P \in \mathcal{P}(\mathcal{X})$ such that $P(\mathcal{X}_\mathrm{S})=\beta$.

Moreover, we have
\begin{align}
    &\Vert P-P^{(\alpha)} \Vert^2 \\
    = &\sum_{x \in \mathcal{X}_\mathrm{S}}\left(P_x-\frac{\alpha}{v}\right)^2 + \sum_{x \in \mathcal{X}_\mathrm{N}}\left(P_x-\frac{1-\alpha}{w-v}\right)^2\\
    = & \frac{\alpha^2}{v^2}+\frac{(1-\alpha)^2}{(w-v)^2}-\frac{2\alpha}{v}\beta-\frac{2(1-\alpha)}{w-v}(1-\beta)\nonumber\\
    &+\sum_{x \in \mathcal{X}_\mathrm{S}}P_x^2+\sum_{x \in \mathcal{X}_{\mathrm{N}}}P_x^2.
\end{align}
By the Cauchy-Schwarz inequality, we have
\begin{align}
    \sum_{x \in \mathcal{X}_\mathrm{S}}P_x^2 \geq \frac{\beta^2}{v^2}, \quad \sum_{x \in \mathcal{X}_\mathrm{N}}P_x^2 \geq \frac{(1-\beta)^2}{(w-v)^2},
\end{align}
and the equality holds if $P=P^{(\beta)}$. 
From this, we conclude that among all $P \in \mathcal{P}(\mathcal{X})$ such that $P(\mathcal{X}_\mathrm{S})=\beta$, $R_1(Q,\hat{P};P)$ is maximized at $P=P^{(\beta)}$, which shows \eqref{eq:uBDworstInPAlpha}.

\subsection{Derivation of (\ref{eq:uBDMSEAtPAlpha}) in Proposition~\ref{prop:uBDEstError}}\label{appsubsec:uBDMSEAtPAlpha}
Note that except for the case of $v \geq 2$ and $t = \delta^{(v;v)}$, $M(\alpha,t)$ is continuously differentiable with respect to $\alpha$ in a neighborhood of $[0,1]$. Thus, again by continuity, it suffices to show \eqref{eq:uBDMSEAtPAlpha} for $\alpha \in (0,1)$. Hence, we assume $\alpha \in (0,1)$ from now on. 

From \eqref{eq:uBDMSEDecomp1}, we have
\begin{align}
    &R_1(Q,\hat{P};P^{(\beta)})\\
    &=\mathbb{E}_{Y \sim Q_{P^{(\beta)}}}\Vert\hat{P}_1(Y)-P^{(\alpha)}\Vert^2 - \Vert P^{(\beta)}-P^{(\alpha)}\Vert^2.
\end{align}
It is straightforward to see that
\begin{align}
    \Vert P^{(\beta)}-P^{(\alpha)}\Vert^2 = \frac{w}{v(w-v)}(\beta-\alpha)^2.
\end{align}
Hence, it remains to show that
\begin{multline}
    \mathbb{E}_{Y \sim Q_{P^{(\beta)}}}\Vert\hat{P}_1(Y)-P^{(\alpha)}\Vert^2\\
    = (\beta-\alpha)F(\alpha,t)+M(\alpha,t).
\end{multline}
Since both sides of the above equation are affine in $\beta$, it suffices to show that
\begin{align}
    \mathbb{E}_{Y \sim Q_{P^{(\alpha)}}}\Vert\hat{P}_1(Y)-P^{(\alpha)}\Vert^2 &= M(\alpha,t),\label{eq:errorBetaConstTermM}\\
    \frac{\partial}{\partial \beta}  \mathbb{E}_{Y \sim Q_{P^{(\beta)}}}\Vert\hat{P}_1(Y)-P^{(\alpha)}\Vert^2 &= F(\alpha,t).\label{eq:ErrorBetaLinearF}
\end{align}
\paragraph{Proof of \eqref{eq:errorBetaConstTermM}}
It is easy to see that $t(Q)$ in \eqref{eq:TraceFIRestPropToJi} is equal to $t$. Hence, by \eqref{eq:uBDMSEDecomp2}, \eqref{eq:trFIandNormProj}, and \eqref{eq:TraceFIRestPropToJi}, we have
\begin{align}
    &\mathbb{E}_{Y \sim Q_{P^{(\alpha)}}}\Vert\hat{P}_1(Y)-P^{(\alpha)}\Vert^2 \\
    &= \sum_{i=1}^{3} \frac{(M_i(\alpha,t))^2}{d_i^2} {\mathbb{E}_{Y \sim Q_{P^{(\alpha)}}}\Vert \Pi_i(\eta_{P^{(\alpha)},Q}(Y))\Vert^2}\\
    &= \sum_{i=1}^{3}M_i(\alpha,t)=M(\alpha,t),
\end{align}
which shows \eqref{eq:errorBetaConstTermM}.

\paragraph{Proof of \eqref{eq:ErrorBetaLinearF}}
Each $Q_{P^{(\beta)}}(y)$ is affine in $\beta$, and hence we can write
\begin{align}
    Q_{P^{(\beta)}}(y) = f_1(y)\beta + f_2(y)
\end{align}
for $f_1,f_2:\supp(Q) \rightarrow \mathbb{R}$. Then,
\begin{multline}
    \frac{\partial}{\partial \beta}  \mathbb{E}_{Y \sim Q_{P^{(\beta)}}}\Vert\hat{P}_1(Y)-P^{(\alpha)}\Vert^2 \\
    = \sum_{y} f_1(y) \Vert \hat{P}_1(y)-\hat{P}(\alpha)\Vert^2. \label{eq:errorLinearCalc}
\end{multline}
Also, from \eqref{eq:errorBetaConstTermM}, we have
\begin{align}
    F(\alpha,t)=&\sum_{y} \frac{\partial}{\partial \alpha}\left(Q_{P^{(\alpha)}}(y) \Vert \hat{P}_1(y)-P^{(\alpha)} \Vert^2\right)\\
    =&\sum_{y} f_1(y)\Vert \hat{P}_1(y)-{P}^{(\alpha)}\Vert^2 \nonumber \\
    &+\sum_{y} Q_{P^{(\alpha)}}(y) \left\langle \hat{P}_1(y)-P^{(\alpha)}, \frac{\partial P^{(\alpha)}}{\partial \alpha} \right\rangle.
\end{align}
Since $\hat{P}$ is unbiased, $\sum_{y} Q_{P^{(\alpha)}}(y) \hat{P}_1(y)=P^{(\alpha)}$. Thus, combining with \eqref{eq:errorLinearCalc}, we obtain \eqref{eq:ErrorBetaLinearF}. 

This ends the derivation of \eqref{eq:uBDMSEAtPAlpha}.

\section{Proof of Theorem~\ref{thm:closedFormExpPreciseForm}}\label{app:proofClosedFormExpPreciseForm}
Let $(\alpha^*,t^*)$ be as specified in Theorem~\ref{thm:closedFormExpPreciseForm}, with $t^*=\delta^{(k^*;v)}$. 
Here, $k^*=1$ for Case (a), and $k^* \in K^*(v,\epsilon) \cap [2:v-1]$ for Case (b).
We separately show two equalities,
\begin{align}
    M(\alpha^*,t^*)=\inf\limits_{t \in \Delta_v} M(\alpha^*,t), \label{eq:ClosedFormOptOverT}\\
    M(\alpha^*,t^*)=\sup\limits_{\alpha \in [0,1]} M(\alpha,t^*).\label{eq:ClosedFormOptOverAlpha}
\end{align}

\subsection{Proof of (\ref{eq:ClosedFormOptOverT})}
\textbf{Case (a)-Condition (i):}
This case is trivial, since when $v=1$, $\Delta_v$ is a singleton.

\textbf{Case (a)-Condition (ii):}
In this case, we have
\begin{align}
    e^\epsilon \geq w-v+\sqrt{\frac{(w-1)(w-2)}{2}} &\geq 1+w-v,
\end{align}
and thus $\alpha^*=\frac{v(e^\epsilon-1-w+v)}{w(e^\epsilon-1)}$.

First, observe that by a calculation, we have
\begin{align}
    M_i(\alpha^*,t^*)=d_i \frac{(e^\epsilon+v-1)^2}{w(e^\epsilon-1)^2}, \quad \forall i=1,2,3.
\end{align}
Recall that $(d_1,d_2,d_3)=(v-1,w-v-1,1)$.
Motivated from this, we apply the arithmetic mean-harmonic mean inequality, obtaining the following.
\begin{align}
    &M(\alpha^*, t) \\
    =& (d_1+d_2+d_3)\sum_{i=1}^{3}\frac{d_i}{d_1+d_2+d_3} (M_i(\alpha^*,t)/d_i)\\
    \geq& \frac{(d_1+d_2+d_3)^2}{\sum_{i=1}^{3}d_i^2/ M_i(\alpha^*, t)}.
\end{align}
This becomes an equality when $t=t^*=\delta^{(1;v)}$. Also, we can see that $\sum_{i=1}^{3}d_i^2 /M_i(\alpha^*, t) = \sum_{k=1}^{v} g_k t_k$, where $g_k \geq 0$ are given by
\begin{align}
    g_k = &\frac{kv(v-k)(e^\epsilon-1)^2}{(\alpha^* k(e^\epsilon-1)+v)(ke^\epsilon+v-k)}\nonumber\\
    &+\frac{k(w-v)(w-v-1)(e^\epsilon-1)}{(1-\alpha^*)(ke^\epsilon+v-k)}\nonumber\\
    &+\frac{vk(w-v)(e^\epsilon-1)}{w(1-\alpha^*)(\alpha^* k(e^\epsilon-1)+v)}.
\end{align}
Now, we claim that
\begin{align}
    g_1 \geq g_k, \quad \forall k\in[2:v], \label{eq:uRROptCoeff1Max}
\end{align}
because this claim deduces \eqref{eq:ClosedFormOptOverT} as 
\begin{align}
    \inf_{t \in \Delta_v}M(\alpha^*, t)=M(\alpha^*, t^*)=\frac{(d_1+d_2+d_3)^2}{g_1}.
\end{align}
To show \eqref{eq:uRROptCoeff1Max}, we calculate $g_1-g_k$, and the result is
\begin{align}
    &g_1-g_k = \\
    &\frac{vk(k-1)w(e^\epsilon-1)^2}{\zeta^2 (ke^\epsilon+v-k)(k(e^\epsilon-1-w+v)+w)}\nonumber\\
    &\times \left(\zeta^2-2(w-1)\zeta+w(w-1)\frac{k-1}{k} \right),
\end{align}
where $\zeta:=e^\epsilon+v-1$. Note that $\frac{k-1}{k}$ increases in $k$. Hence, to show $g_1-g_k \geq 0$ for all $k \in [2:v]$, it suffices to show that
\begin{align}
    \zeta^2-2(w-1)\zeta+\frac{w(w-1)}{2} \geq 0. \label{eq:uRROptQuadPoly}
\end{align}
The LHS of the above equation is a quadratic polynomial in $\zeta$ with roots $w-1\pm\sqrt{\frac{(w-1)(w-2)}{2}}$. Hence, \eqref{eq:uRROptQuadPoly} holds for $\zeta \geq w-1+\sqrt{\frac{(w-1)(w-2)}{2}}$, that is, $e^\epsilon \geq w-v+\sqrt{\frac{(w-1)(w-2)}{2}}$. This shows \eqref{eq:uRROptCoeff1Max}, and hence ends the proof.

\textbf{Case (a)-Condition (iii):}
In this case, we have
\begin{align}
    e^\epsilon \leq 1+\sqrt{\frac{2(w-2)}{w-1}} \leq 1+w-v=w-1,
\end{align}
and thus $\alpha^*=0$. By a direct calculation, with $t_2=1-t_1$, we obtain
\begin{align}
    M(\alpha^*, t) &= f(t_1)\\
    :=\frac{e^\epsilon+1}{(e^\epsilon-1)^2}\frac{1}{t_1} &+ \frac{e^\epsilon(e^\epsilon+1)}{(w-2)(e^\epsilon-1)}\frac{w-3}{e^\epsilon+1-t_1}\nonumber\\
    &+ \frac{w}{(w-2)(e^\epsilon-1)}\frac{1}{2-t_1}.
\end{align}
Since $f(t_1)$ is convex in $t_1$, to show the optimality of $t^*=\delta^{(1;v)}$, it suffices to show that $f'(1) \leq 0$. A straightforward calculation shows that
\begin{align}
    f'(1) = \frac{w-1}{e^\epsilon (e^\epsilon-1)^2(w-2)}\left(e^{2\epsilon}-2e^\epsilon-\frac{w-3}{w-1} \right).
\end{align}
Here, $e^{2\epsilon}-2e^\epsilon-\frac{w-3}{w-1}$ is a quadratic polynomial in $e^\epsilon$ with roots $1 \pm \sqrt{\frac{2(w-2)}{w-1}}$. Thus, $f'(1) \leq 0$ for $e^\epsilon \leq 1+\sqrt{\frac{2(w-2)}{w-1}}$, which ends the proof.

\textbf{Case (b):}
Since $M_2(1,t)=M_3(1,t)=0$, we have
\begin{align}
    \inf_{t \in \Delta_v} M(\alpha^*,t) &= \inf_{t \in \Delta_v} M_1(1,t) \label{eq:ULDPtoLDPAlphaOne1}\\
    &= \inf_{t \in \Delta_v }\frac{(v-1)^2}{v(e^\epsilon-1)^2}\frac{1}{\sum_{k=1}^v t_k \frac{k(v-k)}{(ke^\epsilon+v-k)^2}}\label{eq:ULDPtoLDPAlphaOne2}\\
    &= \min_{k \in [v-1]} \frac{(v-1)^2}{v(e^\epsilon-1)^2} \frac{(ke^\epsilon+v-k)^2}{k(v-k)}\label{eq:ULDPtoLDPAlphaOne3}\\
    &= \min_{k \in [v-1]} R^\mathrm{BD}(v,k,\epsilon), \label{eq:ULDPtoLDPAlphaOne4}
\end{align}
where $R^\mathrm{BD}$ is in \eqref{eq:defRBD}, and optimal $k$ in \eqref{eq:ULDPtoLDPAlphaOne4} and optimal $t$ in \eqref{eq:ULDPtoLDPAlphaOne1} are related by $t=\delta^{(k;v)}$. Thus, by the definition of $K^*$, $t^* = \delta^{(k^* ;v)}$ is an optimal solution to \eqref{eq:ULDPtoLDPAlphaOne1}. 

\subsection{Proof of (\ref{eq:ClosedFormOptOverAlpha})}
Let $(Q,\hat{P})$ be a $(w,v,\epsilon,k^*)$-simple uBD scheme.
By Proposition~\ref{prop:simpleuBDEstInvarOverAlpha}, Equation~\eqref{eq:uBDMSEAtPAlpha} with $t=t^*=\delta^{(k^*;v)}$ holds for all $\alpha \in [0,1]$. Setting $\alpha=\beta$ in \eqref{eq:uBDMSEAtPAlpha} gives
\begin{align}
    R_1(Q,\hat{P};P^{(\beta)})=M(\beta,\delta^{(k^*;v)}), \quad \forall \beta \in [0,1].
\end{align}
Thus, it suffices to show that $\sup_{\beta \in [0,1]} R_1(Q,\hat{P};P^{(\beta)})$ is attained at $\beta=\alpha^*$.
For Case (a), it directly follows from the worst-case analysis of uRR \cite[Propositions 16, 17]{murakami2019utility}.
For Case (b), we claim that
\begin{align}
    F(1,\delta^{(k;v)}) \geq 0, \quad \forall k \in [2:v-1].\label{eq:simpleuBDk2FPositive}
\end{align}
Then, by setting $\alpha=1$ in \eqref{eq:uBDMSEAtPAlpha}, we obtain the desired result. 
To show \eqref{eq:simpleuBDk2FPositive}, we calculate $F(1,\delta^{(k;v)})$, and the result is
\begin{align}
    F(1,\delta^{(k;v)})=
    \frac{ke^\epsilon+v-k}{v(e^\epsilon-1)}\left(\frac{(v-1)^2}{v-k}-\frac{v+1}{k}\right).
\end{align}
Hence, it suffices to show that
\begin{align}
    (v-1)^2 k - (v+1)(v-k) \geq 0, \quad \forall k \in [2:v-1].
\end{align}
Since the LHS of the above inequality is increasing in $k$, it suffices to show only for $k=2$, that is,
\begin{align}
    2(v-1)^2-(v+1)(v-2) \geq 0.
\end{align}
We have
\begin{equation}
   2(v-1)^2-(v+1)(v-2) = v^2-3v+4.
\end{equation}
As a quadratic polynomial in $v$, above equation has the discriminant $3^2-4\cdot4<0$, and therefore it is non-negative for every $v$. Thus, we show \eqref{eq:simpleuBDk2FPositive}, and hence we end the proof of \eqref{eq:ClosedFormOptOverAlpha}.

\section{Proof of Corollary~\ref{cor:uSSStrictSuboptimal}}\label{app:compBwuSS}
Referring \cite[Equation (35)]{he2025addressing}, the estimation error of uSS is
\begin{align}
    &n \cdot R_n(Q_{\mathrm{uSS},k}, \hat{P}_{\mathrm{uSS},k};P) \\
    =& L_1(k)+P(\mathcal{X}_{\mathrm{S}})L_2(k)\nonumber\\
    &+P(\mathcal{X}_{\mathrm{N}})L_3(k)+1-\sum_{x \in \mathcal{X}}P_x^2,\label{eq:uSSMSE}
\end{align}
where
\begin{align}
    L_1(k)&=\frac{v(ke^\epsilon-e^\epsilon+v-k)(ke^\epsilon-k+v-1)}{k(v-k)(e^\epsilon-1)^2},\\
    L_2(k)&=\frac{k(1-k)(e^\epsilon-1)+(v-1)(v-2k)}{k(v-k)(e^\epsilon-1)},\\
    L_3(k)&=\frac{v}{k(e^\epsilon-1)}.
\end{align}
Especially, letting $P$ be the uniform distribution on $\mathcal{X}_{\mathrm{S}}$, we have
\begin{align}
    &n \cdot R_n(Q_{\mathrm{uSS},k}, \hat{P}_{\mathrm{uSS},k}) \\
    \geq &n \cdot  R_n(Q_{\mathrm{uSS},k}, \hat{P}_{\mathrm{uSS},k};P)\\
    =& L_1(k)+L_2(k)+1-\frac{1}{v},
\end{align}
and hence
\begin{align}
    R(Q_{\mathrm{uSS},k}, \hat{P}_{\mathrm{uSS},k}) \geq  L_1(k)+L_2(k)+1-\frac{1}{v}.
\end{align}
Recall from Remark~\ref{rmk:ULDPtoLDPTight} that $M^*(w,v,\epsilon)=M^*(v,v,\epsilon)=\min_{k \in [v-1]} R^{\mathrm{BD}}(v,k,\epsilon)$. 
By a straightforward calculation, we obtain
\begin{align}
    &\left(L_1(k)+L_2(k)+1-\frac{1}{v}\right) - R^{\mathrm{BD}}(v,k,\epsilon) \\
    &= \frac{2(k-1)}{v-k}.
\end{align}
Hence, we have
\begin{align}
    R(Q_{\mathrm{uSS},k}, \hat{P}_{\mathrm{uSS},k}) \geq R^{\mathrm{BD}}(v,k,\epsilon)+\frac{2(k-1)}{v-k}.
\end{align}
Using this, we argue that $R(Q_{\mathrm{uSS},k}, \hat{P}_{\mathrm{uSS},k})>M^*(w,v,\epsilon)$ for every $k \in [v-1]$, as follows.
\begin{enumerate}
    \item If $k=1$, then since $\epsilon<\ln\sqrt{\frac{(v-1)(v-2)}{2}}=E(v,1)$, we have $1 \notin K^*(v,\epsilon)$, and hence $R^{\mathrm{BD}}(v,k,\epsilon)>M^*(w,v,\epsilon)$. Thus, $R(Q_{\mathrm{uSS},k}, \hat{P}_{\mathrm{uSS},k})>M^*(w,v,\epsilon)$.
    \item If $k \geq 2$, then $R^{\mathrm{BD}}(v,k,\epsilon) \geq M^*(w,v,\epsilon)$ and $\frac{2(k-1)}{v-k}>0$. Thus, $R(Q_{\mathrm{uSS},k}, \hat{P}_{\mathrm{uSS},k})>M^*(w,v,\epsilon)$.
\end{enumerate}
This ends the proof of Corollary~\ref{cor:uSSStrictSuboptimal}.

\begin{remark}\label{rmk:freq2DistEst}
    Originally, \cite{he2025addressing} formulated the estimator error in terms of the MSE between the empirical frequency of $X^n$ and the estimated distribution. That is, letting $T^{(n)}:[w]^n \rightarrow \Delta_w$, $T^{(n)}_{x'}(x^n)=\frac{1}{n}\sum_{i=1}^{n} \mathbbm{1}(x_i=x')$, they formulated
\begin{multline}
    R_n'(Q,\hat{P};x^n):=\\
    \mathbb{E}_{Y^n \sim \prod_{i=1}^{n}Q(\cdot|x_i)}\Vert \hat{P}_n(Y^n)-T^{(n)}(x^n)\Vert^2.
\end{multline}
However, provided that the scheme is unbiased, the MSE for the frequency estimation can be translated into the MSE for the distribution estimation, by using the law of total variance. 
Specifically, for given $P \in \Delta_w$ and an \emph{unbiased} scheme $(Q,\hat{P})$, let $X^n \sim P^n$ and $Y^n\sim \prod_{i=1}^{n}Q(\cdot|X_i)$ given $X^n$. Then
\begin{align}
    &R_n(Q,\hat{P};P)=\VAR\left[\hat{P}_n(Y^n)\right]\\
    =&\mathbb{E}[\VAR[\hat{P}_n(Y^n)|X^n]]+\VAR[\mathbb{E}[\hat{P}_n(Y^n)|X^n]]\\
    =&\mathbb{E}[R_n'(Q,\hat{P};X^n)]+\VAR[T^{(n)}(X^n)].
\end{align}
Here, each of $T^{(n)}_{x}(X^n)$ is a binomial random variable with $n$ trials and probability of success $P_{x}$, divided by $n$. Hence
\begin{align}
    &\VAR[T^{(n)}(X^n)] \\
    = &\frac{1}{n}\sum_{x \in \mathcal{X}}(P_x-P_x^2)=\frac{1}{n}\left(1-\sum_{x \in \mathcal{X}} P_x^2\right).
\end{align}
In conclusion, we have
\begin{align}
    &R_n(Q,\hat{P};P) \\
    = &\mathbb{E}_{X^n \sim P^n}[R_n'(Q,\hat{P};X^n)] + \frac{1}{n}\left(1-\sum_{x \in \mathcal{X}} P_x^2\right).\label{eq:freq2DistTrans}
\end{align}
From this, we deduce \eqref{eq:uSSMSE} as follows.
In \cite[Equation (35)]{he2025addressing}, it is provided that
\begin{align}
    &n \cdot R_n'(Q_{\mathrm{uSS},k}, \hat{P}_{\mathrm{uSS},k};x^n)\\
    =&L_1(k)
    +\sum_{x' \in \mathcal{X}_{\mathrm{S}}}T^{(n)}_{x'}(x^n)  L_2(k)\nonumber\\
    &+\sum_{x' \in \mathcal{X}_{\mathrm{N}}}T^{(n)}_{x'}(x^n)L_3(k).
\end{align}
Under $X^n \sim P^n$, it is straightforward that $\mathbb{E}[T^{(n)}_{x'}(X^n)]=P_{x'}$ for every $x' \in \mathcal{X}$. Hence
\begin{align}
    &n \cdot \mathbb{E}_{X^n \sim P^n}[R_n'(Q_{\mathrm{uSS},k}, \hat{P}_{\mathrm{uSS},k};X^n)]\\
    =&L_1(k)+P(\mathcal{X}_{\mathrm{S}})L_2(k)+P(\mathcal{X}_{\mathrm{N}})L_3(k).
\end{align}
Applying \eqref{eq:freq2DistTrans}, we obtain \eqref{eq:uSSMSE}.

\end{remark}

\end{document}